\newcommand{\R}{\mathbb{R}}
\newcommand{\Q}{\mathbb{Q}}
\newcommand{\C}{\mathbb{C}}
\newcommand{\NN}{\mathbb{N}}
\newcommand{\N}{\mathscr{N}}
\newcommand{\F}{\mathscr{F}}
\newcommand{\G}{\mathscr{G}}
\newcommand{\FF}{\mathbb{F}}
\renewcommand{\P}{\mathbb{P}}
\newcommand{\B}{\mathscr{B}}
\newcommand{\TT}{\mathcal{T}}
\newcommand{\LL}{\mathcal{L}}
\newcommand{\J}{\mathcal{J}}
\newcommand{\E}{\mathbb{E}}
\newcommand{\A}{\mathcal{A}}
\newcommand{\ZZ}{\mathcal{Z}}
\newcommand{\h}{\text{H\"{o}lder }}
\newcommand{\bd}{\text{B\"auerle and Desmettre }}
\def\tr(#1){{\rm trace}(#1)}
\def\Exp(#1){{\mathbb E}(#1)}
\def\Exps(#1){{\mathbb E}\sparen(#1)}
\newcommand{\de}{\overset{\Delta}{=}}
\newcommand{\sep}{\, | \,}
\newcommand{\fa}{\, \forall \;}
\def\implies{\Rightarrow}
\def\inpr#1,#2{\t \hbox{\langle #1 , #2 \rangle} \t}
\def\ip<#1,#2>{\langle #1,#2 \rangle}
\def\paren(#1){\left( #1 \right)}
\def\sparen(#1){\Bigl ( #1 \Bigr )}
\newtheorem{theorem}{Theorem}[section]
\newtheorem{assumption}[theorem]{Assumption}
\newtheorem{proposition}[theorem]{Proposition}
\newtheorem{definition}[theorem]{Definition}
\newtheorem{remark}[theorem]{Remark}
\numberwithin{equation}{section}
\title{PORTFOLIO OPTIMISATION: \vskip 0.25cm
UNDER ROUGH HESTON MODELS}
\author{\Authornameonly\\{\bigskip}Supervisor: Associate Professor Spiridon Penev}
\begin{document}

\beforepreface



\prefacesection{Abstract}

This thesis investigates Merton's portfolio problem under two different rough Heston models, which have a non-Markovian structure. The motivation behind this choice of problem is due to the recent discovery and success of rough volatility processes. The optimisation problem is solved from two different approaches: firstly by considering an auxiliary random process, which solves the optimisation problem with the martingale optimality principle, and secondly, by a finite dimensional approximation of the volatility process which casts the problem into its classical stochastic control framework. In addition, we show how classical results from Merton's portfolio optimisation problem can be used to help motivate the construction of the solution in both cases. The optimal strategy under both approaches is then derived in a semi-closed form, and comparisons between the results made. The approaches discussed in this thesis, combined with the historical works on the distortion transformation, provide a strong foundation to build models capable of handling increasing complexity demanded by the ever growing financial market.

\afterpreface

%
%


\chapter{Introduction}\label{s-intro}

There has recently been growing interest in the study of rough volatility models (see \cite{diehl_et_al_2017_stochastic_control_rough_paths}, \cite{euch_rosenbaum_2019_characteric_function_rough_heston_models}, \cite{forde_zhang_2017_asymnptotics_rough_volatility}, \cite{fukasawa_et_al_2019_is_volatility_rough}, \cite{garnier_solna_2018_option_pricing_ou_rough}, \cite{guennoun_et_al_2018_asymptotic_behavior_fractional_heston} and \cite{Keller-Ressel_et_al_2018_affine_rough_models}) since the seminal paper by Gatheral et al.  (see \cite{gatheral_et_al_2018_volatility_is_rough}) titled ``Volatility is Rough'', which first appeared on the SSRN in 2014. The discovery that volatility exhibited rough behaviour was established from a regression analysis on high frequency futures contract data whereby Gatheral et al. deduced the following relationship
\begin{equation}\label{g-scaling}
\E[|\log(\sigma_\Delta)-\log(\sigma_0)|^q] \approx  \frac{1}{n}\sum_{t=1}^{n}|\log(\hat{\sigma}_{t+\Delta})-\log(\hat{\sigma}_{t})|^q \approx K_q \Delta^{\zeta_q}, 
\end{equation}
for differing values of $q$.
This relationship was achieved by using arguments of stationarity and the following result from Mandelbrot and Van Ness (see \cite{mandelbrot_ness_1968_fractional_bm})
\begin{equation}
 \E[|W_{t+\Delta}^H-W_t^H|^q]=K_q\Delta^{qH}, \quad 0\le t < \infty, \, \Delta \ge 0, \, q>0,
\end{equation}
 where $(W_t^H)_{0\le t < \infty}$ is a fractional Brownian motion (fBm) with Hurst parameter $H\in (0,1)$, and $K_q$ is the moment of order $q$ of the absolute value of a standard Gaussian variable. By then performing another regression analysis of $\zeta_q$ against $q$, they deduced the relationship $\zeta_q \approx Hq$, with $H\approx 0.1$. 

The parameter $H$ is referred to as the Hurst parameter which was originally introduced by Mandelbrot and Van Ness in 1968 (see \cite{mandelbrot_ness_1968_fractional_bm}). The Hurst parameter has the following relationship with the fBm which we recall for the readers convenience.


\begin{definition}[Fractional Brownian motion (fBm)]\cite{mandelbrot_ness_1968_fractional_bm} Let $H\in (0,1)$, $b_0 \in \R$, and $W$ a Brownian motion. Then the fBm $W^H$ with Hurst parameter $H$ is defined as the Weyl fractional integral of $W$ with $\lbrace W^H_t \sep t>0 \rbrace $ given by 
\begin{equation}
\begin{split}
W^H_t = W^H_0 + \frac{1}{\Gamma(H+1/2)}\bigg\lbrace \int_{-\infty}^{0}[(t-u)^{H-1/2}-(-u)^{H-1/2}]&dW_u \\
+ \int_{0}^{t}(t-u)^{H-1/2}&dW_u \bigg\rbrace.
\end{split}
\end{equation}
Then $W^H$ is a centred Gaussian process with $W^H_0 = b_0$ and covariance function
\begin{equation}\label{fb-cov}
\E[W_t^HW_s^H] = \frac{1}{2}\left(t^{2H}+s^{2H}-|t-s|^{2H}\right).
\end{equation}
\end{definition}
\noindent From (\ref{fb-cov}) we obtain the following relationships between $H$ and $W^H$. Let $0<s_1 < s_2 <t_1<t_2$, then 
$$\E [(W_{s_1}^H-W_{s_2}^H)(W_{t_1}^H-W_{t_2}^H)] < 0 \quad \text{for $H \in (0,1/2)$},$$
$$\E [(W_{s_1}^H-W_{s_2}^H)(W_{t_1}^H-W_{t_2}^H)] > 0 \quad \text{for $H \in (1/2,1)$}$$
(see \cite{shevchenko_2015_fbm_nutshell} for further details). Thus, for $H \in (0,1/2)$ the fBm has the property of negatively correlated non-overlapping increments; in other words the fBm has the property of counter-persistence and therefore fluctuates violently. In contrast, for $H \in (1/2,1)$ the fBm has the property of positively correlated non-overlapping increments, and thus has the property of persistence. In this instance the fBm has the property of long-memory (long-range dependence). Therefore, the Hurst parameter ($H$) can be seen as a measure of the smoothness characteristic in the underlying volatility process. 

Gatheral et al. discovery of the scaling relationship (\ref{g-scaling}) with the stylised fact that the increments of log-volatility is close to Gaussian, which is shown in \cite{andersen_et_al_2001a} and \cite{andersen_et_al_2001b}; then suggested the following stationary fractional Ornstein--Uhlenbeck process to model the volatility 
\begin{equation}
\log \sigma_t = \nu \int_{-\infty}^{t}e^{-\alpha(t-u)}dW_u^H +m,
\end{equation}
with $m \in \R$ the long term mean level, $\alpha >0$ the mean reversion, and $\nu>0$ the instantaneous volatility. From this model they then deduced spurious long memory of volatility by means of simulation. Additionally, the resulting Rough Fractional Stochastic Volatility (RFSV) model turned out to be formally almost identical to the Fractional Stochastic Volatility (FSV) model developed by Comte and Renault in 2005 (see \cite{comte_renault_1998_long_memory_stochastic_volatility_models}), with one major difference: in the FSV model $H>1/2$ to ensure long memory, whereas in the RFSV model $H<1/2$, with typical values of $H$ close to 0.1. The value of $H$ directly affects the mean-reversion parameter $\alpha$ and consequently in the FSV model $\alpha \gg 1/T$ to ensure a decreasing term structure of the at-the-money (ATM) volatility skew for longer expiration's. Whereas, the RFSV model has a mean-reversion parameter of $\alpha \ll 1/T$. The ATM volatility skew (intrinsic value $k=0$) at maturity $\tau$ is given by

\begin{equation}\label{e-volatility-skew}
\psi (\tau) \de \left\lvert \frac{\partial \sigma_{\text{BS}}(k,\tau)}{\partial k}\right\rvert_{k=0}
\end{equation}

\noindent and has been well approximated by Fukasawa in 2011 for small $\tau$ (see \cite{fukasawa_2011_asymptotics_volatility}) by the following relationship
\begin{equation}\label{ef_atm}
\psi (\tau) \sim \tau^{H-1/2}, \text{ when $\tau \downarrow 0$.}
\end{equation}

\noindent The importance of this result was that it provided a counterexample to the widespread belief that the explosion of the volatility smile implied the presence of jumps at small $\tau$ (see \cite{carr_wu_2003_volatility_smile_jumps}). Then from (\ref{ef_atm}), if $H>1/2$ the volatility skew function is increasing in time for small values of $\tau$, which is completely inconsistent with the approximately observed volatility skew term structure of $1/\sqrt{\tau}$ (see \cite{gatheral_et_al_2018_volatility_is_rough}). Consequently, for very short expiration's ($\tau \ll 1/\alpha$), FSV models ($H>1/2$) still generate a term structure of volatility skew that is inconsistent with the observed one. Whereas, RFSV models ($H<1/2$) reproduce both the observed smoothness of the volatility process and the term structure of volatility skew. The following figure from \cite{gatheral_et_al_2018_volatility_is_rough} is of the S\&P ATM volatility skews given by (\ref{e-volatility-skew}).

\begin{figure}[h]
	\centering
	\includegraphics[width=0.7\linewidth]{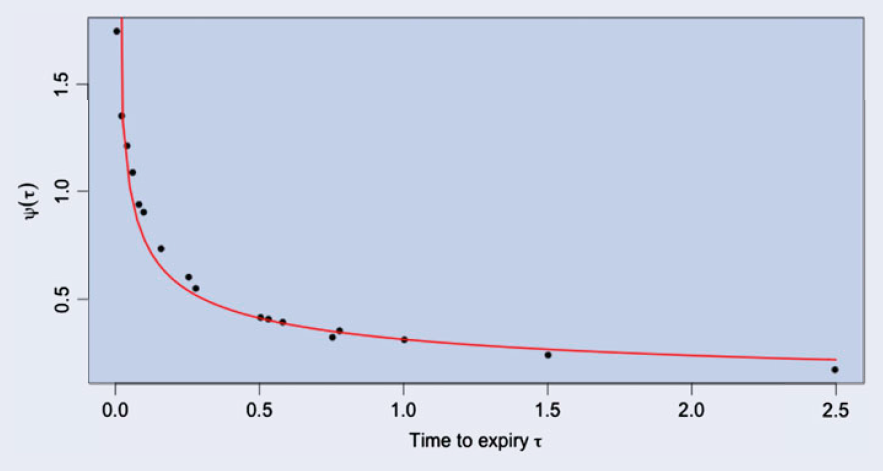}
	\caption{The black dots are non-parametric estimates of the S\&P ATM volatility skews as of 20 June 2013; the red curve is the power law fit$ ψ(\tau) = A \tau^{−0.4}$ (figure 2, \cite{gatheral_et_al_2018_volatility_is_rough}).}
	\label{fig:atmvolatilityskewexpiry}
\end{figure}

\noindent Moreover, recent studies by Fukasawa et. al. (see \cite{fukasawa_et_al_2019_is_volatility_rough}) also confirmed the accuracy of these empirical findings, thus supporting the use of RFSV models in pricing financial derivatives.


\section{Motivation and major contributions}\label{s-inspiration}
The popularity of the Heston model in the financial market setting has lead to the introduction of numerous versions of the Rough Heston model (see \cite{bauerle_desmettre_2019_portfolio_optimization_des_baulere}, \cite{euch_rosenbaum_2018_perfect_hedging_rough_heston_models}, \cite{euch_rosenbaum_2019_characteric_function_rough_heston_models}, \cite{guennoun_et_al_2018_asymptotic_behavior_fractional_heston} and \cite{jaber_et_al_2017_affine_volterra_processes}). Specifically, the work from Jaber et al. (see \cite{jaber_et_al_2017_affine_volterra_processes}) in affine Volterra processes, resulting in the development of the Volterra Heston model and thus, provides a number of useful results. For example, the rough Heston model developed in \cite{euch_rosenbaum_2019_characteric_function_rough_heston_models}, becomes a special case of the Volterra Heston model under the kernel $K(t) = \frac{t^{H-1/2}}{\Gamma(H+1/2)}$, and by the use of Riccati-Volterra equations (as shown in \cite{jaber_et_al_2017_affine_volterra_processes}), the work produced in \cite{euch_rosenbaum_2019_characteric_function_rough_heston_models} on the characteristic functions of the rough Heston model can be extended to the Volterra Heston model. In addition to these types of rough Heston models, Guennoun et al. (see \cite{guennoun_et_al_2018_asymptotic_behavior_fractional_heston}) developed an alternative definition of the rough Heston model through the use of fractional derivatives, which is extended in \cite{bauerle_desmettre_2019_portfolio_optimization_des_baulere}. The associated properties of these rough Heston models have subsequently lead to the development of two contrasting approaches to the Merton portfolio optimisation problem, which is the focus of this thesis. 

Motivated from the results in \cite{jaber_et_al_2017_affine_volterra_processes} and the development of the martingale distortion transformation (see \cite{zariphopoulou_2001_supplement_martingale_distortion_markovian}, \cite{foque_hu_2018a_optimal_portfolio_fast_smooth}, \cite{foque_hu_2018b_optimal_portfolio_fast_rough}, \cite{tehranchi_2004_martingale_distortion_non_markovian}), Han and Wong (see \cite{han_wong_portfolio_utility_volterra_heston}) propose an approach to the optimisation problem by means of the martingale optimality principle, which is solved by constructing the Ansatz. The success of this approach is due to the fact that its construction is not restricted to the class of Markovian processes, and thus is applicable to non-Markovian processes. Therefore the difficulties of the non-Markovian characteristic in the Volterra Heston model, which prevents the direct use of the Hamilton-Jacobi-Bellman (HJB) framework, is overcome by this approach.

In addition to this method presented by Han and Wong, we also present an alternative method to solving the Merton optimisation problem, proposed by B\"auerle and Desmettre (see \cite{bauerle_desmettre_2019_portfolio_optimization_des_baulere}). The contrasting difference between these two methods is that B\"auerle and Desmettre motivated by the works in \cite{carmon_et_al_2000} and \cite{harms_stefanovits_2019}, overcome the difficulties of the non-Markovian characteristic of the rough Heston model by means of suitable representation of the fractional part. This is then followed by a suitable finite dimensional approximation, thus allowing the optimisation problem to be cast into the classical framework (see \cite{zariphopoulou_2001_supplement_martingale_distortion_markovian}). Solutions of the Merton optimisation problem are then obtained as the limit of the approximated problem. This approach gives rise to a numerical solution for these types of optimisation problems under RFSV models. In addition to the development of this approach B\"auerle and Desmettre present the approach under a new RFSV model based on the Marchaud fractional derivative, which remedies some of the shortcomings of previous rough Heston models which were derived via means of the fractional derivative.

\section{Organisation of the thesis}

The aim of this thesis is to investigate these two contrasting approaches and compare their corresponding methods and results to the Merton optimisation problem. The outline of the thesis is organised as follows. In chapter 2 we begin by formulating the financial market model and the optimisation problem. This framework casts both of these approaches into a comparable environment. In chapter 3 we introduce the Volterra Heston model and the martingale distortion transformation. This in turn leads to vital pieces of work which motivate the construction of the Ansatz for the martingale optimality principle in these types of problems. The main results are then summarised. In chapter 4 we introduce the approach presented by B\"auerle and Desmettre, and begin by investigating the construction of a rough Heston model via means of the Marchaud fractional derivative and then report the finite dimensional approach. Next a brief outline of the derivation of the HJB equations is given and the key results summarised. Finally we conclude by comparing these two methods and highlight the key assumptions used and future research opportunities.



\chapter{Model definition and preliminaries}
The main goal of this chapter is to introduce the financial market model and the optimisation problem, which will then be solved in the following chapters by considering two different approaches. To define the financial market model we must first begin by introducing the concept of strong and weak solutions of a stochastic differential equation with respect to a Brownian motion. The concept of strong and weak solutions are crucial to the construction of uniqueness and existence arguments for the solution of a stochastic differential equation. Specifically, in chapter 3 we see the case where uniqueness for the stochastic Volterra equations has only been shown in the weak sense, with strong uniqueness still an open problem. Therefore, understanding this difference is of fundamental importance as all processes in the financial market model are defined to be progressively measurable. All notation used in this chapter has been kept consistent with \cite{textbook_karatzas_shreve_stochastic_calcuus}. 

\section{Solutions and uniqueness of the stochastic differential equation}

Let us begin by introducing the definition of a solution of a stochastic differential equation (SDE, for short). For simplicity we develop the following concepts with respect to a diffusion process.

\begin{definition}[A solution of the stochastic differential equation]{\cite[Chapter 5, \S 2]{textbook_karatzas_shreve_stochastic_calcuus}}\label{d-sol-sde}
	Let $\mu : [0,\infty) \times \R^k \to \R^k$ and $\sigma : [0,\infty) \times \R^k \to \R^{k \times d}$ be Borel-measurable functions. Then the stochastic differential equation given by
	\begin{equation}\label{e-sde}
	\begin{split}
	dX_t &= \mu(t,X_t)dt + \sigma(t,X_t)dW_t,\\
	X_0 &= \xi,
	\end{split}	
	\end{equation}
	where $W=\lbrace W_t \sep 0\le t < \infty \rbrace$ is a d-dimensional Brownian motion and $X=\lbrace X_t \sep 0\le t < \infty\rbrace$ with initial condition $X_0 = \xi$, is defined to be the solution of the equation (\ref{e-sde}) and is a suitable stochastic process with continuous sample paths and values in $\R^k$. 
\end{definition}

\noindent This definition of a solution to the stochastic differential equation (\ref{e-sde}) can be given further meaning, in the sense of weak and strong solutions. To understand the crucial differences between these meanings we must first introduce augmentation of $\sigma$-fields. 

\subsection{Augmentation of $\sigma$-fields}
Consider the process $X=\lbrace X_t, \F_t^X \sep 0\le t < \infty\rbrace$, where $\F_t^X =\sigma( X_u \sep 0\le u \le t )$ and $\lbrace \F_t^X \rbrace_{t \ge 0}$ is defined as the \textit{natural filtration} of $X$, and let $\mu$ be the initial distribution of $X_0$ on the space $(\Omega, \F_\infty^X, \P^\mu)$, where $\P^\mu (X_0 \in x) = \mu(x)$ and $x \in \B(\R^d)$. Then define 
\begin{equation}
	\N^\mu \de \left\lbrace N \subseteq \Omega \sep (\exists \, G \in \F_\infty^X) [N \subseteq G \wedge \P^\mu(G)=0]\right\rbrace,
\end{equation}
as \textit{the collection of $\P^\mu$-null sets}. Let us now define the augmentation of $\F_t^X$. 
\begin{definition}[Augmentation of the natural filtration]\cite[Chapter 2, \S 7.2]{textbook_karatzas_shreve_stochastic_calcuus}\label{d-aug}
	\begin{equation}
		\F_t^\mu \de \sigma(\F_t^X \cup \N^\mu), \quad 0\le t <\infty.
	\end{equation}
\end{definition}
\noindent By definition $\lbrace \F_t^\mu \rbrace_{t\ge 0}$ is $\P^\mu$-\textit{complete}, i.e., $\N^\mu \in \F_t^\mu, \fa  0 \le t <\infty$, and we have the following useful results.

\begin{proposition}\label{p-fil-leftc}
	If the process $X$ has left-continuous paths, then the filtration $\left\lbrace \F_t^\mu \right\rbrace_{t\ge 0}$ is left-continuous.
\end{proposition}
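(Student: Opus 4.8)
The plan is to unwind the definition of a left-continuous filtration and reduce everything to a single measurability statement about $X_t$. Recall that $\{\F_t^\mu\}_{t\ge 0}$ is called left-continuous when $\F_t^\mu = \F_{t-}^\mu$ for every $t>0$, where $\F_{t-}^\mu \de \sigma\big(\bigcup_{0\le s<t}\F_s^\mu\big)$. One inclusion is immediate: since $\F_s^\mu \subseteq \F_t^\mu$ whenever $s<t$, we always have $\F_{t-}^\mu \subseteq \F_t^\mu$. So the entire content is the reverse inclusion $\F_t^\mu \subseteq \F_{t-}^\mu$, which I would establish for each fixed $t>0$.

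First I would strip off the null sets. Because $\{\F_s^\mu\}_{s\ge 0}$ is $\P^\mu$-complete by Definition \ref{d-aug}, we have $\N^\mu \subseteq \F_s^\mu$ for every $s$, and in particular $\N^\mu \subseteq \F_{t-}^\mu$. Consequently $\F_t^\mu = \sigma(\F_t^X \cup \N^\mu) \subseteq \F_{t-}^\mu$ will follow as soon as I show the un-augmented inclusion $\F_t^X \subseteq \F_{t-}^\mu$. Since $\F_t^X = \sigma(X_u : 0\le u \le t)$ is generated by the random variables $X_u$ with $u\le t$, it suffices to prove that each such $X_u$ is $\F_{t-}^\mu$-measurable. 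For $u<t$ this is automatic, as $X_u$ is $\F_u^X$-measurable and $\F_u^X \subseteq \F_u^\mu \subseteq \F_{t-}^\mu$. The only genuine point is the endpoint variable $X_t$.

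Here I would invoke the hypothesis that $X$ has left-continuous paths. For every $\omega$ the map $s\mapsto X_s(\omega)$ is left-continuous at $t$, so $X_t(\omega) = \lim_{n\to\infty} X_{t-1/n}(\omega)$ as a genuine pointwise limit (for $n$ large enough that $t-1/n>0$). Each approximating variable $X_{t-1/n}$ is $\F_{t-1/n}^X$-measurable, hence $\F_{t-}^\mu$-measurable, and a pointwise limit of $\F_{t-}^\mu$-measurable functions is again $\F_{t-}^\mu$-measurable. Thus $X_t$ is $\F_{t-}^\mu$-measurable, which completes the chain $\F_t^X \subseteq \F_{t-}^\mu$, and together with the null-set remark above yields $\F_t^\mu \subseteq \F_{t-}^\mu$.

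I do not expect a serious obstacle here; the argument is essentially careful bookkeeping around the augmentation. The one place to stay alert is the role of the null sets — one must use completeness to guarantee that $\N^\mu$ already sits inside $\F_{t-}^\mu$, so that augmenting $\F_t^X$ adds nothing beyond what left-continuity of the paths already provides. It is also worth noting that only left-continuity (not right-continuity) of the paths is used, and that the approximating limit is genuinely pointwise in $\omega$ rather than merely almost sure, so the limiting step introduces no additional exceptional set.
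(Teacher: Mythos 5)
Your proof is correct and follows essentially the same route as the paper's: left-continuity of the paths gives $X_t$ as a pointwise limit of $X_{t-1/n}$, so $\sigma(X_u : u \le t)$ already sits inside $\sigma\big(\bigcup_{s<t}\F_s^\mu\big)$, and the null sets come along for free by completeness. Your write-up is in fact more careful than the paper's one-line argument, which asserts $\F_{t-}^X = \F_t^X$ directly without spelling out the measurability of the limit or the handling of $\N^\mu$.
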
 

\begin{proof}
	Let $\F_{t-}^\mu=\sigma(\cup_{u<t}\F_u^\mu)$, and $\F_{t-}^X = \sigma(X_u \sep 0 \le u < t)=\F_t^X$, as $\lim_{u\uparrow t}X_u=X_t, \fa 0\le t < \infty$. Thus $\F_{t-}^\mu=\sigma(\cup_{u<t}\F_u^\mu)=\F_t^\mu, \fa 0\le t < \infty$ as required.
\end{proof}

\begin{proposition}\cite[Chapter 2, \S 7.7]{textbook_karatzas_shreve_stochastic_calcuus}\label{p-fil-rightc}
	For a $k$-dimensional strong Markov process $X=\lbrace X_t, \F_t^X \sep 0\le t < \infty \rbrace$ with initial distribution $\mu$, the augmented filtration $\lbrace \F_t^\mu \rbrace_{t\ge 0}$ is right-continuous, meaning 
	$$\F_t^\mu = \cap_{\epsilon>0}\F_{t+\epsilon}.$$
\end{proposition}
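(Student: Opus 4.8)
The plan is to prove the two inclusions separately. The inclusion $\F_t^\mu \subseteq \cap_{\epsilon>0}\F_{t+\epsilon}^\mu$ is immediate, since the augmented filtration is increasing in $t$; so the entire content lies in establishing the reverse inclusion $\F_{t+}^\mu \de \cap_{\epsilon>0}\F_{t+\epsilon}^\mu \subseteq \F_t^\mu$. Because $\F_t^\mu$ already contains every $\P^\mu$-null set (the completeness of $\{\F_t^\mu\}_{t\ge 0}$ noted immediately after Definition~\ref{d-aug}), it suffices to show that each $A \in \F_{t+}^\mu$ agrees, up to a $\P^\mu$-null set, with a set in $\F_t^X$; equivalently, that for every bounded $\F_\infty^X$-measurable random variable $Y$ one has $\E^\mu[Y\gvn\F_{t+}^\mu] = \E^\mu[Y\gvn\F_t^\mu]$ almost surely. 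Taking $Y=\mathbf{1}_A$ then forces $\mathbf{1}_A$ to be $\F_t^\mu$-measurable, whence $A\in\F_t^\mu$.

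To verify this identity of conditional expectations, I would first reduce the class of $Y$ by a functional monotone class argument: it is enough to treat $Y$ of the product form $Y = \prod_{j=1}^{n} f_j(X_{t_j})$ with $0 \le t_1 < \cdots < t_n$ and each $f_j$ bounded and continuous, since such products form a multiplicative class generating $\F_\infty^X$ and both sides of the claimed identity are linear and stable under bounded monotone limits. Splitting the product at time $t$, the factors indexed by $t_j \le t$ are $\F_t^X$-measurable and may be pulled out of both conditional expectations, so the problem localises to the ``future'' factors $\prod_{t_j > t} f_j(X_{t_j})$.

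The heart of the argument is then the Markov property. For each $\epsilon>0$ small enough that $t+\epsilon<\min_{t_j>t}t_j$, conditioning at time $t+\epsilon$ and using the transition semigroup $(P_s)_{s\ge 0}$ of the strong Markov process $X$ expresses
$$\E^\mu\Big[\prod_{t_j>t} f_j(X_{t_j}) \,\Big|\, \F_{t+\epsilon}^\mu\Big] = g_\epsilon(X_{t+\epsilon})$$
for an explicit measurable $g_\epsilon$ built from iterated applications of the $P_s$. Since $\F_{t+}^\mu \subseteq \F_{t+\epsilon}^\mu$, conditioning both sides further on $\F_{t+}^\mu$ gives $\E^\mu[\,\cdot\,\gvn\F_{t+}^\mu] = \E^\mu[g_\epsilon(X_{t+\epsilon})\gvn\F_{t+}^\mu]$. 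Now I let $\epsilon\downarrow 0$: right-continuity of the sample paths yields $X_{t+\epsilon}\to X_t$, and continuity of the semigroup (the Feller property carried by the strong Markov process) yields $g_\epsilon\to g_0$ with $g_0(X_t)$ the corresponding function of $X_t$ alone. A conditional dominated convergence step then produces $\E^\mu[\prod_{t_j>t} f_j(X_{t_j})\gvn\F_{t+}^\mu] = g_0(X_t)$, which is $\F_t^X$- and hence $\F_t^\mu$-measurable, and which coincides with $\E^\mu[\prod_{t_j>t} f_j(X_{t_j})\gvn\F_t^\mu]$ by the Markov property applied directly at time $t$. Reassembling with the pulled-out factors gives the desired identity.

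The main obstacle will be this last limiting step. The interchange of $\lim_{\epsilon\downarrow 0}$ with the conditional expectation must be justified by conditional dominated convergence, which requires $g_\epsilon(X_{t+\epsilon})\to g_0(X_t)$ almost surely and boundedly; the almost-sure convergence in turn hinges on the joint behaviour $(\epsilon,x)\mapsto g_\epsilon(x)$ as $\epsilon\downarrow 0$, that is, on continuity of the transition semigroup combined with right-continuity of paths. For bounded continuous $f_j$ the Feller property gives uniform convergence $P_{s-\epsilon}f\to P_s f$ and preserves continuity at each stage of the iteration defining $g_\epsilon$, so $g_\epsilon(X_{t+\epsilon})\to g_0(X_t)$ follows; extension beyond continuous $f_j$ is then absorbed into the monotone class theorem already invoked. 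It is precisely the completeness of the augmentation $\{\F_t^\mu\}_{t\ge 0}$ that finally upgrades the almost-sure identity of conditional expectations to the set-theoretic inclusion $\F_{t+}^\mu\subseteq\F_t^\mu$, closing the proof.
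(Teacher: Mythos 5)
Your overall skeleton is the right one --- reduce the inclusion $\cap_{\epsilon>0}\F_{t+\epsilon}^\mu \subseteq \F_t^\mu$ to the identity $\E^\mu[Y\gvn\cap_{\epsilon>0}\F_{t+\epsilon}^\mu]=\E^\mu[Y\gvn\F_t^\mu]$ for bounded $\F_\infty^X$-measurable $Y$, prove that identity by a monotone class argument on products $\prod_j f_j(X_{t_j})$ split at time $t$, and let the completeness of the augmentation upgrade the almost-sure identity to the set inclusion. But the step you yourself flag as the heart of the argument contains a genuine gap: you justify $g_\epsilon(X_{t+\epsilon})\to g_0(X_t)$ by appealing to ``the Feller property carried by the strong Markov process.'' A strong Markov process does not carry the Feller property --- the implication runs (essentially) the other way --- and neither the Feller property nor right-continuity of paths is among the hypotheses of the proposition. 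Without strong continuity of the transition semigroup there is no reason for $P_{s-\epsilon}f\to P_s f$, and the conditional dominated convergence step collapses. So as written your argument proves the result for Feller processes with right-continuous paths, not for the class of processes in the statement.

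The proof in Karatzas--Shreve (which is what the thesis cites in lieu of its own argument) avoids the $\epsilon\downarrow 0$ limit entirely, and this is the missing idea: in their formulation the strong Markov property for an optional time $S$ conditions on $\F_{S+}$, not $\F_S$. Applying it at the \emph{deterministic} time $S=t$ gives directly
\begin{equation*}
\E^\mu\Big[\textstyle\prod_{t_j>t} f_j(X_{t_j}) \,\Big|\, \F_{t+}^X\Big]=\E^\mu\Big[\textstyle\prod_{t_j>t} f_j(X_{t_j}) \,\Big|\, X_t\Big],
\end{equation*}
which is already $\F_t^X$-measurable; the monotone class and splitting steps then yield $\E^\mu[Y\gvn\F_{t+}^X]=\E^\mu[Y\gvn\F_t^X]$ for all bounded $\F_\infty^X$-measurable $Y$ (this is Problem 2.7.6 there), with no semigroup continuity and no path regularity required. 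One further small point you elide: you pass freely between $\cap_{\epsilon>0}\F_{t+\epsilon}^\mu$ and the augmentation of $\F_{t+}^X=\cap_{\epsilon>0}\F_{t+\epsilon}^X$; these do coincide, but that needs a short argument (for $A$ in the intersection, pick $B_n\in\F_{t+1/n}^X$ with $A\,\Delta\,B_n$ null and take $B=\limsup_n B_n\in\F_{t+}^X$). If you replace the Feller/limit argument by the strong Markov property applied at the deterministic time $t$ with the $\F_{t+}$ conditioning built in, and add that reconciliation of the two filtrations, your proof closes.
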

\begin{proof}
	See proposition 7.7 in \cite{textbook_karatzas_shreve_stochastic_calcuus}.
\end{proof}

\begin{remark}
	A filtration $\lbrace \F_t \rbrace_{t\ge 0}$ is said to satisfy the \textit{usual conditions} if it's both $\P$-complete and right-continuous.
\end{remark}

\noindent Using the above definitions, let's now define the augmented filtration generated by a Brownian motion.

\begin{definition}[Augmented filtration generated by a Brownian motion]{\cite[Chapter 5, \S 2]{textbook_karatzas_shreve_stochastic_calcuus}}
	Let $W$ be a $d$-dimension Brownian motion $W=\left\lbrace W_t, \F_t^W \sep 0 \le t < \infty \right\rbrace$ defined on the probability space $(\Omega, \F, \P)$, where $\lbrace \F_t^W \rbrace_{t\ge 0}$ is called the natural filtration, as $ \F_t^W \de \lbrace \sigma ( W_u \sep 0\le u \le t ) \rbrace$, and assume that this space also accommodates a random vector $\xi$ taking values in $\R^k$, independent of $\F_\infty^W$. We then consider the filtration
	\begin{equation}
	\G_t \de \sigma(\xi) \vee \F_t^W = \sigma(\xi, W_u \sep 0\le u \le t), \quad 0 \le t < \infty,
	\end{equation}
	which is left-continuous by proposition \ref{p-fil-leftc} and the collection of null sets
	\begin{equation}
	\N \de \left\lbrace N \subseteq \Omega \sep (\exists \, G \in \G_\infty) [N \subseteq G \wedge \P(G)=0]\right\rbrace.
	\end{equation}
	Then the filtration given by
	\begin{equation}\label{e-fil-aug-bm}
	\F_t \de \sigma(\G_t \cup \N), \quad 0\le t < \infty,
	\end{equation}
	is defined as the augmented filtration of the natural filtration $\lbrace \F_t^W \rbrace_{t\ge 0}$.
\end{definition}
\noindent Obviously, $(W_t, \G_t \sep 0\le t < \infty)$ is a $d$-dimension Brownian motion and since the augmentation does not affect the definition of the Brownian motion, then so is $(W_t, \F_t \sep 0\le t < \infty)$. Then, by proposition \ref{p-fil-rightc} the filtration $\lbrace \F_t \rbrace_{t\ge 0}$ satisfies the usual conditions.

Let us now consider the concept of strong solutions for the stochastic differential equation given in (\ref{e-sde}).
\subsection{Strong solutions and uniqueness}

\begin{definition}[A strong solution]{\cite[Chapter 5, \S 2.1]{textbook_karatzas_shreve_stochastic_calcuus}}\label{d-strong-sol}
	A strong solution of the stochastic differential equation (\ref{e-sde}) with initial condition $\xi$, is a process $X=\lbrace X_t \sep 0 \le t < \infty \rbrace$ with continuous sample paths and the following properties:
	\begin{enumerate}[label=(\roman*)]
		\item Let $\lbrace \F_t \rbrace_{t\ge 0}$ be given by (\ref{e-fil-aug-bm}), 
		\item $X=\lbrace X_t, \F_t \sep 0 \le t < \infty \rbrace$ is a continuous, adapted $\R^k$-valued process,\\
		$W=\lbrace W_t, \F_t \sep 0 \le t < \infty \rbrace$ is a $d$-dimensional Brownian motion, and
		\item $\P(\int_{0}^{t}[|\mu_i(u,X_u)| + \sigma_{ij}^2(u,X_u)]du < \infty)=1,\fa 1\le i \le k,\fa 1 \le j \le d,$ and $0\le t < \infty$, $\P$-a.s.. 
	\end{enumerate}
Then, the It\^o process
\begin{equation}
X_t = X_0 + \int_{0}^{t}\mu(u,X_u)du + \int_{0}^{t}\sigma(u,X_u)dW_u, \quad 0\le t < \infty, \, \P\text{-a.s.,}
\end{equation}
is well defined as a strong solution to (\ref{e-sde}).		
\end{definition}

\begin{remark} In some instances as we will see in chapter 3 when we define the stochastic Volterra equations, the stochastic differential equation (\ref{e-sde}) is no longer Markovian. Therefore, the solution is established with the integral form of the SDE pre-constructed. By verifying that the integral form is well defined, meaning it satisfies conditions (i)-(iv), we say it's a strong solution of the stochastic differential equation if it's adapted to the filtration given by (\ref{e-fil-aug-bm}), and a weak solution otherwise (see \cite{jaber_et_al_2017_affine_volterra_processes}).
\end{remark}
\noindent This definition leads naturally to the following definition for strong uniqueness.

\begin{definition}[Strong uniqueness]{\cite[Chapter 5, \S 2.3]{textbook_karatzas_shreve_stochastic_calcuus}}\label{d-strong-uniq} Let $(\Omega, \F, \lbrace\F_t \rbrace_{t\ge 0}, \P)$ be a filtered probability space with $\lbrace \F_t \rbrace_{t\ge 0}$ given by \ref{e-fil-aug-bm}, let $X$ and $\widetilde{X}$ be two strong solutions of \ref{e-sde} relative to any $d$-dimensional Brownian motion $W$ with initial condition $\xi$. Then, we define $X$ as a unique strong solution if
	$$\P(X_t =  \widetilde{X}_t \sep 0 \le t < \infty)=1.$$	
\end{definition}
\noindent From this definition it is clear that the information generated by the initial condition $\xi$ and $\lbrace W_u | 0\le u \le t \rbrace$ determines $X_t$ in an unambiguous way almost everywhere (a.e.), thus satisfying the \textit{principle of causality}. While the conditions of the strong solution provide a nice model framework, it is not always possible to easily prove existence and uniqueness for the strong solution. This naturally leads to the concept of weak solutions, which although weaker than strong solutions are still extremely useful in both theory and applications. 

\subsection{Weak solutions and uniqueness}

\begin{definition}[A weak solution]{\cite[Chapter, \S 3.1]{textbook_karatzas_shreve_stochastic_calcuus}}\label{d-weak-sol} A weak solution of the stochastic differential equation (\ref{e-sde}) is a quadruple $(X, W), (\Omega, \F, \P), \lbrace \F_t \rbrace_{t\ge 0}, (\mu)$, where
	\begin{enumerate}[label=(\roman*)]
		\item $(\Omega, \F, \P)$ is a probability space and $\lbrace \F_t \rbrace_{t\ge 0}$ is a filtration of sub-$\sigma$-fields of $\F$ satisfying the usual conditions, 
		\item $X=\lbrace X_t, \F_t \sep 0 \le t < \infty \rbrace$ is a continuous, adapted, $\R^k$-valued process,\\
		$W=\lbrace W_t, \F_t \sep 0 \le t < \infty \rbrace$ is a $d$-dimensional Brownian motion, and
		(iii), (iv) of definition \ref{d-strong-sol} are satisfied, and
		\item The probability measure $\mu(\Gamma)=\P[X_0 \in \Gamma], \, \Gamma \in \B(\R^k)$ is the initial distribution of the solution.
	\end{enumerate}	
\end{definition}

\begin{remark}
	Clearly the key point of difference between weak and strong solutions is the conditions imposed on filtration $\lbrace \F_t \rbrace_{t\ge 0}$. Whilst we've imposed that the filtration in the strong solution is given by (\ref{e-fil-aug-bm}), meaning the output $X_t$ depends only on the initial condition $\xi$ and the information contained in the filtration $\lbrace \F_t \rbrace_{t\ge 0}$, which in this instance is the set of values $\lbrace W_u | 0\le u \le t \rbrace$; the filtration $\lbrace \F_t \rbrace_{t\ge 0}$ in definition \ref{d-weak-sol} is not restricted to this condition. Thus, in the context of the principle of causality the value of the solution $X_t(\omega)$ is not necessarily given in an unambiguous way by a measurable function of the initial condition $X_0(\omega)$ and the Brownian path $\lbrace W_u(\omega) \sep 0\le u \le t \rbrace$. However, due to constraint imposed in \textit{(ii)}, $W$ is a Brownian motion relative to $\F_t$, which means that the information contained in $X_t(\omega)$ cannot anticipate the future of the Brownian motion. 	
\end{remark}
\noindent The relaxation of this constraint leads to an additional number of approaches in determining weak uniqueness. The following definition of weak uniqueness provides a good example of the added flexibility and is well suited to the concept of weak solutions. 

\begin{definition}[Weak uniqueness]{\cite[Chapter 5, \S 3.4]{textbook_karatzas_shreve_stochastic_calcuus}}\label{d-weak-unique} We say that \textit{uniqueness in the sense of probability law} holds for (\ref{e-sde}) if, for any two weak solutions $(X, W), (\Omega, \F, \P), \lbrace \F_t \rbrace_{t\ge 0}, (\mu)$, and $(\widetilde{X}, \widetilde{W}), (\widetilde{\Omega}, \widetilde{\F}, \widetilde{\P}), \lbrace \widetilde{\F_t} \rbrace_{t\ge 0}, (\widetilde{\mu})$ defined in Definition \ref{d-weak-sol} with the same initial distribution, i.e., 
	$$\mu(\Gamma) = \widetilde{\mu}(\Gamma), \quad \fa \Gamma \in \B(\R^k),$$
have the same law, i.e.,
	$$\P(X\in \Gamma) = \widetilde{\P}(\widetilde{X}\in \Gamma),\quad \fa \Gamma \in \B(\R^k) .$$
\end{definition}

\begin{remark}
	From the above sections it is clear that strong solvability implies weak solvability, however the converse is not always true (see for instance example 5.3.5 from \cite{textbook_karatzas_shreve_stochastic_calcuus}).
\end{remark}

\section{The financial market model and the optimisation problem}\label{sfinancialmarketandoptimisationproblem}

\subsection{The financial market model}\label{sfinancialmarketmodel}
Let $(\Omega, \F, \lbrace \F_t \rbrace_{0 \le t \le T}, \P)$ be a filtered probability space where the filtration $\FF = \lbrace \F_t \rbrace_{0 \le t \le T}$,  supports a two-dimensional Brownian motion $W=\lbrace (W_{1,t}, W_{2,t}), \F_t \sep 0 \le t \le T \rbrace$, and the filtration  $\FF$ is not necessarily the augmented filtration of $W$ given by (\ref{e-fil-aug-bm}), unless stated otherwise. The main reason in considering the financial market model in this setting is due to uniqueness of the stochastic Volterra equation (\ref{estockprocess}) and (\ref{evolterraheston}) to the best of our knowledge having only been proven in the weak sense (see Definition \ref{d-weak-unique}), with its strong uniqueness (see Definition \ref{d-strong-uniq}) currently an open question.

For simplicity we consider a financial market which contains one bond with deterministic bounded risk-free rate $r_t >0$, and one stock or index. The bond evolves according to
\begin{equation}
dS_{0,t} = r_t S_{0,t}dt, \quad S_{0,0}=s_{0,0} > 0,
\end{equation}
whilst the stock price process $S_1=\lbrace S_{1,t}, \F_t \sep 0 \le t \le T \rbrace$ evolves according to
\begin{equation}\label{estockprocess}
dS_{1,t} = S_{1,t}(r_t + \theta V_t)dt + S_{1,t} \sqrt{V_t}dW_{1,t}, \quad S_{1,0}= s_{1,0} > 0.
\end{equation}
and is a continuous, adapted, $\R_+$-valued process with $\theta \ne 0$, and $V=\lbrace V_t, \F_t \sep 0 \le t \le T \rbrace$ the variance process, which is also a continuous, adapted, $\R_+$-valued process. The variance process will be given by two different rough volatility models defined in chapters 3 and 4. In addition, we construct the Brownian motion $B=\lbrace B_t, \F_t \sep 0\le t \le T \rbrace$ which will be used to drive the variance process $V$, and is given by 
\begin{equation}\label{eBt}
	dB_t = \rho dW_{1,t} + \sqrt{1-\rho^2}dW_{2,t}, \quad  \rho \in (-1,1).
\end{equation}

\subsection{The optimisation problem}\label{sub-optimisation-problem}

The optimisation problem is to find investment strategies in the financial market that
maximise the expected utility from terminal wealth. The main reason for choosing to investigate the optimisation problem in the context of a utility function is that it creates a flexible and tractable way for defining risk-aversion. Specifically, we choose to investigate the optimisation problem under a constant relative risk aversion (CRRA) utility function given by the power utility function
\begin{equation}\label{eutility}
	U(x) \de \frac{1}{\gamma}x^\gamma, \quad 0 < \gamma < 1.
\end{equation}
The parameter $\gamma$ in (\ref{eutility}) represents the risk aversion of the investor, with smaller (resp. higher) values of $\gamma$ corresponding to higher (resp. lower) risk aversion of the investor. 

Let $\pi_t \in \R$ be the fraction of wealth invested in the stock, and $1-\pi_t$ the fraction of wealth
invested in the bond at time $t$. This allows for the inclusion of shorting a position in the stock ($\pi_t < 0, \, t\in [0,T]$), and shorting the bond position (taking a loan) to acquire a credit position in the stock ($\pi_t >1, \, t\in [0,T]$), thus  $\pi=(\pi_t)_{0\le t \le T}$ is defined as the investment strategy. Moreover, an admissible investment
strategy must be an $\FF$-adapted process such that all integrals exist. The conditions of admissibility are made clear in Definition \ref{dadmissible}. 

Let us denote the self-financing wealth process as $\Pi^\pi=\lbrace \Pi^\pi_t, \F_t |t\in [0,T]\rbrace$, under an admissible investment strategy $\pi$, which evolves according to the stochastic differential equation

\begin{equation}\label{ewealthprocess}
	d\Pi^\pi_t = \Pi^\pi_t(r_t + \theta \pi_t V_t) dt + \Pi^\pi_t\pi_t \sqrt{V_t}  dW_{1,t}, \quad \Pi^\pi_0 = w_0 >0.
\end{equation}

\noindent Now, as we wish to optimise our investment strategy $\pi$, we can see that the optimal choice of $\pi$ depends implicitly through the solution $\Pi^\pi$. This leads naturally to the following uniqueness definition of (\ref{ewealthprocess}).

\begin{definition}[Solution to the stochastic differential equation with random coefficients and its uniqueness]{\cite[Chapter 1, \S 6.4]{textbook_Yong_Zhou_1999_stochastic_controls}}\label{d-sde-random} 
	Let $\mu : [0,\infty) \times \R^k \times \Omega \to \R^k$ and $\sigma : [0,\infty) \times \R^k \times \Omega  \to \R^{k \times d}$ be given on a given filtered probability space $(\Omega, \F, \lbrace \F_t \rbrace_{t\ge 0}, \P )$, which accommodates a $d$-dimensional Brownian motion $W=\lbrace W_t, \F_t \sep 0 \le t < \infty \rbrace$ and a random vector $\xi$ taking values in $\R^k$, independent of $\F_\infty^W$ and $\F_0$-measurable. An $\lbrace \F_t \rbrace_{t\ge 0}$-adapted continuous process $(X_t)_{t\ge 0}$, is called a solution of 
	\begin{equation}\label{e-sde-random}
	\begin{split}
	dX_t(\omega) &= \mu(t, X_t, \omega)dt + \sigma(t,X_t,\omega)dW_t(\omega), \\
	X_0(\omega) &= \xi(\omega),
	\end{split}	
	\end{equation}
	if
	\begin{enumerate}[label=(\roman*)]
		\item $X_0 = \xi, \quad \P$-a.s.,
		\item  $\int_{0}^{t}(|\mu(u, X_u(\omega), \omega)| + |\sigma(u, X_u(\omega),\omega)|^2)du < \infty, \quad 0\le t < \infty,$  $\P$-a.s. $\forall \; \omega \in \Omega$, and
		\item $X_t(\omega) = \xi(\omega) + \int_{0}^{t}\mu(u, X_u(\omega),\omega)du + \int_{0}^{t}\sigma(u,X_u(\omega),\omega)dW_u(\omega),\quad 0\le t < \infty,$ $\P$-a.s. $\forall \; \omega \in \Omega$.
	\end{enumerate}
And we say that the solution is unique if
	\begin{equation}
	\P(X_t = Y_t \sep 0 \le t < \infty)=1,
	\end{equation}
	holds for any two solutions $X$ and $Y$.
\end{definition}

\noindent This leads to the natural development of our admissibility conditions of an admissible investment strategy.

\begin{definition}[Admissible investment strategy]\cite[\S 2.3]{han_wong_portfolio_utility_volterra_heston}\label{dadmissible} 
	An investment strategy $\pi$ is said to be admissible if
	\begin{enumerate}[label=(\roman*)]
		\item $\pi$ is $\FF$-adapted, and $\int_{0}^{T}\pi_u^2V_u du < \infty, \, \P$-a.s. 
		\item  the wealth process \ref{ewealthprocess} has a unique solution in the sense of Definition \ref{d-sde-random},
		\item $\Pi^\pi_t \ge 0$, $t\in [0,T]$, $\P$-a.s., and
		\item $\E[(\Pi^\pi_T)^{p\gamma}]<\infty$, for some $p>1$.
	\end{enumerate}	
The set of all admissible investment strategies is denoted as $\A$.
\end{definition}

\noindent Then the optimal investment strategy of the wealth process (\ref{ewealthprocess}), is given by

\begin{equation}\label{eoptimisation}
\J_0^* \de \underset{\pi\in \A}{\sup}\,\E\left[\frac{(\Pi^\pi_T)^\gamma}{\gamma}\right],
\end{equation}

\noindent or equivalently as

\begin{equation}\label{eoptimisationlocal}
\J_t^* \de \underset{\pi\in \A}{\sup}\,\E\left[\frac{(\Pi^\pi_T)^\gamma}{\gamma}\bigg|\F_t\right]. 
\end{equation}

\noindent Now, before commencing the next chapter we briefly give some motivation on how to solve (\ref{eoptimisation}) with respect to the utility function (\ref{eutility}).


\subsection{The distortion transformation approach to solving the optimisation problem}\label{soptimisationsolution}
The investor's goal is to find the optimal admissible investment strategy which we denote as $\pi^*$, so as to maximise their expected utility of terminal wealth (\ref{eoptimisation}). In 2001 Zariphopoulou (see \cite{zariphopoulou_2001_supplement_martingale_distortion_markovian}) applied the dynamic programming approach (see Theorem \ref{tdpp}) to study this optimisation problem with respect to the power utility under a Markovian model. Interestingly, the maximum expected utility (\ref{eoptimisationlocal}) is of the form $(\E\xi^{1/\delta})^\delta$ for a random variable $\xi$ depending on the variance process and the market parameters, with $\delta$ a constant commonly referred to as the distortion power. Since this discovery a number of papers (see \cite{bauerle_desmettre_2019_portfolio_optimization_des_baulere},\cite{foque_hu_2018a_optimal_portfolio_fast_smooth}, \cite{foque_hu_2018b_optimal_portfolio_fast_rough}, \cite{foque_hu_2019_optimal_portfolio_fractional_martingale_distortion}, \cite{han_wong_portfolio_mean-variance_volterra_heston} and \cite{han_wong_portfolio_utility_volterra_heston}) have utilised this structure. Let us now recall the distortion transformation obtained by Zariphopoulou (see \cite{zariphopoulou_2001_supplement_martingale_distortion_markovian}), as this work motivates the two different approaches taken to solve the optimisation problem in chapters 3 and 4.

Let us consider the Markovian setup under the complete filtered probability space $(\Omega, \F, \lbrace\F_t\rbrace_{t\in [0,T]}, \P)$ satisfying the usual conditions. Let's then define the process $\Pi^\pi=\lbrace \Pi^\pi_t, \F_t |t\in [0,T]\rbrace$ to be a self-financing wealth process under an admissible investment strategy $\pi$, which evolves according to the stochastic differential equation	
\begin{equation}\label{esdewealthdef}
d\Pi^\pi_t = \Pi^\pi_t\pi_t\mu(Y_t) dt + \Pi^\pi_t\pi_t\sigma(Y_t) dW_t^\Pi, \quad \Pi^\pi_0 = w_0 >0,
\end{equation}

\noindent where process $Y=\lbrace Y_{t}, \F_t |t\in [0,T]\rbrace$ has dynamics given by
\begin{equation}\label{esdewealthprocess}
dY_t = a(Y_t) dt + b(Y_t) dW_t^Y, \quad Y_0 = y_0 \in \R.
\end{equation}

\noindent The processes $W^\Pi$ and $W^Y$ are Brownian motions on the probability space and correlated with coefficient $\rho \in (-1,1)$. The coefficients $\mu, \sigma, b,a$ are functions of the factor $Y$ and time and are assumed to satisfy all the required regularity assumptions in order to guarantee that a unique solution to (\ref{esdewealthdef})-(\ref{esdewealthprocess}) exists. Therefore the process $X=(\Pi^\pi,Y)$ is a diffusion process, and by extension a Markov process. 

\begin{proposition}[Distortion transformation]\cite[Proposition 2.1, Section 2.1]{zariphopoulou_2001_supplement_martingale_distortion_markovian, foque_hu_2019_optimal_portfolio_fractional_martingale_distortion}\label{pdistortiontransform}The value function $\J_t^*$ is given by
	\begin{equation}\label{edistortion}
	\J_t^* \de \underset{\pi\in \A}{\sup}\,\E \left[\frac{(\Pi^\pi_T)^\gamma}{\gamma} \bigg|\Pi^\pi_t=w, \, Y_t=y\right] =\frac{w^\gamma}{\gamma}\Psi(t,y)^\delta,
	\end{equation}
	
	\noindent with 
	\begin{equation}\label{edistortionpower}
	\delta = \frac{1-\gamma}{1-\gamma + \gamma \rho^2},
	\end{equation}
	
	\noindent which results in cancelling $(\Psi_y)^2$ terms in the HJB equation (see \cite{foque_hu_2019_optimal_portfolio_fractional_martingale_distortion}). Therefore, $\Psi$ solves the linear PDE
	\begin{equation}\label{edistortionpdesol}
	\Psi_t + \left(\frac{b^2(y)}{2}\partial_{yy} + a(y)\partial_y + \frac{\gamma\lambda(y)\rho b(y)}{1-\gamma}\partial_y\right)\Psi + \frac{\gamma\lambda^2(y)}{2\delta(1-\gamma)}\Psi=0, \quad \Psi(T,y)=1,
	\end{equation}

	\noindent where $\lambda(y)\de \mu(y)/\sigma(y)$ is the Sharpe ratio.
\end{proposition}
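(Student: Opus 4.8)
The plan is to treat $X=(\Pi^\pi,Y)$ as the controlled Markov diffusion it was declared to be, pass through the Hamilton--Jacobi--Bellman (HJB) equation, and reserve the distortion substitution for the very last step. First I would invoke the dynamic programming principle (Theorem~\ref{tdpp}) for the value function $J(t,w,y)\de\sup_{\pi}\E[(\Pi^\pi_T)^\gamma/\gamma\mid \Pi^\pi_t=w,\,Y_t=y]$. Reading the generator of $(\Pi^\pi,Y)$ off from (\ref{esdewealthdef})--(\ref{esdewealthprocess}), together with the correlation $\rho$ between $W^\Pi$ and $W^Y$, the HJB equation becomes
\begin{equation}
J_t+\sup_{\pi}\left[w\pi\mu(y)J_w+\tfrac12 w^2\pi^2\sigma^2(y)J_{ww}+w\pi\sigma(y)b(y)\rho J_{wy}\right]+a(y)J_y+\tfrac12 b^2(y)J_{yy}=0,
\end{equation}
subject to the terminal condition $J(T,w,y)=w^\gamma/\gamma$.

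Next I would exploit the scaling structure. Because the wealth dynamics are linear in $\Pi^\pi$ and the utility is $w^\gamma/\gamma$, the value function inherits the homogeneity $J(t,w,y)=\tfrac{w^\gamma}{\gamma}\Phi(t,y)$. Substituting this ansatz, expressing $J_w,J_{ww},J_{wy},J_y,J_{yy},J_t$ through $\Phi$, and performing the pointwise maximisation in $\pi$, the sign condition $\gamma<1$ makes $\sigma^2 J_{ww}<0$ so the first-order condition yields a genuine maximiser
\begin{equation}
\pi^*=\frac{\lambda(y)\Phi+\rho b(y)\Phi_y}{(1-\gamma)\sigma(y)\Phi},\qquad \lambda(y)\de\mu(y)/\sigma(y).
\end{equation}
Feeding $\pi^*$ back in, the maximised bracket equals $-A^2/(2C)$ with $A=\mu J_w+\sigma b\rho J_{wy}$ and $C=\sigma^2 J_{ww}$; after dividing out $w^\gamma$ this collapses to the semilinear PDE
\begin{equation}
\Phi_t+a(y)\Phi_y+\tfrac12 b^2(y)\Phi_{yy}+\frac{\gamma\lambda^2}{2(1-\gamma)}\Phi+\frac{\gamma\lambda\rho b}{1-\gamma}\Phi_y+\frac{\gamma\rho^2 b^2}{2(1-\gamma)}\frac{\Phi_y^2}{\Phi}=0,
\end{equation}
whose only nonlinear term is the quadratic-gradient term $\Phi_y^2/\Phi$.

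The decisive step is the distortion substitution $\Phi=\Psi^\delta$. Computing $\Phi_y=\delta\Psi^{\delta-1}\Psi_y$, $\Phi_{yy}=\delta(\delta-1)\Psi^{\delta-2}\Psi_y^2+\delta\Psi^{\delta-1}\Psi_{yy}$ and $\Phi_y^2/\Phi=\delta^2\Psi^{\delta-2}\Psi_y^2$, the coefficient collecting every $\Psi^{\delta-2}\Psi_y^2$ contribution is $\tfrac12 b^2\delta(\delta-1)+\tfrac{\gamma\rho^2 b^2}{2(1-\gamma)}\delta^2$. Forcing this to vanish and dividing by $\tfrac12 b^2\delta$ gives $(\delta-1)+\tfrac{\gamma\rho^2}{1-\gamma}\delta=0$, i.e.\ exactly $\delta=\frac{1-\gamma}{1-\gamma+\gamma\rho^2}$ as in (\ref{edistortionpower}). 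With this choice the nonlinearity disappears, and dividing the surviving identity by $\delta\Psi^{\delta-1}$ produces precisely the linear PDE (\ref{edistortionpdesol}); the terminal condition $\Phi(T,y)=1$ transfers to $\Psi(T,y)=1$ because $\delta\neq0$.

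I expect the genuine obstacle to be not the algebra but the verification argument. One must confirm that a sufficiently regular solution $\Psi$ of (\ref{edistortionpdesol}) reconstructs, via $J=\tfrac{w^\gamma}{\gamma}\Psi^\delta$, the true value function; that the candidate $\pi^*$ is admissible in the sense of Definition~\ref{dadmissible}; and that the local martingale arising in the dynamic programming step is a true martingale, so the supremum is actually attained. The homogeneity ansatz and the exact cancellation of the gradient term are the structural heart of the proof, but their payoff is only secured once these integrability and regularity conditions have been discharged.
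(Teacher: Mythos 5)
Your derivation is correct and is essentially the argument the paper defers to: the paper's ``proof'' is only a pointer to Zariphopoulou's Proposition 2.1 and Section 2.1 of Fouque--Hu, and what you have written out (HJB via dynamic programming, the homogeneity ansatz $J=\frac{w^\gamma}{\gamma}\Phi(t,y)$, pointwise maximisation in $\pi$, and the substitution $\Phi=\Psi^\delta$ with $\delta$ chosen to annihilate the $\Psi_y^2$ term) is precisely that standard derivation, with the algebra checking out, including the value of $\delta$ and the final linear PDE. Your closing remark that the verification step (regularity of $\Psi$, admissibility of $\pi^*$, and the true-martingale property) remains to be discharged is accurate and is likewise left to the cited references by the paper.
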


\begin{proof}
	See Proposition 2.1 in \cite{zariphopoulou_2001_supplement_martingale_distortion_markovian} and Section 2.1 in \cite{foque_hu_2019_optimal_portfolio_fractional_martingale_distortion}.
\end{proof}

 \noindent Recognising that the solution to (\ref{edistortionpdesol}) is given by the Feynman-Kac representation (see \cite{textbook_pham}), we have
\begin{equation}\label{edistortionsolution}
\Psi(t,y)= \widetilde{\E}\left[\exp \left\lbrace \frac{\gamma}{2\delta(1-\gamma)}\int_{t}^{T}\lambda^2(Y_u)du\right\rbrace \bigg | Y_t=y\right],
\end{equation}

\noindent where we define the probability measure $\widetilde{\P}$, such that 

\begin{equation}\label{edistortionbrownian}
\widetilde{W}_t^Y=W_t^Y-\frac{\rho \gamma}{1-\gamma}\int_{0}^{t}\lambda(Y_u)du
\end{equation}
\noindent is a Brownian motion under $\widetilde{\P}$.

From these results the next chapters look at two different approaches on how to solve this optimisation problem. Specifically, the focus of chapter 4 is on the use of the classical method which solves the optimisation problem through the use of PDE arguments, as shown in the above definition under a rough Heston model. This is achieved for the non-Markovian model through the use of a finite dimensional approximation method proposed in \cite{bauerle_desmettre_2019_portfolio_optimization_des_baulere}, which allows the problem to be cast into the classical Markovian optimisation framework. In the next chapter a different approach is taken, namely, the martingale distortion transformation which utilises the martingale optimality principle (see \cite{textbook_pham}) under a Volterra Heston model. As we will soon show, both of these methods have their associated advantages and disadvantages, with the goal of this thesis being to highlight this and provide future research questions.


\chapter{A martingale distortion approach to solving the portfolio optimisation problem under the Volterra Heston model}\label{nlmca}

In this chapter we begin with a brief introduction to the concept of stochastic calculus of convolutions and resolvents, which leads naturally to the introduction of Affine Volterra processes developed in \cite{textbook_gustaf_et_al_1990_volterra} and \cite{jaber_et_al_2017_affine_volterra_processes}. A special case of the Affine Volterra process is then considered, focusing on the Volterra Heston model used in \cite{han_wong_portfolio_utility_volterra_heston}. We then consider the derivation of the optimal trading strategy under the Volterra Heston model in \cite{han_wong_portfolio_utility_volterra_heston}. The chapter concludes with a brief literature review on the martingale distortion transformation, which leads to the motivation behind the construction of the Ansatz in \cite{han_wong_portfolio_utility_volterra_heston} for the martingale optimality principle (see Definition \ref{dmartingaleoptimalityp}).  All notation is kept consistent with chapter 2 and \cite{textbook_gustaf_et_al_1990_volterra}. We also adopt the convention used in \cite{jaber_et_al_2017_affine_volterra_processes} to differentiate between a row and column vector through the use of an asterisk (i.e. $(\C^k)^*$ (resp. $\C^k$) is a $k$-dimensional row (resp. column) vector of complex numbers).


\section{Stochastic calculus of convolutions and resolvents}

\subsection{Convolutions and some associated properties}
\begin{definition}[Convolution of two functions]\cite[Chapter 2, \S 2.1]{textbook_gustaf_et_al_1990_volterra}\label{d-conv-ff} The convolution $K*F$ of two functions $K$ and $F$ defined on $\R_+$ is the function
	\begin{equation}\label{e-convff}
	(K * F)(t) = \int_{0}^{t}K(t-u)F(u)du, 
	\end{equation}
	which is defined $\fa t \in R_+$ for which the integral exists.
\end{definition}

\noindent Moreover, (\ref{e-convff}) has a number of useful algebraic properties, which can be obtained from a change of variables and Fubini's Theorem. For example, (\ref{e-convff}) is commutative and associative (see Chapter 2 in \cite{textbook_gustaf_et_al_1990_volterra}, for further properties and details).

\noindent The following definition has stricter meaning, (i.e. order does not change the definition of the convolution).

\begin{definition}[Convolution of a measurable function and measure]\cite[Chapter 3, \S 2.1]{textbook_gustaf_et_al_1990_volterra}\label{d-conv-fmeasure} Let $K$ be a measurable function on $\R_+$, and $L$ be a measure on $\R_+$ of locally bounded variation. Then, the convolutions $K*L$ and $L*K$ are defined by
	\begin{equation}\label{e-conv-fmart}
	(K * L)(t) = \int_{0}^{t}K(t-u)L(du), \quad (L*K)(t)=\int_{0}^{t}L(du)K(t-u)
	\end{equation}
	for $t>0$ under proper conditions, and extended to $t=0$ by right-continuity when possible.
\end{definition}

\noindent Therefore (\ref{e-conv-fmart}) is by definition associative. The last convolution that we introduce is the convolution of a measurable function and local martingale.

\begin{definition}[Convolution of a measurable function and local martingale]\cite[\S 2.3]{jaber_et_al_2017_affine_volterra_processes}\label{d-conv-fmartingale} Let $M$ be a $d$-dimensional continuous local martingale defined on the probability space $(\Omega, \F, \P)$ and $K$ a function $K: \R_+ \to \R^{k \times d}$. Then the convolution of $K$ and $M$ is given by 
	\begin{equation}\label{e-conv-fmartingale}
	(K * dM)_t = \int_{0}^{t}K(t-u)dM_u, 
	\end{equation}
	and is well-defined as an It\^o integral if 
	$$\int_{0}^{t}|K(t-u)|^2d\text{tr}\langle M \rangle_u < \infty, \quad 0\le t < \infty, \text{ $\P$-a.s..}$$
\end{definition}

\begin{proposition}\cite[\S 2]{jaber_et_al_2017_affine_volterra_processes}
	If $K\in L^2_\text{loc}(\R_+; \R^{k \times d})$ and $\langle M \rangle_u = \int_{0}^{u}a_sds$ for some locally bounded process $a$. Then, (\ref{e-conv-fmartingale}) is well defined $\fa t \ge 0$.
\end{proposition}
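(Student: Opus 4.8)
The plan is to reduce the claim to verifying the sufficient integrability condition attached to Definition \ref{d-conv-fmartingale}, namely that
\[
\int_{0}^{t}|K(t-u)|^2\, d\,\text{tr}\langle M \rangle_u < \infty, \qquad 0 \le t < \infty, \ \P\text{-a.s.}
\]
Once this holds, the integrand $u \mapsto K(t-u)$ is a deterministic (hence predictable) Borel-measurable matrix-valued function on $[0,t]$, so the stochastic integral $\int_0^t K(t-u)\,dM_u$ is a well-defined It\^o integral against the continuous local martingale $M$, which is exactly the assertion. First I would use the hypothesis $\langle M \rangle_u = \int_0^u a_s\, ds$ to write $d\,\text{tr}\langle M \rangle_u = \text{tr}(a_u)\, du$, so that the condition becomes $\int_0^t |K(t-u)|^2\, \text{tr}(a_u)\, du < \infty$.

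Next I would exploit the structure of $a$. Since $\langle M \rangle$ is the matrix bracket of $M$, its density $a_u$ is symmetric positive semidefinite for a.e.\ $u$, whence $\text{tr}(a_u) \ge 0$; this nonnegativity is what lets me pass to an upper bound without worrying about signs. Because $a$ is locally bounded, for each fixed $t$ (working pathwise, outside a single $\P$-null set) there is a finite random constant $C_t = C_t(\omega)$ with $\sup_{0\le u \le t}|a_u| \le C_t$, and hence $\text{tr}(a_u) \le \kappa\, C_t$ on $[0,t]$ for a dimensional constant $\kappa$ depending only on the chosen matrix norm. This converts the local boundedness --- a pathwise statement --- into a genuine uniform bound on the compact interval $[0,t]$.

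Combining these two reductions and applying the change of variables $v = t-u$ gives
\[
\int_{0}^{t}|K(t-u)|^2\, \text{tr}(a_u)\, du \le \kappa\, C_t \int_{0}^{t}|K(t-u)|^2\, du = \kappa\, C_t \int_{0}^{t}|K(v)|^2\, dv < \infty,
\]
where finiteness of the last integral is precisely the hypothesis $K \in L^2_{\text{loc}}(\R_+; \R^{k\times d})$. Since the null set can be chosen independently of $t$ and the bound holds for every $t$ on each surviving path, the required condition holds for all $t \ge 0$, $\P$-a.s., completing the verification. I do not expect a serious obstacle here; the only points demanding care are the identification $d\,\text{tr}\langle M \rangle_u = \text{tr}(a_u)\,du$ together with the sign $\text{tr}(a_u)\ge 0$, and turning the qualitative local boundedness of $a$ into the quantitative uniform constant $C_t$ valid simultaneously for all $t$ off a common null set.
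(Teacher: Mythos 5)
Your proposal is correct and follows essentially the same route as the paper's one-line proof: both reduce the claim to the integrability condition in Definition \ref{d-conv-fmartingale}, write $d\,\mathrm{tr}\langle M\rangle_u = \mathrm{tr}(a_u)\,du$, bound $\mathrm{tr}(a_u)$ on $[0,t]$ using local boundedness of $a$, and conclude via $\|K\|_{L^2([0,t])}^2 < \infty$. The only difference is that you spell out the change of variables, the sign of the trace, and the handling of the null set, which the paper leaves implicit.
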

\begin{proof}
	\begin{align*}
		\int_{0}^{t}|K(t-u)|^2d\text{tr}\langle M \rangle_u \le \underset{0\le s \le t}{\max}|tr(a_s)|\|K\|_{L^2([0,t])}^2<\infty,
	\end{align*}
	by definition.
\end{proof}

\noindent The next result provides a useful relationship between (\ref{e-conv-fmartingale}) and (\ref{e-conv-fmart}). Let's firstly introduce an important theorem which is relevant to a number of proofs in this chapter. 

\begin{theorem}[Stochastic Fubini Theorem]\cite[Theorem 2.2]{veraar_2012_stochastic_fubini}\label{theoremstochasticfubini}
	Let $(\Omega, \F, \lbrace \F_t\rbrace_{t\ge0}, \P)$ be a complete filtered probability space satisfying the usual conditions, and $M$ a continuous local martingale. Let $X=\lbrace X_t,\F_t \sep t\ge0\rbrace$ be a collection of $(\R, \B(\R),\mu)$-valued random variables. Let $\psi:[0,T]\times\R \times \Omega \to \R$ be progressively measurable and such that for almost all $\omega \in \Omega$,
	\begin{equation}
	\int_\R\left(\int_{0}^{T}|\psi(t,x,\omega)|^2d\langle M\rangle(t,\omega) \right)^{1/2}d\mu(x) < \infty,
	\end{equation}
	then
	\begin{equation}
	\int_\R\left(\int_{0}^{T}\psi(t,x,\omega)d M(t,\omega) \right)d\mu(x) =\int_{0}^{T}\left(\int_\R\psi(t,x,\omega)d\mu(x) \right)d M(t,\omega).
	\end{equation}
\end{theorem}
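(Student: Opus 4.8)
The plan is to prove the identity by the standard three-stage scheme for Fubini-type results: establish it first for a class of elementary integrands where both iterated integrals collapse to finite sums, then extend by a density-and-continuity argument controlled by a single norm that dominates both sides. First I would reduce to the case of a genuine $L^2$-bounded martingale by localisation: choosing stopping times $\tau_n \uparrow \infty$ with $M^{\tau_n}$ bounded and $\langle M \rangle_{\tau_n}$ bounded, one proves the identity for each $M^{\tau_n}$ and lets $n\to\infty$, the passage being legitimate because the integrability hypothesis is stated pathwise in $\langle M \rangle$. Under this reduction the It\^o isometry $\E[(\int_0^T \psi\, dM)^2] = \E[\int_0^T \psi^2\, d\langle M\rangle]$ is available and becomes the main analytic tool, with the final identity for $M$ itself recovered in probability.

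Second, for elementary integrands of the form $\psi(t,x,\omega) = \sum_{i=1}^n \phi_i(x) H_i(t,\omega)$, with $\phi_i \in L^1(\mu)$ and $H_i$ simple predictable processes, the identity is immediate: the inner stochastic integral on the left is $\sum_i \phi_i(x) \int_0^T H_i\, dM$, and integrating in $x$ against $\mu$ and invoking linearity of the It\^o integral reproduces exactly $\int_0^T (\int_\R \psi\, d\mu)\, dM$. The entire content of the theorem therefore lies in the passage from these elementary integrands to a general $\psi$.

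Third, the extension hinges on the observation that, after localisation, both sides are continuous in the single norm
\[
\|\psi\| \de \int_\R \Big( \E \int_0^T |\psi(t,x)|^2 \, d\langle M\rangle_t \Big)^{1/2} d\mu(x).
\]
For the left-hand side, Minkowski's integral inequality followed by the It\^o isometry gives $\|\int_\R \int_0^T \psi\, dM\, d\mu\|_{L^2(\P)} \le \|\psi\|$; for the right-hand side, Minkowski's integral inequality applied inside the isometry gives $\|\int_0^T (\int_\R \psi\, d\mu)\, dM\|_{L^2(\P)} \le \|\psi\|$ as well. Approximating $\psi$ by elementary integrands $\psi_n$ with $\|\psi - \psi_n\| \to 0$, the elementary identity together with these two bounds forces the two limits to coincide in $L^2(\P)$, which is the claim on the localised pieces.

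The step I expect to be the genuine obstacle is not any of these estimates but the measurability bookkeeping that makes the left-hand side well posed in the first place. For each fixed $x$ the stochastic integral $\int_0^T \psi(t,x)\, dM_t$ is defined only up to a $\P$-null set, and these null sets may depend on $x$, so before one can integrate in $x$ against $\mu$ one must produce a jointly measurable modification of $(x,\omega) \mapsto \int_0^T \psi(t,x,\omega)\, dM_t$. I would construct this modification explicitly on elementary integrands and then propagate it through the $L^2$-limit, using the norm above together with a Borel--Cantelli argument along a fast-converging subsequence to upgrade $L^2$-convergence to a jointly measurable almost-sure limit. Pinning down this measurable-selection issue cleanly, rather than verifying the Fubini identity itself, is where the real work lies.
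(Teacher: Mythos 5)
The paper offers no proof of this theorem: it is quoted directly from \cite[Theorem 2.2]{veraar_2012_stochastic_fubini} and used as a black box, so there is no in-paper argument to measure your attempt against. Your outline --- elementary integrands $\sum_i\phi_i(x)H_i(t,\omega)$, the mixed norm $\int_\R\bigl(\E\int_0^T|\psi(t,x)|^2\,d\langle M\rangle_t\bigr)^{1/2}d\mu(x)$ dominating both iterated integrals via Minkowski's integral inequality and the It\^o isometry, and the explicit construction of a jointly measurable version of $(x,\omega)\mapsto\int_0^T\psi(t,x,\omega)\,dM_t$ --- is the standard argument and is essentially the proof in the cited source. One point to tighten: stopping times that make $M^{\tau_n}$ and $\langle M\rangle_{\tau_n}$ bounded do not make $\int_\R(\int_0^T|\psi|^2\,d\langle M\rangle)^{1/2}d\mu$ integrable, because $\psi$ itself is unbounded; the localisation must be performed on that random quantity directly, say via $\sigma_n=\inf\lbrace t:\int_\R(\int_0^t|\psi(s,x)|^2\,d\langle M\rangle_s)^{1/2}d\mu(x)>n\rbrace$, and checking that this defines a stopping time already requires the joint-measurability work you correctly identify as the crux.
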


\begin{remark}
	The stochastic Fubini Theorem in \cite{veraar_2012_stochastic_fubini} can be extended to semimartingales under additional constraints. However, for the purpose of this thesis it is only necessary to consider the stochastic Fubini Theorem with respect to a martingale.
\end{remark}

\noindent Using Theorem \ref{theoremstochasticfubini} the associative property of (\ref{e-conv-fmartingale}) is obtained (see \cite{jaber_et_al_2017_affine_volterra_processes}).

\begin{proposition}\cite[Lemma 2.1]{jaber_et_al_2017_affine_volterra_processes}
	Let $K\in L^2_\text{loc}(\R_+; \R^{k \times d})$, $L$ be a $\R^{n\times k}$-valued measure on $\R_+$ of locally bounded variation and $M$ be a $d$-dimensional continuous local martingale, with $\langle M \rangle_u = \int_{0}^{u}a_sds, \, t\ge 0$ for some locally bounded process $a$. Then
	\begin{equation}
		(L*(K*dM))_t = ((L*K)*dM)_t,
	\end{equation}
	$\fa t\ge 0$. If $F\in L_\text{loc}^1(\R_+;\R^{n\times m})$, then setting $L(dt)=Fdt$ we obtain,
	$$(F*(K*dM))_t = ((F*K)*dM)_t.$$
\end{proposition}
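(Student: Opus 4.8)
The plan is to recognise this identity as a swap in the order of a deterministic integration against the measure $L$ and a stochastic integration against $M$, which is precisely what the Stochastic Fubini Theorem (Theorem \ref{theoremstochasticfubini}) licenses. First I would expand both sides using Definitions \ref{d-conv-fmeasure} and \ref{d-conv-fmartingale}. For fixed $t\ge 0$, the left-hand side reads
\begin{equation*}
(L*(K*dM))_t = \int_0^t L(ds)\int_0^{t-s}K(t-s-v)\,dM_v,
\end{equation*}
while the right-hand side reads
\begin{equation*}
((L*K)*dM)_t = \int_0^t\left(\int_0^{t-v}L(ds)\,K(t-v-s)\right)dM_v,
\end{equation*}
so the entire content is to justify exchanging $\int L(ds)$ with $\int dM_v$ in the first expression.

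To cast the iterated integrals into the form of Theorem \ref{theoremstochasticfubini}, I would extend $K$ by zero on $(-\infty,0)$, so the inner stochastic integral can be written over the fixed interval $[0,t]$ as $\int_0^t K(t-s-v)\mathbf{1}_{\{v\le t-s\}}\,dM_v$. Setting $\psi(v,s,\omega)\de K(t-s-v)\mathbf{1}_{\{v\le t-s\}}$, the integrating measure in the theorem is played by the total variation measure $|L|$ on $[0,t]$; since $L$ is matrix-valued one applies the scalar theorem entrywise and controls everything through $|L|$.

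The main step, and the one requiring care, is verifying the integrability hypothesis of Theorem \ref{theoremstochasticfubini}, namely
\begin{equation*}
\int_0^t\left(\int_0^t |\psi(v,s,\omega)|^2\,d\text{tr}\langle M\rangle_v\right)^{1/2}|L|(ds) < \infty, \quad \P\text{-a.s.}
\end{equation*}
Here I would reuse verbatim the estimate from the preceding proposition: since $\langle M\rangle_v=\int_0^v a_r\,dr$ with $a$ locally bounded, one has, uniformly in $s$,
\begin{equation*}
\int_0^t |\psi(v,s,\omega)|^2\,d\text{tr}\langle M\rangle_v \le \underset{0\le v\le t}{\max}|\text{tr}(a_v)|\cdot \|K\|_{L^2([0,t])}^2.
\end{equation*}
Integrating the square root of this bound against $|L|$ yields
\begin{equation*}
\left(\underset{0\le v\le t}{\max}|\text{tr}(a_v)|\right)^{1/2}\|K\|_{L^2([0,t])}\,|L|([0,t]) < \infty,
\end{equation*}
which is finite because $a$ is locally bounded, $K\in L^2_\text{loc}(\R_+;\R^{k\times d})$, and $L$ has locally bounded variation. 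This is the genuine obstacle; once it is cleared the theorem applies mechanically.

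Finally, applying Theorem \ref{theoremstochasticfubini} interchanges the two integrals, and collapsing the indicator $\mathbf{1}_{\{v\le t-s\}}=\mathbf{1}_{\{s\le t-v\}}$ restores the upper limit $t-v$ on the inner integral, giving $\int_0^t (L*K)(t-v)\,dM_v = ((L*K)*dM)_t$ by Definition \ref{d-conv-fmeasure}. This establishes the identity for every $t\ge 0$. For the second assertion I would simply specialise to the absolutely continuous measure $L(dt)=F\,dt$ with $F\in L^1_\text{loc}(\R_+;\R^{n\times m})$, whose total variation on $[0,t]$ is $\int_0^t|F(r)|\,dr<\infty$; the same computation then applies without change and delivers $(F*(K*dM))_t = ((F*K)*dM)_t$.
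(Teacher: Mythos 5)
Your proposal is correct and follows essentially the same route as the paper's proof: both rewrite the inner stochastic integral over the fixed interval $[0,t]$ via an indicator, verify the integrability hypothesis of the stochastic Fubini theorem with the bound $\left(\max_{0\le s\le t}|a_s|\right)^{1/2}\|K\|_{L^2([0,t])}\,|L|([0,t])$, and then interchange the two integrations before collapsing the indicator. Your explicit use of the total variation $|L|$ and the entrywise reduction for the matrix-valued case match the paper's treatment (which works with components $L_{ij}$, $K_{jl}$, $M_l$), so there is nothing substantively different to compare.
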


\noindent This result has been proven in \cite{jaber_et_al_2017_affine_volterra_processes}, however we reformulate to give the reader a taste in the usefulness of the stochastic Fubini Theorem.

\begin{proof} To ease notation let $L=L_{ij}, K=K_{jl}$, and $M=M_l$, then using (\ref{e-conv-fmart}) and (\ref{e-conv-fmartingale}) we have,
	\begin{align*}
	(L*(K*dM))_t &= \int_{0}^{t}(K*dM)(t-u)L(du),\\
	&=  \int_{0}^{t}\left(\int_{0}^{t-u}K(t-u-s)dM_s\right)L(du),\\
	&=  \int_{0}^{t}\left(\int_{0}^{t}\textbf{1}_{\lbrace u<t-s \rbrace}K(t-u-s)dM_s\right)L(du).
	\end{align*} 
	As 
	$$\int_{0}^{t}\left(\int_{0}^{t}\textbf{1}_{\lbrace u<t-s \rbrace}|K(t-u-s)|^2d\langle M \rangle _s\right)^{1/2}L(du)\le \underset{0\le u \le t}{\max}|a_s|^{1/2}\|K\|_{L^2([0,t])}L([0,t]),$$
	is finite $\P$-a.s., then Theorem \ref{theoremstochasticfubini} yields,
	$$(L*(K*dM))_t\int_{0}^{t}\left(\int_{0}^{t}\textbf{1}_{\lbrace u<t-s \rbrace}K(t-u-s)L(du)\right)dM_s=((L*K)*dM)_t.$$
	As this holds $\fa 1\le i\le n, 1\le j \le k, 1\le l \le d$, the desired result is obtained.	
\end{proof}

\noindent We now give a following important assumption on the kernel $K$ which helps derive a number of important results.

\begin{assumption}\cite[Assumption 2.5]{jaber_et_al_2017_affine_volterra_processes}\label{assump1}
	Assume, the kernel $K\in L^2_\text{loc}(\R_+;\R)$ and there exists $\gamma \in (0,2]$ such that $\int_{0}^{h}K(t)^2dt = O(h^\gamma)$ and $\int_{0}^{T}(K(t+h)-K(t))^2dt = O(h^\gamma), \fa 0\le T<\infty$.
\end{assumption}

\noindent Example 2.3 in \cite{jaber_et_al_2017_affine_volterra_processes}, lists a number of kernels which satisfy Assumption \ref{assump1}. Under Assumption \ref{assump1} we have the following useful result.

\begin{proposition}\cite[Lemma 2.4]{jaber_et_al_2017_affine_volterra_processes}\label{plemma24}
	Assume $K$ satisfies Assumption \ref{assump1} and consider a process $X=\lbrace X_t, \F_t \sep 0\le t \le T \rbrace$ defined on the filtered probability space $(\Omega, \F, \lbrace \F_t \rbrace_{0\le t \le T}, \P)$, with $X_t=K*(b dt + dM)_t$, where $\mu$ is an $\lbrace \F_t \rbrace_{0\le t \le T}$-adapted process and $M$ is a continuous local martingale, with $\langle M \rangle_u = \int_{0}^{u}a_sds, \, t\ge 0$ for some $\lbrace \F_t \rbrace_{0\le t \le T}$-adapted process $a$. Let $0\le T<\infty$, and $p>2/\gamma$, be such that $\underset{0 \le t\le T}{\sup}(\E[|a_t|^{p/2}+|b_t|^p])<\infty$. Then, $X$ admits a version which is H\"{o}lder continuous
	\begin{equation}
		|X_t - X_s|\le C | t - s |^\alpha, \quad  \alpha < \gamma/2 - 1/p, \fa t,s\in [0,T]	,
	\end{equation}
	and
	\begin{equation}
	\E \left[\left(|X|_{C^{0,\alpha}([0,T])}\right)^p\right] \de \E \left[\left(\underset{0 \le s<t\le T}{\sup}\frac{|X_t-X_s|}{|t-s|^\alpha}\right)^p\right]\le c \underset{0 \le t\le T}{\sup}(\E[|a_t|^{p/2}+|b_t|^p]),
	\end{equation}
	$\fa \alpha \in [0, \gamma/2 - 1/p)$, where $c$ is a constant that only depends on $p, K,$ and $T$. Moreover, if $a$ and $b$ are locally bounded, then $X$ admits a version which is H\"{o}lder continuous of any order $\alpha<\gamma/2$.
\end{proposition}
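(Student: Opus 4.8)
The plan is to prove the single estimate
\begin{equation*}
\E[|X_t - X_s|^p] \le C\,|t-s|^{\gamma p/2}, \qquad s,t \in [0,T],
\end{equation*}
and then invoke the Kolmogorov--Chentsov continuity theorem, whose quantitative form (via the Garsia--Rodemich--Rumsey inequality) simultaneously produces a H\"{o}lder continuous version of any order $\alpha < \gamma/2 - 1/p$ and the claimed moment bound on the H\"{o}lder seminorm. Since $p > 2/\gamma$ is assumed, the exponent $\gamma p/2$ exceeds $1$, which is exactly what Kolmogorov's theorem requires; the attainable order $\gamma/2 - 1/p$ equals $(\gamma p/2 - 1)/p$, matching the theorem's output. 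The ``moreover'' statement then follows by letting $p \to \infty$ when $a$ and $b$ are locally bounded, so that $\gamma/2 - 1/p \to \gamma/2$.

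First I would split $X = X^{\mathrm{drift}} + X^{\mathrm{mart}}$, with $X^{\mathrm{drift}}_t = \int_0^t K(t-u) b_u\, du$ and $X^{\mathrm{mart}}_t = \int_0^t K(t-u)\, dM_u$, and estimate each increment separately. For $s < t$ the crucial step is to decompose the convolution difference as
\begin{equation*}
X^{\mathrm{mart}}_t - X^{\mathrm{mart}}_s = \int_s^t K(t-u)\, dM_u + \int_0^s [K(t-u) - K(s-u)]\, dM_u,
\end{equation*}
and analogously for the drift. The first ``new-interval'' term is controlled by $\int_0^{t-s} K(v)^2\, dv = O((t-s)^\gamma)$, and the second ``shift'' term by $\int_0^s [K(v + (t-s)) - K(v)]^2\, dv = O((t-s)^\gamma)$ after the substitution $v = s-u$; both bounds come directly from Assumption \ref{assump1}.

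For the martingale increments I would apply the Burkholder--Davis--Gundy inequality to pass to $\E[(\int K^2\, d\langle M \rangle)^{p/2}]$, then use $d\langle M \rangle_u = a_u\, du$ together with Minkowski's integral inequality (valid since $p/2 \ge 1$) to extract the random factor:
\begin{equation*}
\left\| \int_s^t K(t-u)^2 a_u\, du \right\|_{L^{p/2}} \le \int_s^t K(t-u)^2 \|a_u\|_{L^{p/2}}\, du \le \Big(\sup_{0 \le u \le T} \E[|a_u|^{p/2}]\Big)^{2/p} \int_s^t K(t-u)^2\, du.
\end{equation*}
Raising to the power $p/2$ and inserting the kernel bound gives $\E[|\cdot|^p] \le C\,(t-s)^{\gamma p/2}$, and the shift term is handled identically. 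For the drift I would instead use the Cauchy--Schwarz inequality in time, $|\int_s^t K(t-u) b_u\, du| \le (\int_s^t K(t-u)^2\, du)^{1/2} (\int_s^t |b_u|^2\, du)^{1/2}$, followed by Minkowski's inequality on the $b$-integral; the resulting power of $(t-s)$ is at least $\gamma p/2$, so the drift does not worsen the overall exponent.

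Collecting these estimates yields $\E[|X_t - X_s|^p] \le C\,|t-s|^{\gamma p/2}$ with $C$ depending only on $p$, $K$, $T$ and $\sup_{0\le u\le T}(\E[|a_u|^{p/2}] + \E[|b_u|^p])$, which is the required form. The main obstacle I anticipate is not any individual estimate but the bookkeeping of the convolution increment: one must split it cleanly into the new-interval and shift contributions so that both kernel regularity estimates in Assumption \ref{assump1} can be brought to bear, and then combine the Burkholder--Davis--Gundy inequality with Minkowski's integral inequality to pull out the random integrands $a$ and $b$ without degrading the power of $(t-s)$. Obtaining the quantitative H\"{o}lder-seminorm bound, rather than mere existence of a continuous version, is the subtler point, and is precisely where the Garsia--Rodemich--Rumsey form of Kolmogorov's theorem is needed.
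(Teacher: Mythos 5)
Your argument is correct and is essentially the proof given in the cited source: the paper itself only defers to Lemma 2.4 of Abi Jaber, Larsson and Pulido, whose proof uses exactly your decomposition of the convolution increment into a new-interval term and a shift term, the Burkholder--Davis--Gundy and Jensen/Minkowski inequalities to reach $\E[|X_t-X_s|^p]\le C|t-s|^{\gamma p/2}$, and the quantitative Kolmogorov continuity theorem for the H\"older-seminorm bound, with a localisation step for the ``moreover'' clause. No substantive differences to report.
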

\begin{proof} See Lemma 2.4 in \cite{jaber_et_al_2017_affine_volterra_processes}.
\end{proof}
 
\subsection{Resolvents and some examples}
Let us now provide a brief introduction to resolvents, as their definitions are used throughout.
\begin{definition}[Resolvent of the first kind]\cite[Chapter 5, \S 5.1]{textbook_gustaf_et_al_1990_volterra, jaber_et_al_2017_affine_volterra_processes}\label{dres1} Let $K \in L_\text{loc}^1(\R^+; \R^{d\times d})$ and the locally bounded measure $L\in M_\text{loc}^1(\R^+; \R^{d\times d})$ satisfy
	\begin{equation}
		K*L = L*K \equiv id,
	\end{equation}
	where $id$ is the $d$-dimension identity matrix. Then $L$ is said to be resolvent of the first kind of $K$.
\end{definition}

\begin{definition}[Resolvent of the second kind]\cite[\S 2.11]{jaber_et_al_2017_affine_volterra_processes}\label{dres2}	Let $K \in L_\text{loc}^1(\R^+; \R^{d\times d})$ and $R \in L_\text{loc}^1(\R^+; \R^{d\times d})$ satisfy
	\begin{equation}
	K*R = R*K = K-R,
	\end{equation}
	then $R$ is said to be resolvent, or resolvent of the second kind of $K$.
\end{definition}

\noindent Further properties of these definitions can be found in \cite{textbook_gustaf_et_al_1990_volterra} and
\cite{jaber_et_al_2017_affine_volterra_processes}. Commonly used kernels are
summarised in the table below, which is also available in \cite{jaber_et_al_2017_affine_volterra_processes}.

\begin{table}[h]\label{table_kernels}
	\begin{tabular}{c c c c} 
		\hline
		 & $K(t)$  & $L(dt)$ & $R(t)$ \\ \hline 
		
		\rule{0pt}{1.2\normalbaselineskip} Constant & c  & $c^{-1}\delta_0(dt)$ & $ce^{-ct}$\\ 
		
		\rule{0pt}{1.2\normalbaselineskip} Fractional & $c\frac{t^{\alpha-1}}{\Gamma(\alpha)}$  & $c^{-1}\frac{t^{-\alpha}}{\Gamma(1-\alpha)}dt$ & $ct^{\alpha-1}E_{\alpha,\alpha}(-ct^\alpha)$\\
		
		\rule{0pt}{1.2\normalbaselineskip} Exponential & $ce^{-\lambda t}$  & $c^{-1}(\delta_0(dt)+\lambda dt)$ & $ce^{-\lambda t}e^{-ct}$\\	
		
		\rule{0pt}{1.2\normalbaselineskip} Gamma & $ce^{-\lambda t}\frac{t^{-\alpha-1}}{\Gamma(\alpha)}$  & $c^{-1}\frac{1}{\Gamma(1-\alpha)}e^{-\lambda t}\frac{d}{dt}(t^{-\alpha}*e^{\lambda t})(t)dt$ & $ce^{-\lambda t}t^{-\alpha-1}E_{\alpha,\alpha}(-ct^\alpha)$\\	[1ex] \hline 
		\rule{0pt}{.5\normalbaselineskip}	
	\end{tabular}
\caption{Examples of kernels $K$ and their corresponding first and second kind resolvents $L$ and $R$ respectively,  where $E_{\alpha, \beta}(u)=\sum_{n=0}^{\infty}\frac{u^n}{\Gamma(\alpha n + \beta)}$ is the Mittag-Leffler function, some of its associated properties can be found in \cite{euch_rosenbaum_2019_characteric_function_rough_heston_models}.}
\end{table} 
\vskip -0.5cm

\noindent These definitions outlined provide the necessary building blocks to give meaningful definition to the Volterra Heston model. In the next section affine Volterra processes are introduced, and as we will see, these processes provide a useful framework to establish a number of models, with the Volterra Heston model being a specific case.


\section{Affine Volterra processes}

Fix $k \in \NN$ and $K\in L_\text{loc}^2(\R_+;\R^{k\times k})$, and let $a:\R^k \to \R^{k\times k}$ and $b:\R^k \to \R^k$ be affine maps given by
\begin{equation}\label{e-affine-maps}
\begin{split}
a(x_1, x_2, \dots, x_k) &= A^0 + A^1x_1 + \dots + A^kx_k,\\
b(x_1, x_2, \dotsc, x_k) &= b^0 + x_1b^1 + \dots + x_kb^k,
\end{split}
\end{equation}
where $A^i \in \R^{k\times k}$ are $k$-dimension symmetric matrices and $b\in \R^k, \, i=0,1,\dots,k$. To ease notation define the $k \times k$ matrix 
\begin{equation}
	B= \begin{bmatrix} b^1 & b^2 & \dots & b^k\end{bmatrix},
\end{equation} 
and the row vector
\begin{equation}
A(\nu) = \begin{bmatrix} \nu A^1\nu^T & \nu A^2\nu^T & \dots & \nu A^k\nu^T\end{bmatrix},
\end{equation}
where $\nu\in (\C^k)^*$ is a row vector. Let us denote $S$ as the state space defined on $\R^k$ and assume that $a(x)$ is positive semidefinite $\fa x \in S$. Then, $a(x)$ admits the following decomposition: let $\sigma : \R^k \to \R^{k \times d}$ be continuous and satisfy 
\begin{equation}\label{easigdecomposition}
\sigma(x) \sigma(x)^T=a(x) \fa x\in S.
\end{equation}

\noindent This construction naturally leads to the following definition of an affine Volterra process (see \cite{jaber_et_al_2017_affine_volterra_processes}).

\begin{definition}[Affine Volterra process]\cite[Definition 4.1]{jaber_et_al_2017_affine_volterra_processes}\label{d-affine-vol}
 An affine Volterra process defined on the probability space $(\Omega, \F_t, \P)$ is a continuous $S$-valued solution $X=\lbrace X_t,\F_t \sep 0\le t < \infty\rbrace$ of
 \begin{equation}\label{affinevolterraprocess}
 	X_t = X_0 + \int_{0}^{t}K(t-u)b(X_u)du + \int_{0}^{t}K(t-u)\sigma(X_u)dW_u,
 \end{equation}
 with $a=\sigma \sigma^T$ and $b$ given by (\ref{e-affine-maps}), $W=\lbrace W_t, \F_t \sep 0\le t < \infty \rbrace$ is a $d$-dimensional Brownian motion, and $X_0$ is a deterministic value in $S$. 
\end{definition}

\noindent To ease notation, we can write (\ref{affinevolterraprocess}) more compactly as
$$X=X_0 + K*(b(X)dt + \sigma(X)dW).$$
Moment bounds for any solution under definition \ref{d-affine-vol} is given by the following result in \cite{jaber_et_al_2017_affine_volterra_processes}.

\begin{proposition}\cite[Lemma 3.1]{jaber_et_al_2017_affine_volterra_processes}\label{plemma31} Assume $b(\cdot)$ and $\sigma(\cdot)$ are continuous and satisfy the linear growth condition
	\begin{equation}
	|b(x)| + |\sigma(x)|\le c_{LG}(1+|x|), \quad x\in \R^k,
	\end{equation}
	for some constant $c_{LG}$. Let $X$ be a continuous solution of (\ref{affinevolterraprocess}). Then for any $p\ge2$ and $T<\infty$ one has
	$$\underset{t\le T}{\sup}\E[|X_t|^p]\le c,$$
	for some constant $c$ that depends only on $|X_0|,\, K|_{[0,T]},\, c_{LG}, \,p$ and $T$.
\end{proposition}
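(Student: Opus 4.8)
The plan is to derive a self-referential convolution inequality for the function $f(t)\de\E[|X_t|^p]$ and then close it with a convolution form of Gronwall's lemma. Because the stochastic integral in (\ref{affinevolterraprocess}) is only a local martingale and the $p$-th moments are not known to be finite a priori, the first step is to localise: for $n\in\NN$ set $\tau_n\de\inf\{t\ge0:|X_t|\ge n\}$ and run every estimate below for the stopped process $X_{t\wedge\tau_n}$, whose stochastic-integral part, being bounded in the integrand, is a genuine $L^2$-martingale to which the Burkholder--Davis--Gundy (BDG) inequality applies. The resulting bounds will be uniform in $n$, after which Fatou's lemma transfers the conclusion to $X$ itself. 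Throughout I would use that the standing assumption $K\in L^2_{\mathrm{loc}}(\R_+)$ forces both $\|K\|_{L^1([0,T])}<\infty$ and $\|K\|_{L^2([0,T])}<\infty$ on the bounded interval $[0,T]$.

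Applying the elementary bound $|a+b+c|^p\le 3^{p-1}(|a|^p+|b|^p+|c|^p)$ to (\ref{affinevolterraprocess}) separates the contributions of $X_0$, the drift convolution, and the stochastic convolution. For the drift I would use Jensen's inequality against the finite measure $|K(t-u)|\,du$ on $[0,t]$, which gives $\big|\int_0^t K(t-u)b(X_u)\,du\big|^p\le\|K\|_{L^1([0,t])}^{p-1}\int_0^t|K(t-u)|\,|b(X_u)|^p\,du$. For the stochastic term, BDG bounds its $p$-th moment by $c_p\,\E\big[(\int_0^t|K(t-u)|^2|\sigma(X_u)|^2\,du)^{p/2}\big]$, and a second Jensen step (valid since $x\mapsto x^{p/2}$ is convex for $p\ge2$), now against $|K(t-u)|^2\,du$, produces a bound of the form $\|K\|_{L^2([0,t])}^{p-2}\int_0^t|K(t-u)|^2\,|\sigma(X_u)|^p\,du$. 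Substituting the linear-growth hypothesis $|b(x)|+|\sigma(x)|\le c_{LG}(1+|x|)$ together with $(1+|x|)^p\le2^{p-1}(1+|x|^p)$ converts both right-hand sides, after taking expectations, into integrals of $1+\E|X_u|^p$ against the $L^1$ kernel $g(s)\de C(|K(s)|+|K(s)|^2)$, where $C$ absorbs $c_p$, $c_{LG}$, $p$ and the $L^1$/$L^2$ norms of $K$.

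These estimates combine into an inequality of convolution type $f(t)\le A+(g*(1+f))(t)$, with $A$ depending only on $|X_0|$, $p$, $c_{LG}$ and $K|_{[0,T]}$, and $g\in L^1([0,T])$. The crucial point, and the main obstacle, is that because $K$ may be singular (for instance the fractional kernel $t^{H-1/2}$), the classical pointwise Gronwall lemma does not apply; instead I would invoke the generalised Gronwall inequality for convolution kernels, whose proof iterates the inequality and sums the convolution powers $g^{*n}$ into the resolvent of the second kind $R$ of $g$ (Definition \ref{dres2}), yielding $f(t)\le h(t)+(R*h)(t)$ with $h(t)=A+(g*1)(t)$ bounded on $[0,T]$. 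Since $g\in L^1_{\mathrm{loc}}$ guarantees $R\in L^1_{\mathrm{loc}}$, this bound is finite and uniform in $t\le T$. Carrying the whole argument out for the localised quantities $f_n(t)\de\E[|X_{t\wedge\tau_n}|^p]$ gives a bound independent of $n$, and letting $n\to\infty$ via Fatou delivers $\sup_{t\le T}\E[|X_t|^p]\le c$ with the advertised dependence. The two places demanding care are the a priori finiteness needed to legitimise the Gronwall step, which the localisation supplies, and checking that the resolvent argument genuinely yields a constant depending only on $K|_{[0,T]}$, $c_{LG}$, $p$ and $T$.
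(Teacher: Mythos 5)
Note first that the thesis does not actually prove this proposition itself: its ``proof'' is a pointer to Lemma 3.1 of the cited reference. Your outline reproduces, essentially step for step, the argument used there: the $3^{p-1}$ splitting, Jensen against the finite measures $|K(t-u)|\,du$ and $|K(t-u)|^2\,du$, BDG, the linear growth hypothesis, a convolution-type Gronwall lemma closed via the resolvent $\sum_{n\ge1}g^{*n}\in L^1([0,T])$ (summable after an exponential reweighting of the $L^1$ norm), and a limiting argument. So the route is the right one and the constant tracking is correct.

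The one step that fails as literally written is the localisation. For a Volterra equation the map $t\mapsto\int_0^tK(t-u)\sigma(X_u)\,dW_u$ is \emph{not} a local martingale in $t$ --- the integrand depends on the upper limit through $K(t-u)$ --- so the ``stochastic-integral part'' of the stopped process $X_{t\wedge\tau_n}$ is not $M_{t\wedge\tau_n}$ for any single martingale $M$ and BDG cannot be applied to it directly; moreover $X_{t\wedge\tau_n}$ involves the kernel evaluated at the random argument $t\wedge\tau_n-u$, so the quantity $f_n(t)=\E[|X_{t\wedge\tau_n}|^p]$ does not satisfy the convolution inequality you derive. The standard repair (and the one used in the reference) is to localise with an indicator rather than by stopping: set $f^n(t)\de\E\bigl[|X_t|^p\mathbf{1}_{\{t\le\tau_n\}}\bigr]\le n^p$, observe that on the event $\{t\le\tau_n\}$ one may insert $\mathbf{1}_{\{u\le\tau_n\}}$ into both integrands of (\ref{affinevolterraprocess}) without changing their values, and then apply BDG for each \emph{fixed} $t$ to the genuine $L^2$-martingale $s\mapsto\int_0^sK(t-u)\sigma(X_u)\mathbf{1}_{\{u\le\tau_n\}}\,dW_u$, $s\le t$, whose integrand is dominated by $c_{LG}(1+n)|K(t-u)|\in L^2([0,t])$. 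This yields $f^n\le A+g*(1+f^n)$ with $A$ and $g$ exactly as you describe and independent of $n$; your resolvent--Gronwall step and the passage to the limit (monotone convergence, since $\tau_n>T$ eventually along each continuous path) then go through unchanged. With that substitution the proof is complete.
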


\begin{proof}
	See lemma 3.1 in \cite{jaber_et_al_2017_affine_volterra_processes}.
\end{proof}

\begin{remark}\cite[Remark 3.2]{jaber_et_al_2017_affine_volterra_processes}\label{r32}
	From the construction of the proof of Proposition \ref{plemma31} in \cite{jaber_et_al_2017_affine_volterra_processes}, we have that Proposition \ref{plemma31} also holds for state and time-dependent predictable coefficients $$|b(t,x,\omega)|+|\sigma(t,x,\omega)|\le c_\text{LG}(1+|x|), \quad x\in \R^k, \, t\in \R_+, \, \omega \in \Omega,$$ 
	for some constant $c_\text{LG}$.
\end{remark}

\noindent We conclude this section with the following powerful theorem. Theorem 4.3 in \cite{jaber_et_al_2017_affine_volterra_processes} is used extensively in the construction of a number of proofs. Moreover, we will see the importance this result has in the construction of the Ansatz in the martingale optimality principle (see Section \ref{sconstructionansatz}).

\begin{theorem}\cite[Theorem 4.3]{jaber_et_al_2017_affine_volterra_processes}\label{t43} Let $X$ be an affine Volterra process, as in definition \ref{d-affine-vol}, and let $T<\infty$, $\nu \in (\C^k)^*$, and $f\in L^1([0,T];(\C^k)^*)$. Assume $\varphi \in L^2([0,T]; (\C^k)^*)$ solves the Riccati-Volterra equation
	\begin{equation}\label{ericcativolterra43}
	\varphi = \nu K + (f + \varphi B + \frac{1}{2}A(\varphi))*K.
	\end{equation}
	Then the process $Y=\lbrace Y_t \sep 0\le t \le T\rbrace$ defined by
	\begin{equation}\label{e431}
		Y_t = Y_0 + \int_{0}^{t}\varphi(T-u)\sigma(X_u)dW_u - \frac{1}{2}\int_{0}^{t}\varphi(T-u)a(X_u)\varphi(T-u)^Tdu,
	\end{equation}
	and	
	\begin{equation}\label{e432}
	Y_0 = \nu X_0 + \int_{0}^{T}\left(f(u)X_0 + \varphi(u)b(X_0)+\frac{1}{2}\varphi(u)a(X_0)\varphi(u)^T\right)du,
	\end{equation}
	satisfies
	\begin{equation}\label{e433}
	Y_t = \E[\nu X_T + (f*X)_T|\F_t] + \frac{1}{2}\int_{t}^{T}\varphi(T-u)a(\E[X_u|\F_t])\varphi(T-u)^Tdu, 
	\end{equation}
	$\fa t \in [0,T]$. Then, the process $\exp(Y)=\lbrace \exp(Y_t) \sep 0 \le t \le T\rbrace$ is a local martingale, and if it is a true martingale, $\exp(Y)$ has the following exponential-affine transform representation
	\begin{equation}\label{e434}
	\E[\exp(\nu X_T + (f*X)_T)|\F_t]=\exp(Y_t), \quad 0\le t \le T.
	\end{equation} 
\end{theorem}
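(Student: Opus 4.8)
The plan is to show that $\exp(Y_t)$ is a local martingale by verifying that $Y_t$, as defined by \eqref{e431}–\eqref{e432}, coincides with the conditional-expectation expression \eqref{e433}, and then to apply an It\^o/stochastic-exponential argument. First I would establish the key identity \eqref{e433}. The strategy is to compute the conditional expectation $\E[\nu X_T + (f*X)_T \mid \F_t]$ directly from the defining Volterra equation \eqref{affinevolterraprocess}. Writing $X_T = X_0 + (K*(b(X)dt + \sigma(X)dW))_T$ and splitting each convolution integral at time $t$, the part up to $t$ is $\F_t$-measurable while the part from $t$ to $T$ must be handled using the conditional mean of $X_u$ for $u>t$. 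Here the affine structure is essential: because $b$ and $a=\sigma\sigma^T$ are affine in $x$, the conditional expectation $\E[X_u\mid\F_t]$ itself solves a linear (deterministic-coefficient) Volterra equation, which lets me express $\E[\nu X_T + (f*X)_T\mid\F_t]$ as a deterministic functional of $\{X_s : s\le t\}$ plus the Riccati-Volterra contribution. The Riccati-Volterra equation \eqref{ericcativolterra43} is precisely the consistency condition that makes the quadratic correction term $\tfrac12\int_t^T \varphi(T-u)a(\E[X_u\mid\F_t])\varphi(T-u)^T\,du$ match up; I would substitute \eqref{ericcativolterra43} and use the associativity of convolution (the proposition following Definition \ref{d-conv-fmartingale}) and the Stochastic Fubini Theorem \ref{theoremstochasticfubini} to interchange the $dW$ and $du$ integrations against $K$.

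Once \eqref{e433} is in hand, I would differentiate $Y_t$ in \eqref{e433} to read off its semimartingale decomposition. The point is that the right-hand side of \eqref{e431} exhibits $Y_t$ as an It\^o process whose drift is exactly $-\tfrac12\varphi(T-t)a(X_t)\varphi(T-t)^T$ and whose martingale part has quadratic variation $d\langle Y\rangle_t = \varphi(T-t)a(X_t)\varphi(T-t)^T\,dt$. Applying It\^o's formula to $\exp(Y_t)$ gives
\begin{equation*}
d\exp(Y_t) = \exp(Y_t)\Big(dY_t + \tfrac12 d\langle Y\rangle_t\Big) = \exp(Y_t)\,\varphi(T-t)\sigma(X_t)\,dW_t,
\end{equation*}
since the finite-variation drift in $dY_t$ cancels exactly against $\tfrac12 d\langle Y\rangle_t$. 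This cancellation is the whole purpose of the quadratic term in \eqref{e431}, and it shows $\exp(Y)$ is a stochastic integral against $W$, hence a local martingale.

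Finally, the exponential-affine representation \eqref{e434} follows from the local-martingale property together with the hypothesis that $\exp(Y)$ is a true martingale: the martingale property gives $\exp(Y_t) = \E[\exp(Y_T)\mid\F_t]$, and from \eqref{e431} evaluated at $t=T$ combined with the identity \eqref{e433} at $t=T$ one checks that $Y_T = \nu X_T + (f*X)_T$, so $\E[\exp(\nu X_T + (f*X)_T)\mid\F_t] = \exp(Y_t)$.

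I expect the main obstacle to be the proof of the identity \eqref{e433}, specifically the step where the conditional expectation of the future convolution integrals is computed. The delicate part is justifying the interchange of conditional expectation with the stochastic and Lebesgue convolution integrals and verifying that the drift and quadratic-variation corrections organise themselves exactly into the Riccati-Volterra structure \eqref{ericcativolterra43}; this requires careful use of the Stochastic Fubini Theorem \ref{theoremstochasticfubini} (checking its integrability hypothesis via the moment bounds of Proposition \ref{plemma31}) and the associativity of convolution, and it is where the affine assumption on $a$ and $b$ does all the work. By contrast, the It\^o computation yielding the local-martingale property is routine once \eqref{e433} is established.
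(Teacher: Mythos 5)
Your outline is correct and follows essentially the same route as the source: the thesis states Theorem \ref{t43} without reproducing a proof, deferring to Jaber, Larsson and Pulido, whose argument is exactly as you describe --- establish the identity (\ref{e433}) by computing $\E[\nu X_T + (f*X)_T\mid\F_t]$ from the linear Volterra equation satisfied by the conditional mean, with the Riccati--Volterra equation (\ref{ericcativolterra43}) providing the consistency condition and the stochastic Fubini theorem justifying the interchanges, then cancel the drift against $\tfrac12\,d\langle Y\rangle_t$ in It\^o's formula so that $\exp(Y)$ becomes a stochastic integral against $W$. The final step is also right, noting that only (\ref{e433}) at $t=T$ is needed to get $Y_T=\nu X_T+(f*X)_T$, since the residual integral there vanishes.
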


\noindent Using these results we can now introduce and give a meaningful definition to the \textit{Volterra Heston model} under the financial market model defined in Section \ref{sfinancialmarketandoptimisationproblem}. 


\section{The Volterra Heston model}

Under the financial market model defined in Section \ref{sfinancialmarketandoptimisationproblem}, let the variance process  $V$ be given by
\begin{equation}\label{evolterraheston}
V_t = v_0 + \kappa \int_{0}^{t}K(t-u)(\phi-V_u)du + \int_{0}^{t}K(t-u)\sigma \sqrt{V_u}dB_u,
\end{equation}
where the kernel $K$ satisfies Assumption \ref{assump2}, $B$ is defined by (\ref{eBt}), and parameters $(v_0, \kappa, \phi, \sigma)\in \R_+$. Recall that the stock price process $S_1$ defined by (\ref{estockprocess}), reads
$$dS_{1,t} = S_{1,t}(r_t + \theta V_t)dt + S_{1,t} \sqrt{V_t}dW_{1,t}, \quad S_{1,0} = s_{1,0} > 0,$$
and since the process $A_t=\int_{0}^{t}\left(r_u+\theta V_u-\frac{1}{2}V_u^2\right)du + \int_{0}^{t}\sqrt{V_u}dW_{1,u}$ is a semimartingale, by It\^o's lemma the log-price satisfies
$$\log S_{1,t} = \log S_{1,0} + \int_{0}^{t}\left(r_u+\theta V_u-\frac{1}{2}V_u\right)du + \int_{0}^{t}\sqrt{V_u}dW_{1,u}.$$
Now consider an affine process with $k=2$, and state space $\R \times \R_+$, from the above definitions it's clear that $X = (\log S_1, V )$ is indeed an affine Volterra process with diagonal kernel $\text{diag}(1,K)$ and coefficients, $\sigma(\cdot)$ in (\ref{affinevolterraprocess}) and $b(\cdot)$, $a(\cdot)$  in (\ref{e-affine-maps}) given by
\begin{align*}
 b^0 &= \begin{bmatrix} r_u \\ \kappa \phi \end{bmatrix}, \quad B=\begin{bmatrix} 0 & \theta - \frac{1}{2} \\ 0& -\kappa \end{bmatrix},\\
 \sigma(x) &= \begin{bmatrix} \sqrt{x} & 0 \\ \rho \sigma \sqrt{x} & \sqrt{1-\rho^2}\sigma \sqrt{x} \end{bmatrix}.
 \intertext{Then from (\ref{easigdecomposition}) we obtain}
  A^0 &= A^1 = 0, \quad A^2=\begin{bmatrix} 1 & \rho \sigma \\ \rho \sigma & \sigma^2 \end{bmatrix}.
\end{align*}
Also, Riccati-Volterra equation (\ref{ericcativolterra43}) takes the form
\begin{equation}
\begin{split}
\begin{bmatrix} \varphi_1 & \varphi_2 \end{bmatrix}&=\begin{bmatrix} \nu_1 & \nu_2 \end{bmatrix}\begin{bmatrix} 1 & 0 \\ 0& K \end{bmatrix}+\bigg(\begin{bmatrix} f_1 & f_2 \end{bmatrix}+\begin{bmatrix} \varphi_1 & \varphi_2 \end{bmatrix}\begin{bmatrix} 0 & \theta - \frac{1}{2} \\ 0& -\kappa \end{bmatrix}\\
&+\frac{1}{2}\begin{bmatrix} 0 & \begin{bmatrix} \varphi_1 & \varphi_2 \end{bmatrix} \begin{bmatrix} 1 & \rho \sigma \\ \rho \sigma & \sigma^2 \end{bmatrix}\begin{bmatrix} \varphi_1 \\ \varphi_2 \end{bmatrix}\end{bmatrix}\bigg)*\begin{bmatrix} 1 & 0 \\ 0& K \end{bmatrix},
\end{split}
\end{equation}
simplifying we get
\begin{align}
\varphi_1 & = \nu_1 + f_1, \label{evarphi1}\\
\varphi_2 & = \nu_2 K + \left(f_2 + (\theta -\frac{1}{2})\varphi_1-\kappa \varphi_2 +\frac{1}{2}(\varphi_1^2 + \sigma^2 \varphi_2^2 + 2\rho \sigma \varphi_1 \varphi_2)\right) \label{evarphi2}.
\end{align}

\noindent Let us now give an important assumption on the kernel $K$ which will be used throughout this chapter.

\begin{assumption}\cite[Assumption 2.5]{jaber_et_al_2017_affine_volterra_processes}\label{assump2}
	Assume, the kernel $K$ satisfies Assumption \ref{assump1} and is strictly positive and completely monotone, meaning
	$$(-1)^kK^{(k)}(t)\ge 0, \quad \fa t>0, \text{and $k \in \NN_+$}.$$
\end{assumption}

\begin{remark}
From Example 2.3 and 3.6 in \cite{jaber_et_al_2017_affine_volterra_processes} we have that the fractional kernel
$$K(t)=\frac{ct^{\alpha-1}}{\Gamma(\alpha)}$$
satisfies Assumption \ref{assump2}. Therefore, by letting $K$ in (\ref{evolterraheston}) be defined as the fractional kernel, we obtain the rough volatility model studied in \cite{euch_rosenbaum_2019_characteric_function_rough_heston_models}. 
\end{remark}

\noindent We now provide a crucial theorem in \cite{jaber_et_al_2017_affine_volterra_processes} which allows the use of Volterra Heston models to be used to define the financial market model.

\begin{theorem}\cite[Theorem 6.1]{jaber_et_al_2017_affine_volterra_processes}\label{t61}
	Assume $K$ satisfies assumption \ref{assump2}, then the stochastic Volterra equation
	 $$X_t = X_0 + \int_{0}^{t}K(t-u)b(X_u)du + \int_{0}^{t}K(t-u)\sigma \sqrt{X_u}dB_u,$$
	 has a unique in law $\R_+$-valued continuous weak solution $X$ for any initial condition $X_0\in \R_+$. The paths of $X$ are H\"{o}lder continuous of any order less than $\gamma/2$, where $\gamma$ is the constant associated with $K$ in assumption \ref{assump1}.	
\end{theorem}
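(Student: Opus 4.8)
The plan is to prove the three assertions separately: existence of a continuous, $\R_+$-valued weak solution, uniqueness of its law, and the Hölder regularity of its paths. Since the diffusion coefficient $x \mapsto \sigma\sqrt{x}$ fails to be Lipschitz (and is only defined) on $\R_+$, I would build existence through a regularisation-and-compactness argument rather than a direct fixed point. First I would replace $\sqrt{\,\cdot\,}$ by a sequence of nonnegative, globally Lipschitz approximations $g_n$ obtained by piecewise-linearising near the origin, so that $g_n \equiv \sqrt{\,\cdot\,}$ on $[1/n,\infty)$ and $g_n(x)\to \sqrt{x}$ for every $x\ge 0$. For each fixed $n$ the Volterra equation with coefficient $\sigma g_n$ has Lipschitz data, so a standard Picard iteration in $L^2$ yields a solution $X^n$ on $[0,T]$.

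For the limit I would invoke the a priori estimates already available. Proposition \ref{plemma31}, together with Remark \ref{r32} for time- and state-dependent coefficients, gives a bound $\sup_n \sup_{t\le T}\E[|X^n_t|^p] \le c$ uniform in $n$ for each $p\ge 2$. Writing $X^n = X_0 + K*(b(X^n)\,dt + \sigma g_n(X^n)\,dB)$ and feeding these moment bounds into Proposition \ref{plemma24} produces a uniform modulus estimate $\E[(|X^n|_{C^{0,\alpha}})^p]\le c$ for every $\alpha < \gamma/2 - 1/p$, with $\gamma$ the constant of Assumption \ref{assump1}. The moment and modulus bounds give tightness of the laws of $(X^n)$ in $C([0,T];\R)$ via the Kolmogorov criterion; by Prokhorov and Skorokhod I would pass to a subsequence realised on one probability space with $X^n \to X$ uniformly, almost surely. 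Continuity of $b$ and of $\sqrt{\,\cdot\,}$ on $\R_+$, the moment bounds (for uniform integrability), and convergence of the convolution integrals then let me verify that the limit $X$ solves the stated equation in the weak sense.

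The crux is showing that the limit is genuinely $\R_+$-valued, and this is exactly where Assumption \ref{assump2} (strict positivity and complete monotonicity of $K$) is indispensable: Assumption \ref{assump1} alone does not suffice. Complete monotonicity gives the representation $K(t)=\int_0^\infty e^{-xt}\,\mu(dx)$ for a positive measure $\mu$, equivalently a nonnegative, nonincreasing resolvent of the first kind, and this is the structural fact preventing the memory term from driving the solution negative. Concretely I would show each $X^n$ (or a variant whose coefficient is honestly supported on $\R_+$) stays nonnegative by a Markovian-lift or comparison argument built on this representation, so that nonnegativity is inherited by the almost-sure limit $X$; once $X\ge 0$ the coefficient $\sigma\sqrt{X}$ is unambiguous and $g_n(X)\to\sqrt{X}$. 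I expect this nonnegativity step to be the main obstacle, because the degeneracy of the diffusion at $0$ interacts with the non-Markovian memory, and it is precisely the completely monotone structure that must be exploited to close it.

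Uniqueness in law I would obtain from the exponential-affine transform of Theorem \ref{t43}. For $(\nu,f)$ in a suitable class (with purely imaginary or sign-constrained arguments) I would first solve the associated Riccati--Volterra equation (\ref{ericcativolterra43}), showing it admits a unique global, nonexploding $L^2$ solution $\varphi$, and then verify that the local martingale $\exp(Y)$ of Theorem \ref{t43} is a true martingale using the uniform moment bounds. Evaluating formula (\ref{e434}) at $t=0$ then expresses $\E[\exp(\nu X_T + (f*X)_T)]$ purely in terms of $(\nu,f,K,X_0)$. Letting $f$ range over a rich class of integrable functions recovers the joint Fourier--Laplace functional of $(X_T, (f*X)_T)$, which, as $f$ varies, determines every finite-dimensional distribution of the continuous path $X$; hence any two weak solutions share all finite-dimensional distributions and, by continuity, the same law. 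Finally the regularity claim is immediate: applying Proposition \ref{plemma24} to the solution $X$ itself, with the moment bounds of Proposition \ref{plemma31}, gives Hölder continuity of order $\alpha < \gamma/2 - 1/p$ for every $p$, and letting $p\to\infty$ yields paths that are Hölder continuous of every order less than $\gamma/2$, as asserted.
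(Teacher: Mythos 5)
The thesis does not prove Theorem \ref{t61} --- it is quoted from \cite{jaber_et_al_2017_affine_volterra_processes} without proof --- so your proposal can only be measured against the argument in that reference. Your overall architecture matches it: Lipschitz regularisation of the square root plus the uniform moment bounds of Proposition \ref{plemma31} (with Remark \ref{r32}) and the H\"older estimate of Proposition \ref{plemma24} to get tightness and a continuous weak limit; the exponential-affine transform of Theorem \ref{t43} together with the Riccati--Volterra equation (\ref{ericcativolterra43}) to pin down the Fourier--Laplace functionals and hence the law; and Proposition \ref{plemma24} again for the path regularity. Two secondary points deserve mention: the global solvability of the Riccati--Volterra equation on the relevant class of arguments (real part of the appropriate components nonpositive) is itself a lemma that must be proved, not assumed; and the cleanest route to the true-martingale property of $\exp(Y)$ is that $\mathrm{Re}(\varphi)\le 0$ and $X\ge 0$ make $\mathrm{Re}(Y)$ bounded above, so that $\exp(Y)$ is a bounded local martingale --- the ``uniform moment bounds'' you invoke would instead require a Novikov-type localisation as in Proposition \ref{pjaberexpmartingale}.

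The genuine gap is the $\R_+$-invariance step, and it is not peripheral: the theorem's content is precisely that the solution stays nonnegative so that $\sqrt{X_u}$ is well defined, and you acknowledge this is ``the main obstacle'' without closing it. Naming complete monotonicity and the representation $K(t)=\int_0^\infty e^{-xt}\mu(dx)$ is the right instinct, but ``a Markovian-lift or comparison argument'' is a placeholder, not a proof; comparison theorems are exactly the tool that breaks down here because the convolution memory destroys the local-in-time structure they rely on. In the reference this step is a self-contained theorem (their Theorem 3.6): one first extends the coefficients so that the diffusion coefficient vanishes and the drift is nonnegative on $(-\infty,0]$ (the hypothesis $b(0)\ge 0$, which your sketch never mentions, is essential), and then runs a first-passage argument using the resolvent of the first kind $L$ of $K$ --- specifically that $L$ is nonnegative and that $s\mapsto L([s,s+t])$ is nonincreasing, both consequences of complete monotonicity --- to contradict the existence of a first excursion below zero. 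Without supplying that argument, or at least its key identity relating $L*(X-X_0)$ to the driving semimartingale, the proposal establishes existence of a continuous $\R$-valued limit but not of an $\R_+$-valued solution of the stated equation, and the uniqueness and regularity claims (which presuppose $X\ge 0$) are left hanging with it.
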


\noindent From this theorem we obtain the main result of this section.

\begin{theorem}\cite[Theorem 7.1]{jaber_et_al_2017_affine_volterra_processes}
	Assume $K$ satisfies assumption \ref{assump2}. The stochastic Volterra equation (\ref{evolterraheston})-(\ref{estockprocess}) has a unique in law $\R \times \R_+$-valued continuous weak solution $(\log S, V)$ for any initial condition $(\log S_0,V_0)\in \R \times \R_+$. The paths of $V$ are H\"{o}lder continuous of any order less than $\gamma/2$, where $\gamma$ is the constant associated with $K$ in assumption \ref{assump1}.	
\end{theorem}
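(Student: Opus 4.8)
The plan is to exploit the \emph{triangular} structure of the system: the variance equation (\ref{evolterraheston}) is autonomous in $V$ — it does not involve the stock — so that $(\log S, V)$ can be constructed by first solving for $V$ and then recovering $\log S$ by direct integration. Concretely, applying Theorem \ref{t61} with the strictly positive, completely monotone kernel $K$ of Assumption \ref{assump2}, the scalar Volterra square-root equation (\ref{evolterraheston}) admits a unique in law, $\R_+$-valued continuous weak solution $V$ driven by the Brownian motion $B$ of (\ref{eBt}), whose paths are \h continuous of any order less than $\gamma/2$. This already delivers the regularity claim for $V$ in the statement, so only existence and uniqueness in law of the pair remain.

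For existence, I would start from such a weak solution $(V,B)$ on some filtered probability space and enlarge it, if necessary, to carry a Brownian motion $B^\perp$ independent of $B$, so as to reconstruct a two-dimensional Brownian motion $(W_1,W_2)$ for which (\ref{eBt}) holds: explicitly $W_1 = \rho B + \sqrt{1-\rho^2}\,B^\perp$ and $W_2 = \sqrt{1-\rho^2}\,B - \rho B^\perp$, which one checks is a standard planar Brownian motion satisfying $B = \rho W_1 + \sqrt{1-\rho^2}\,W_2$. One then \emph{defines} the log-price by the explicit representation
\begin{equation*}
\log S_{1,t} = \log S_{1,0} + \int_{0}^{t}\Big(r_u+\theta V_u-\tfrac{1}{2}V_u\Big)\,du + \int_{0}^{t}\sqrt{V_u}\,dW_{1,u},
\end{equation*}
which is well posed because the moment bound of Proposition \ref{plemma31} (equivalently, the \h continuity of $V$) guarantees $\int_0^T V_u\,du<\infty$ a.s., so the It\^o integral exists. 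Exponentiating recovers $S_1$ solving (\ref{estockprocess}), and the pair takes values in $\R\times\R_+$ as required, giving the desired weak solution.

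For uniqueness in law I would use the affine structure already identified above, namely that $X=(\log S, V)$ is an affine Volterra process in the sense of Definition \ref{d-affine-vol} with the explicit coefficients $b^0, B, \sigma, A^i$ computed in this section. By Theorem \ref{t43}, for each admissible pair $(\nu,f)$ the conditional Fourier--Laplace functional $\E[\exp(\nu X_T + (f*X)_T)\,|\,\F_t]$ equals $\exp(Y_t)$, where $Y$ is determined through the solution $\varphi=(\varphi_1,\varphi_2)$ of the Riccati--Volterra system (\ref{evarphi1})--(\ref{evarphi2}). Letting $(\nu,f)$ range over purely imaginary arguments pins down every finite-dimensional distribution of $X$, and hence its law; since the Riccati--Volterra solution depends only on the fixed model coefficients and not on the particular weak solution, any two weak solutions must coincide in law.

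The hard part will be the last step: to convert the \emph{local} martingale property of $\exp(Y)$ supplied by Theorem \ref{t43} into the \emph{true} martingale identity (\ref{e434}) needed to characterise the law, one must verify a uniform-integrability/moment condition for the relevant purely imaginary arguments $\nu, f$. Here I would lean on the moment bounds of Proposition \ref{plemma31}, together with the boundedness of $r$ and the non-negativity of $V$, to control the quadratic term $\int_0^t \varphi(T-u)\,a(X_u)\,\varphi(T-u)^\top\,du$ and thereby upgrade $\exp(Y)$ to a martingale. The remaining verifications — $L^2$-solvability of (\ref{evarphi1})--(\ref{evarphi2}) and measurability of the reconstruction of $(W_1,W_2)$ — are routine given the results already assembled.
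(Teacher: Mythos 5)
Your sketch is correct, and for existence it follows the same triangular route as the paper, which disposes of this theorem with a one-line remark that it ``follows directly from Theorem \ref{t61} along with the fact that $S$ is determined by $V$'', deferring all details to the cited source. Your treatment of uniqueness in law is where you genuinely diverge --- and where you are arguably more careful. The phrase ``$S$ is determined by $V$'' elides the fact that $\log S_{1,t}$ depends on the stochastic integral $\int_0^t\sqrt{V_u}\,dW_{1,u}$, so what is really needed is uniqueness of the \emph{joint} law of $(V,W_1)$, not of $V$ alone; your route via the conditional exponential-affine transform of Theorem \ref{t43}, which characterises the finite-dimensional distributions of the full pair $X=(\log S,V)$ through the Riccati--Volterra system (\ref{evarphi1})--(\ref{evarphi2}), addresses exactly this gap and is in fact closer to how the original source argues. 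The one step I would adjust is your plan for upgrading $\exp(Y)$ from a local to a true martingale: rather than chasing moment bounds on the quadratic term via Proposition \ref{plemma31}, the standard and cleaner device is the sign information of the Lemma~7.4-type proposition quoted in this chapter --- for purely imaginary test arguments one has $\mathrm{Re}(\varphi_1)\in[0,1]$ and $\mathrm{Re}(\varphi_2)\le 0$, which together with $V\ge 0$ and the boundedness of $r$ makes $\mathrm{Re}(Y)$ bounded above, so $\exp(Y)$ is a bounded local martingale and hence a true martingale. With that substitution, and the (routine) reconstruction of $(W_1,W_2)$ from $(B,B^\perp)$ that you already verify, your sketch amounts to a complete and correct argument.
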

\begin{remark}
	This result follows directly from Theorem \ref{t61} along with the fact that $S$ is determined by $V$ (see Theorem 7.1 in \cite{jaber_et_al_2017_affine_volterra_processes} for further details).
\end{remark}

\noindent 
Let us now provide a number of useful results in relation to (\ref{evolterraheston}). These results will be relied upon in summarising the main results in \cite{han_wong_portfolio_utility_volterra_heston}.

\begin{proposition}\cite[Lemma 7.4]{jaber_et_al_2017_affine_volterra_processes}
	Assume $K$ satisfies assumption \ref{assump2}. Let $\nu \in (\C^2)^*$ and $f\in L_\text{loc}^1(\R_+;(\C^2)^*)$ be such that
		$$\textup{Re}(\varphi_1) \in [0,1], \, \textup{Re}(u_2) \le 0 \textup{ and Re}(f_2) \le 0, $$
		where $\varphi_1$ is given by (\ref{evarphi1}). Then the Riccati-Volterra equation (\ref{evarphi2}), with $\theta=0$ has a unique global solution $\varphi_2 \in L_\text{loc}^2(\R_+;\C^*)$, which satisfies $\textup{Re}(\varphi_2)\le 0$. 
\end{proposition}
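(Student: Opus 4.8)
The plan is to read (\ref{evarphi2}) with $\theta=0$ as a fixed-point equation $\varphi_2=\Phi(\varphi_2)$, where
$$\Phi(\psi):=\nu_2 K+\big(R(\psi)\big)*K,\qquad R(\psi):=\tfrac12\sigma^2\psi^2+(\rho\sigma\varphi_1-\kappa)\psi+c,$$
with $c:=f_2-\tfrac12\varphi_1+\tfrac12\varphi_1^2$ and $\varphi_1=\nu_1+f_1$ a fixed function by (\ref{evarphi1}). First I would establish \emph{local} existence and uniqueness in $L^2_{\textup{loc}}$ by the Banach fixed-point theorem. Since $R$ is a quadratic polynomial it is locally Lipschitz on bounded balls of $L^2([0,h])$, and using Young's inequality together with $K\in L^2_{\textup{loc}}$ (Assumption \ref{assump2}) one has a convolution bound whose constant $\sim\|K\|_{L^2([0,h])}$ tends to $0$ as $h\downarrow 0$. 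Hence for $h$ small, $\Phi$ is a contraction on a small ball and yields a unique continuous local solution $\varphi_2\in L^2([0,h];\C^*)$.

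The main work, and the step I expect to be the principal obstacle, is the a priori sign bound $\textup{Re}(\varphi_2)\le 0$. The algebraic input is that the hypotheses force $\textup{Re}(c)\le 0$: writing $\varphi_1=a+ib$ with $a=\textup{Re}(\varphi_1)\in[0,1]$, one computes $\textup{Re}\big(\tfrac12\varphi_1^2-\tfrac12\varphi_1\big)=\tfrac12\big(a(a-1)-b^2\big)\le 0$, while $\textup{Re}(f_2)\le 0$ by assumption. The decisive observation is that $R$ pushes the solution back into the left half-plane \emph{on the imaginary axis}: for $\psi=iy$ with $y\in\R$,
$$\textup{Re}\big(R(iy)\big)=-\tfrac12\sigma^2 y^2-\rho\sigma b\,y+\textup{Re}(c)\le -\tfrac12\Big[(\sigma y+\rho b)^2+(1-\rho^2)b^2\Big]+\textup{Re}(f_2)\le 0,$$
where completing the square makes the role of $|\rho|<1$ explicit. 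Combined with $\textup{Re}(\nu_2)\le 0$ and the nonnegativity of $K$ (which follows from strict positivity in Assumption \ref{assump2}), this boundary inequality is the mechanism that keeps $\textup{Re}(\varphi_2)\le 0$.

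The genuine subtlety is that, unlike a Riccati ODE, convolution against $K$ carries memory, so the pointwise estimate on the imaginary axis does not by itself close a first-passage argument: for $\textup{Re}(\psi)<0$ strictly, $\textup{Re}(R(\psi))$ can be positive through the $\tfrac12\sigma^2(\textup{Re}\,\psi)^2$ term, and this contribution is integrated over the whole past. I would resolve this by approximation. By complete monotonicity (Bernstein), $K$ is a nonnegative mixture of exponentials, so I approximate $K$ by finite nonnegative sums of exponentials $K^n$. For each $K^n$ the Riccati--Volterra equation is equivalent to a finite system of Riccati \emph{ODEs}, for which invariance of the half-plane $\{\textup{Re}\le 0\}$ follows precisely from the boundary vector-field computation above; passing to the limit $K^n\to K$, with stability of the associated fixed-point solutions, transfers $\textup{Re}(\varphi_2)\le 0$ to the Volterra equation. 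As an alternative route one may invoke the probabilistic representation of Theorem \ref{t43}: under the sign conditions the transformed exponent has nonpositive real part (using $V\ge 0$), so $\exp(Y)$ is a conditional transform of modulus at most $1$, which forces the sign.

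Finally, globalisation is routine given the sign bound. The estimate $\textup{Re}(\varphi_2)\le 0$ provides an a priori $L^2([0,T])$ bound on $\varphi_2$ for every finite $T$, ruling out finite-time blow-up, so the local solution extends by the standard continuation argument to all of $\R_+$; uniqueness propagates from the local contraction via a Gronwall-type argument on overlapping intervals. In summary, the local theory and the continuation are standard, and the heart of the proof is the sign property $\textup{Re}(\varphi_2)\le 0$ together with the control of the Volterra memory needed to deduce it.
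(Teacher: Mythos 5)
The thesis itself does not prove this proposition---it defers entirely to Lemma 7.4 of \cite{jaber_et_al_2017_affine_volterra_processes}---so I am comparing your argument against that reference and against its own internal logic. Your local fixed-point step, the observation $\textup{Re}(\tfrac12\varphi_1^2-\tfrac12\varphi_1)=\tfrac12\big(a(a-1)-b^2\big)\le 0$ for $a\in[0,1]$, the completed square $-\tfrac12\big[(\sigma y+\rho b)^2+(1-\rho^2)b^2\big]$, and the continuation argument once $\textup{Re}(\varphi_2)\le 0$ is in hand are all correct, and you have correctly located the crux in the memory carried by the convolution.

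The step you propose to resolve that crux, however, has a genuine gap. After replacing $K$ by $\sum_i c_ie^{-x_it}$, the equation lifts to the coupled system $(\varphi^i)'=-x_i\varphi^i+R\big(\sum_j c_j\varphi^j\big)$, $\varphi^i(0)=\nu_2$, and the quantity to control is the aggregate $\textup{Re}\,\varphi_2=\sum_i c_i\,\textup{Re}\,\varphi^i$. Nagumo's criterion on the half-space $\{\sum_i c_i\,\textup{Re}\,w_i\le 0\}$ requires $-\sum_i c_ix_i\,\textup{Re}\,w_i+\big(\sum_ic_i\big)\textup{Re}\,R(iy)\le 0$ on the boundary, and the first sum has no sign because the individual $\textup{Re}\,w_i$ are unconstrained there; the other natural candidate $\{\textup{Re}\,w_i\le 0\ \forall i\}$ fails because on its boundary one only knows $\textup{Re}\big(\sum_jc_jw_j\big)\le 0$, which is precisely the regime where, as you yourself note, $\textup{Re}\,R(\cdot)$ can be positive through $\tfrac12\sigma^2(\textup{Re}\,\psi)^2$. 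So for two or more atoms the boundary vector-field computation does not close, and the difficulty you flagged reappears unresolved inside the approximation. (The probabilistic alternative via Theorem \ref{t43} is circular: that theorem presupposes a solution of the Riccati--Volterra equation.) The device that actually closes the argument in \cite{jaber_et_al_2017_affine_volterra_processes} is to \emph{linearize} rather than approximate: write $\textup{Re}\,R(\varphi_2)=\alpha(t)\,\textup{Re}\,\varphi_2+\beta(t)$ with $\alpha(t)=\tfrac12\sigma^2\,\textup{Re}\,\varphi_2(t)+\rho\sigma\,\textup{Re}\,\varphi_1(t)-\kappa$ locally bounded on the interval of existence, and $\beta(t)=-\tfrac12\sigma^2(\textup{Im}\,\varphi_2)^2-\rho\sigma\,\textup{Im}(\varphi_1)\,\textup{Im}(\varphi_2)+\textup{Re}(c)\le 0$ by your completed square. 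Then $h:=\textup{Re}\,\varphi_2$ solves the linear Volterra equation $h=g+K*(\alpha h)$ with $g=\textup{Re}(\nu_2)K+K*\beta\le 0$, and a comparison (Gronwall) principle for such equations---available because the completely monotone $K$ has nonnegative resolvents, so after shifting by $\lambda=\sup|\alpha|$ one iterates a positive, quasi-nilpotent Volterra operator---yields $h\le 0$. That sign bound then feeds your continuation argument exactly as you describe.
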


\begin{proof}
	See Lemma 7.4 in \cite{jaber_et_al_2017_affine_volterra_processes}.
\end{proof}

\noindent Now, let us define a Riccati-Volterra equation which has the following form
\begin{equation}\label{ericcatisimple}
g(a,t)=\int_{0}^{t}K(t-u)[a-\kappa g(a,u)+\frac{\sigma^2}{2}g^2(a,u)]du, \quad g(a,0)=0.
\end{equation}

\noindent Han and Wong (see \cite{han_wong_portfolio_mean-variance_volterra_heston}) have obtained a number of useful existence and uniqueness results for Riccati-Volterra equations of this form (\ref{ericcatisimple}), which are used to obtain the optimal investment strategy in \cite{han_wong_portfolio_utility_volterra_heston}.

\begin{proposition}\cite[Lemma A.2]{han_wong_portfolio_mean-variance_volterra_heston}\label{phanricattiexstienceglobal}
	If $k^2-2a\sigma^2 > 0$, then (\ref{ericcatisimple}) has a unique global solution.
\end{proposition}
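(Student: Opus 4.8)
The plan is to read (\ref{ericcatisimple}) as a scalar nonlinear Volterra integral equation of the form $g = K*R(g)$, where $R(x) = \tfrac{\sigma^2}{2}x^2 - \kappa x + a$ is a convex quadratic and, by Assumption \ref{assump2}, the kernel $K$ is strictly positive and lies in $L^1_{\mathrm{loc}}(\R_+;\R)$. The hypothesis $\kappa^2 - 2a\sigma^2 > 0$ is precisely the positivity of the discriminant of $R$, so that $R$ has two real roots
$$x_\pm = \frac{\kappa \pm \sqrt{\kappa^2 - 2a\sigma^2}}{\sigma^2},$$
with $R \le 0$ exactly on $[x_-,x_+]$ and $R > 0$ outside it; these roots will act as barriers. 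Accordingly I would split the argument into a local well-posedness step, which does not use the hypothesis, and a global a priori bound, which is where the hypothesis is essential.

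First I would establish local existence and uniqueness on a short interval $[0,\delta]$ by a contraction argument. On $C([0,\delta])$ define $(\Phi g)(t) = (K*R(g))(t)$ and restrict to a closed ball about the zero function; there $R$ is Lipschitz with some constant $L$, and since $\Phi g_1 - \Phi g_2 = K*(R(g_1)-R(g_2))$ one gets $\|\Phi g_1 - \Phi g_2\|_\infty \le L\,\|K\|_{L^1([0,\delta])}\,\|g_1-g_2\|_\infty$. Because $\|K\|_{L^1([0,\delta])}\to 0$ as $\delta\downarrow 0$, for $\delta$ small enough $\Phi$ is a self-map and a contraction, yielding a unique continuous solution with $g(a,0)=0$. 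Patching such solutions via uniqueness produces a unique maximal solution on some $[0,T_{\max})$, and the standard Volterra blow-up alternative (see \cite{textbook_gustaf_et_al_1990_volterra}) applies: either $T_{\max}=\infty$, or $\limsup_{t\uparrow T_{\max}}|g(a,t)|=\infty$.

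The crux is then to exclude finite-time blow-up by trapping the solution in a fixed compact interval. Since $R(x_\pm)=0$ and $K\ge 0$, the constant $x_+$ satisfies $K*R(x_+)=0\le x_+$ and is therefore a supersolution, which should force the upper bound $g(a,t)\le x_+$; this controls the dangerous quadratic term. For the lower bound the behaviour is self-limiting: whenever $g$ becomes negative, $R(g)$ is large and positive, and positivity of $K$ pushes $g$ back up, giving $g(a,t)\ge \min(0,x_-)$. The main obstacle is that, unlike for an ODE, the comparison principle for Volterra equations is not immediate here because $R$ is non-monotone on $[x_-,x_+]$ (decreasing then increasing about its vertex $\kappa/\sigma^2$). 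To overcome this I would absorb the linear term $-\kappa g$ using the resolvent of the second kind $R^\kappa$ of $\kappa K$, rewriting the equation as $g = (\mathrm{id}-R^\kappa)*\big(a\,(K*1) + \tfrac{\sigma^2}{2}\,K*g^2\big)$; the modified kernel inherits a definite sign from the complete monotonicity of $K$, and the remaining nonlinearity $x\mapsto x^2$ is monotone on each half-line, so a monotone-iteration scheme started from the sub/supersolution can be shown to converge and to remain trapped between the barriers.

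Finally, the a priori bound contradicts the second case of the blow-up alternative, so $T_{\max}=\infty$ and the maximal solution is global; uniqueness on $[0,\infty)$ then follows by patching the local uniqueness already obtained. I expect the genuinely hard step to be the rigorous justification of the invariant region: the memory in the kernel rules out pointwise ODE-style barrier arguments, so one must rely either on the resolvent reformulation together with a comparison theorem for Volterra equations with nonnegative kernels, or on a direct Gr\"onwall-type estimate using the Lipschitz bound on $R$ over the candidate interval. This is exactly where $\kappa^2 - 2a\sigma^2 > 0$ is indispensable, for without real roots $R$ is strictly positive and the quadratic term forces blow-up in finite time, so no global solution can exist.
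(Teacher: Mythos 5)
The paper itself does not prove this proposition; it simply defers to Lemma A.2 of \cite{han_wong_portfolio_mean-variance_volterra_heston}, so there is no in-paper argument to compare against. Your outline is, in substance, the standard argument used in that reference, and its structure is sound: the local step (contraction on $C([0,\delta])$ using $\|K\|_{L^1([0,\delta])}\to 0$, continuation, and the blow-up alternative) is exactly Theorem \ref{thanriccatiexistencelocal}, and your reading of $\kappa^2-2a\sigma^2>0$ as the existence of real roots $x_\pm$ of the quadratic, which then serve as barriers, is the right mechanism. You are also correct that a naive constant-supersolution comparison fails because the quadratic is not monotone on $[x_-,x_+]$, and your resolvent rewriting is exactly right: from $R^\kappa*\kappa K=\kappa K-R^\kappa$ one gets $(\mathrm{id}-R^\kappa)*K=\tfrac{1}{\kappa}R^\kappa$, hence $g=\tfrac{1}{\kappa}R^\kappa*\bigl(a+\tfrac{\sigma^2}{2}g^2\bigr)$, whose right-hand side is monotone in $g$ on the positive cone.

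The one substantive point your sketch leaves open is the quantitative input that actually traps the iteration between the barriers: you need (i) $R^\kappa\ge 0$, which follows from the complete monotonicity of $K$ in Assumption \ref{assump2} (see \cite{textbook_gustaf_et_al_1990_volterra}), and (ii) $\int_0^t R^\kappa(s)\,ds\le 1$ for all $t$. With these, if $0\le g\le x$ on $[0,t)$ for a root $x$ of the quadratic, then $a+\tfrac{\sigma^2}{2}g^2\le a+\tfrac{\sigma^2}{2}x^2=\kappa x$, whence $g(t)\le \tfrac{1}{\kappa}\,\kappa x\int_0^t R^\kappa(s)\,ds\le x$; without (ii) the ``trapped between the barriers'' step has no proof, and the Gr\"onwall alternative you mention cannot substitute for it, since Gr\"onwall only yields growing bounds rather than a fixed barrier. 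Two smaller corrections: the monotone iteration started from $g_0\equiv 0$ naturally delivers the \emph{smaller} root $x_-$ (not $x_+$) as the invariant upper barrier, and in the intended application $a>0$, so $0\le x_-\le x_+$ and the lower barrier is simply $0$ (nonnegativity of $g$ is immediate from $R^\kappa\ge 0$), which replaces your $\min(0,x_-)$ heuristic. Your closing assertion that the solution must blow up when the discriminant is negative is not needed for the proposition and is not established by the sketch, so it should be dropped or flagged as a remark.
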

\begin{proof}
	See Lemma A.2. in \cite{han_wong_portfolio_mean-variance_volterra_heston}
\end{proof}

\noindent The following theorem immediately follows from the above result (see \cite{han_wong_portfolio_utility_volterra_heston}).

\begin{theorem}\cite[Theorem 2.5]{han_wong_portfolio_mean-variance_volterra_heston}\label{thanwong25}
	Suppose the Riccati-Volterra equation (\ref{ericcatisimple}) has a unique continuous solution on $[0,T]$ and $V$ is given by (\ref{evolterraheston}). Then
	\begin{equation}
	\begin{split}
	\E\left[\exp\left(a\int_{0}^{T} V_udu \right)\right]=&\exp \bigg[V_0\int_{0}^{T}\bigg(a-\kappa g(a,u)+\frac{\sigma^2}{2}g^2(a,u)\bigg)du \\
	&+\kappa \phi \int_{0}^{T}g(a,u)du\bigg]<\infty.
	\end{split}	
	\end{equation}
\end{theorem}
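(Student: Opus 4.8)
The plan is to recognise the left-hand side as a special instance of the exponential-affine transform formula of Theorem \ref{t43}, applied to the scalar variance process $V$ itself. Reading (\ref{evolterraheston}) as an instance of (\ref{affinevolterraprocess}) with $k=d=1$ and driving local martingale $B$, the process $V$ is an affine Volterra process in the sense of Definition \ref{d-affine-vol} (its unique-in-law weak solution being furnished by Theorem \ref{t61}) with drift $b(v)=\kappa(\phi-v)$ and diffusion $\sigma(v)=\sigma\sqrt{v}$. In the notation of (\ref{e-affine-maps})--(\ref{easigdecomposition}) one reads off $b^0=\kappa\phi$, $B=b^1=-\kappa$, $A^0=0$, $A^1=\sigma^2$, so that $a(v)=\sigma^2 v$ and $A(\varphi)=\sigma^2\varphi^2$.

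First I would choose $\nu=0$ and the constant function $f(u)\equiv a$, for which $\nu X_T+(f*X)_T=a\int_0^T V_u\,du$ is precisely the exponent whose expectation is sought. With these choices the Riccati--Volterra equation (\ref{ericcativolterra43}) collapses to
\begin{equation*}
\varphi=\Big(a-\kappa\varphi+\tfrac{\sigma^2}{2}\varphi^2\Big)*K,
\end{equation*}
which is exactly (\ref{ericcatisimple}). By hypothesis this admits the unique continuous solution $g(a,\cdot)$ on $[0,T]$; being continuous on a compact interval it is bounded, hence lies in $L^2([0,T])$ as Theorem \ref{t43} requires, and so $\varphi=g(a,\cdot)$.

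Next I would evaluate $Y_0$ from (\ref{e432}). Substituting $\nu=0$, $X_0=V_0$, $f\equiv a$, $b(X_0)=\kappa(\phi-V_0)$ and $a(X_0)=\sigma^2 V_0$, and regrouping the $V_0$-terms, gives
\begin{equation*}
Y_0=V_0\int_0^T\Big(a-\kappa g(a,u)+\tfrac{\sigma^2}{2}g^2(a,u)\Big)\,du+\kappa\phi\int_0^T g(a,u)\,du,
\end{equation*}
which is exactly the exponent in the statement. Since $g(a,\cdot)$ is bounded, this integral is a finite real number, so the claimed right-hand side is finite; this settles the ``$<\infty$'' assertion. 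Evaluating the transform representation (\ref{e434}) at $t=0$ then yields $\E[\exp(a\int_0^T V_u\,du)]=\exp(Y_0)$, which is the desired identity -- \emph{provided} the exponential $\exp(Y)$ is a true martingale and not merely the local martingale that Theorem \ref{t43} guarantees.

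This upgrade is the crux of the argument. Since $\exp(Y)$ is nonnegative it is a supermartingale, which already delivers the inequality $\E[\exp(a\int_0^T V_u\,du)]\le\exp(Y_0)$ (using that $Y_T=\nu X_T+(f*X)_T$ by (\ref{e433})). To promote this to equality I would establish uniform integrability of the localised family $\{\exp(Y_{T\wedge\tau_n})\}_n$ by exploiting the openness of the admissibility condition $\kappa^2-2a\sigma^2>0$ from Proposition \ref{phanricattiexstienceglobal}: for $a'>a$ slightly larger the Riccati--Volterra equation still possesses a global solution, and the corresponding identity (applied at the enlarged parameter) furnishes a finite exponential moment $\E[\exp(a'\int_0^T V_u\,du)]<\infty$ of strictly higher order, whence de la Vallée-Poussin yields the required uniform integrability and the supermartingale becomes a true martingale. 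The hard part will be precisely this control of a slightly super-critical exponential moment of the integrated variance along the localising sequence; once it is in hand, both the equality and the finiteness follow at once.
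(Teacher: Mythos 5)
Your reduction is the right one and matches how the thesis intends this result to be obtained: reading (\ref{evolterraheston}) as a scalar affine Volterra process with $b^0=\kappa\phi$, $B=-\kappa$, $A(\varphi)=\sigma^2\varphi^2$, taking $\nu=0$ and $f\equiv a$ in Theorem \ref{t43} collapses (\ref{ericcativolterra43}) to (\ref{ericcatisimple}), and your evaluation of $Y_0$ from (\ref{e432}) correctly reproduces the exponent in the statement (a continuous solution on the compact $[0,T]$ is bounded, hence in $L^2$, so the hypotheses of Theorem \ref{t43} are met and the finiteness of $\exp(Y_0)$ is immediate). The gap is in your closing step. You correctly identify that everything hinges on $\exp(Y)$ being a true martingale rather than a supermartingale, but the proposed upgrade via a slightly supercritical exponent $a'>a$ does not close. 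By (\ref{e433}), for $t<T$ the random variable $Y_t$ is not merely $\E[a\int_0^T V_u\,du\,|\,\F_t]$: it carries the additional term $\tfrac{1}{2}\int_t^T\sigma^2\varphi(T-u)^2\,\E[V_u\,|\,\F_t]\,du$. Consequently, bounding $\exp((1+\epsilon)Y_{T\wedge\tau_n})$ by Jensen requires a finite exponential moment of $\int_0^T V_u\,du$ at level $(1+\epsilon)\bigl(a+\tfrac{\sigma^2}{2}\|\varphi\|_\infty^2\bigr)$, not at a level "slightly larger than $a$"; establishing that moment would in turn require solving the Riccati--Volterra equation at that much larger parameter, which is circular and in general false under the stated hypothesis (which, note, assumes only a continuous solution on $[0,T]$, not $\kappa^2-2a\sigma^2>0$, so the "openness" you invoke is not available).

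The tool the thesis supplies for exactly this step is Proposition \ref{pjaberexpmartingale}: writing (\ref{e431}) out, $\exp(Y_t)=\exp(Y_0)\exp\bigl(U_t-\tfrac{1}{2}\langle U\rangle_t\bigr)$ with $U_t=\int_0^t \sigma\,\varphi(T-u)\sqrt{V_u}\,dB_u$, and since $u\mapsto\sigma\,g(a,T-u)$ is bounded on $[0,T]$ it lies in $L^\infty$, so the stochastic exponential is a true martingale by that proposition (whose proof is the Girsanov/stopping-time argument of Abi Jaber et al., precisely designed to avoid the higher-exponential-moment circularity you run into). With that substitution your argument is complete; note also that the thesis itself does not reprove this theorem but cites it, stating that it "immediately follows" once the Riccati solution exists, with Proposition \ref{pjaberexpmartingale} doing the martingale work.
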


\noindent We now provide a final important Ricatti-Volterra existence and uniqueness result from \cite{han_wong_portfolio_mean-variance_volterra_heston}, as this result will be relied upon in \cite{han_wong_portfolio_utility_volterra_heston} to prove existence and uniqueness of (\ref{ericcatisol}).

\begin{theorem}\cite[Theorem A.1]{han_wong_portfolio_mean-variance_volterra_heston}\label{thanriccatiexistencelocal}
	Assume $K$ satisfies assumption \ref{assump2}. Let $c_0, \, c_1, \, c_2$ be constant. Then there exists $\delta >0$, such that
	\begin{equation}
	f(t)=(K*(c_0 + c_1f+c_2f^2))(t),\quad f(0)=0,
	\end{equation}
	has a unique continuous solution $f$ on $[0,\delta]$.
\end{theorem}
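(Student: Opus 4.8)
The plan is to recast the equation as a fixed-point problem and apply the Banach contraction principle on a sufficiently small time interval. Define the operator $\mathcal{T}$ on continuous functions by
$(\mathcal{T}f)(t) = (K*(c_0 + c_1 f + c_2 f^2))(t) = \int_{0}^{t} K(t-u)\big(c_0 + c_1 f(u) + c_2 f(u)^2\big)\,du$,
so that a fixed point of $\mathcal{T}$ vanishing at the origin is exactly a solution of the stated equation. The one structural fact about the kernel I would use is that Assumption \ref{assump1} (which is contained in Assumption \ref{assump2}) forces $K \in L_\text{loc}^2 \subset L_\text{loc}^1$, so that $\kappa(\delta) \de \int_{0}^{\delta} |K(s)|\,ds \to 0$ as $\delta \downarrow 0$; by Cauchy--Schwarz one even gets the rate $\kappa(\delta) \le \sqrt{\delta}\,\|K\|_{L^2([0,\delta])} = O(\delta^{(1+\gamma)/2})$. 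The complete monotonicity that is the extra content of Assumption \ref{assump2} is not needed for this local statement.

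First I would fix a radius $M>0$ and work in the closed ball $B_M = \{\, f \in C([0,\delta]) : f(0)=0,\ \|f\|_\infty \le M \,\}$, which is a complete metric space under the sup norm. To check that $\mathcal{T}$ maps $B_M$ into itself, I would estimate $|(\mathcal{T}f)(t)| \le (|c_0| + |c_1|M + |c_2|M^2)\,\kappa(\delta)$, so that choosing $\delta$ with $(|c_0|+|c_1|M+|c_2|M^2)\,\kappa(\delta) \le M$ keeps the image inside the ball. Continuity of $\mathcal{T}f$ and the initial value $(\mathcal{T}f)(0)=0$ follow from $K \in L^1_\text{loc}$ together with continuity of translation in $L^1$, which is where the singularity of $K$ at the origin is absorbed.

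Next, for the contraction estimate I would exploit $f^2 - h^2 = (f+h)(f-h)$ with $\|f+h\|_\infty \le 2M$ on $B_M$, which yields $\|\mathcal{T}f - \mathcal{T}h\|_\infty \le (|c_1| + 2M|c_2|)\,\kappa(\delta)\,\|f-h\|_\infty$. Shrinking $\delta$ further so that $(|c_1|+2M|c_2|)\,\kappa(\delta) \le \tfrac{1}{2}$ makes $\mathcal{T}$ a contraction on $B_M$, and the Banach fixed-point theorem then delivers a unique $f \in B_M$ solving the equation on $[0,\delta]$.

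Finally, to upgrade uniqueness from ``within $B_M$'' to ``among all continuous solutions on $[0,\delta]$'', I would observe that any two continuous solutions $f,g$ are bounded on the compact interval, so their difference $w = f-g$ satisfies the linear Volterra inequality $|w(t)| \le C \int_{0}^{t} |K(t-u)|\,|w(u)|\,du$ with $C = |c_1| + |c_2|\big(\|f\|_\infty + \|g\|_\infty\big)$. Iterating this bound produces the Neumann majorant $C^n\,|K|^{*n}*|w|$, which converges because $|K| \in L^1_\text{loc}$ admits a resolvent of the second kind (a generalised Gronwall--Henry argument), forcing $w \equiv 0$. I expect the main obstacle to be precisely this singular-kernel bookkeeping: the argument can never use pointwise control of $K$, only the integrability $\kappa(\delta)\to 0$, which is exactly why the quadratic term must be controlled on a bounded ball and why the existence obtained here is only local.
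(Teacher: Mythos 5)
Your argument is correct, and it is worth noting that the thesis itself supplies no proof of this statement --- it is quoted verbatim from Theorem A.1 of Han and Wong, so the only ``paper proof'' is the citation. Your contraction-mapping route is the standard way such local existence results are established, and every step checks out: $K\in L^2_{\mathrm{loc}}\subset L^1_{\mathrm{loc}}$ gives $\kappa(\delta)=\int_0^\delta|K(s)|\,ds\to 0$ (indeed $O(\delta^{(1+\gamma)/2})$ by Cauchy--Schwarz and Assumption \ref{assump1}), the ball $B_M$ is complete, the self-map and contraction estimates are right, and continuity of $\mathcal{T}f$ at the singular origin is correctly absorbed into the $L^1$ modulus of $K$. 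Your observation that complete monotonicity plays no role in the \emph{local} statement is also accurate; it only matters for global existence and positivity. The one place that deserves a little more care is the final uniqueness step. The bound $|w(t)|\le \|w\|_\infty\,(C\kappa(t))^n$ obtained by iterating only forces $w\equiv 0$ where $C\kappa(t)<1$, and since $C$ depends on the a priori sup norms of the two competing solutions, this need not hold on all of $[0,\delta]$ at once; the Neumann series for the resolvent of $C|K|$ likewise need not converge in $L^1([0,\delta])$ when $C\kappa(\delta)\ge 1$. Your appeal to the resolvent of the second kind is still legitimate --- Gripenberg--Londen--Staffans guarantee that every convolution kernel in $L^1_{\mathrm{loc}}$ has a resolvent, constructed by splitting rather than by the raw Neumann series, and the generalised Gronwall lemma then yields $w\le 0$ --- but you should either cite that existence theorem explicitly or replace the iteration by the elementary alternative of covering $[0,\delta]$ by finitely many subintervals of length $\delta''$ with $C\kappa(\delta'')<1$ and propagating $w\equiv 0$ step by step. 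With that point tightened, the proof is complete.
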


\noindent Let us now give an important martingale result which is used to obtain the exponential-affine transform representation (\ref{e434}) from Theorem \ref{t43}.
\begin{proposition}\label{pjaberexpmartingale}Let $g\in L^\infty(\R_+;\R)$ and define $U_t=\int_{0}^{t}g(u)\sqrt{V_u}dB_u, \, t\in [0,T]$, with $V$ given by (\ref{evolterraheston}) and $B$ given by (\ref{eBt}). Then the stochastic exponential $\exp(U_t - \frac{1}{2}\langle U \rangle_t)$ is a martingale.
	
\end{proposition}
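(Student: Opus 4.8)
The plan is to show that the stochastic exponential $\mathcal{E}(U)_t \de \exp\!\big(U_t - \tfrac{1}{2}\langle U\rangle_t\big)$ is a true martingale by verifying Novikov's criterion, after reducing everything to an exponential moment of $\int_0^T V_u\,du$, for which Theorem \ref{thanwong25} is tailor-made. First I would record the structural facts. Since $B$ is a Brownian motion and the integrand $g(u)\sqrt{V_u}$ is progressively measurable with $\int_0^T g(u)^2 V_u\,du < \infty$ $\P$-a.s. (because $V$ has continuous, hence locally bounded, paths and $g\in L^\infty$), $U$ is a well-defined continuous local martingale with $\langle U\rangle_t = \int_0^t g(u)^2 V_u\,du$. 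Consequently $\mathcal{E}(U)$ is the stochastic exponential of a continuous local martingale, so it is itself a nonnegative local martingale, and therefore a supermartingale with $\E[\mathcal{E}(U)_t]\le 1$. To upgrade this to the martingale property it suffices to show $\E[\mathcal{E}(U)_T]=1$, and a clean sufficient condition for this is Novikov's criterion $\E\!\big[\exp(\tfrac{1}{2}\langle U\rangle_T)\big]<\infty$.

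Next I would discharge Novikov's condition. Using $g\in L^\infty$ and $V_u\ge 0$, the crude bound $g(u)^2\le \|g\|_\infty^2$ gives, by monotonicity of the exponential,
\begin{equation*}
\E\left[\exp\left(\tfrac{1}{2}\langle U\rangle_T\right)\right]\le \E\left[\exp\left(\tfrac{1}{2}\|g\|_\infty^2\int_0^T V_u\,du\right)\right].
\end{equation*}
The right-hand side is exactly the object controlled by Theorem \ref{thanwong25} with $a=\tfrac{1}{2}\|g\|_\infty^2$: provided the Riccati--Volterra equation \eqref{ericcatisimple} with this $a$ admits a continuous solution on $[0,T]$, that expectation equals an explicit finite exponential, so Novikov's criterion holds and $\mathcal{E}(U)$ is a martingale. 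When the parameters satisfy $\kappa^2 - \|g\|_\infty^2\sigma^2>0$, Proposition \ref{phanricattiexstienceglobal} (applied with $a=\tfrac{1}{2}\|g\|_\infty^2$, so that $\kappa^2-2a\sigma^2>0$) yields a unique global solution and the argument closes for every horizon $T$.

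The main obstacle is therefore the solvability of \eqref{ericcatisimple} on the whole of $[0,T]$, i.e. the absence of finite-time blow-up of this positive-quadratic Riccati--Volterra equation, which is precisely what forces the exponential moment to be finite. Outside the regime $\kappa^2>\|g\|_\infty^2\sigma^2$ I would fall back on the local existence guaranteed by Theorem \ref{thanriccatiexistencelocal}, which supplies a solution on some $[0,\delta]$, and then localise: partition $[0,T]$ into subintervals of length at most $\delta$ and apply the stepwise form of Novikov's criterion to the increments $\langle U\rangle_{t_k}-\langle U\rangle_{t_{k-1}}$. The delicate point here, and the reason I expect this to be the hardest part, is that the Volterra dynamics \eqref{evolterraheston} are non-Markovian, so the conditional exponential moments of $\int_{t_{k-1}}^{t_k} V_u\,du$ given $\F_{t_{k-1}}$ cannot be read off directly from Theorem \ref{thanwong25} and require a shifted-kernel/conditioning refinement of that estimate.
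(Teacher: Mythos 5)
Your reduction to Novikov's criterion is internally consistent, but it proves the proposition only under the extra parameter restriction $\kappa^2-\|g\|_\infty^2\sigma^2>0$, which is not among the hypotheses: Theorem \ref{thanwong25} requires the Riccati--Volterra equation (\ref{ericcatisimple}) with $a=\tfrac{1}{2}\|g\|_\infty^2$ to possess a continuous solution on all of $[0,T]$, and outside the regime covered by Proposition \ref{phanricattiexstienceglobal} such a global solution need not exist (the exponential moment $\E[\exp(a\int_0^T V_u\,du)]$ can genuinely be infinite for large $a$ or $T$). Your proposed fallback does not close this gap. The stepwise Novikov criterion needs $\E[\exp(\tfrac{1}{2}\|g\|_\infty^2\int_{t_{k-1}}^{t_k}V_u\,du)]<\infty$ for each subinterval, and Theorem \ref{thanwong25} only addresses integrals started at time $0$; since $V$ is neither Markovian nor time-homogeneous, the law of $(V_u)_{u\in[t_{k-1},t_k]}$ is not that of $(V_u)_{u\in[0,t_k-t_{k-1}]}$, so the local existence of Theorem \ref{thanriccatiexistencelocal} on $[0,\delta]$ buys you nothing for the later subintervals. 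You flag exactly this obstacle yourself, but flagging it is not overcoming it, so the general case remains unproved.

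The paper's proof avoids exponential moments of $\int_0^T V_u\,du$ entirely by localising in \emph{space} rather than in time. With $\tau_n\de\inf\{t\ge 0\sep V_t>n\}\wedge T$ the integrand is bounded on $[0,\tau_n]$, so Novikov applies to the stopped exponential $M_{\cdot\wedge\tau_n}$ with no Riccati input at all. One then sets $d\Q^n/d\P=M_{\tau_n}$, uses Girsanov to rewrite $V$ as a Volterra equation with modified drift under $\Q^n$, and invokes the polynomial moment bound of Proposition \ref{plemma31} (with Remark \ref{r32}) together with the H\"older-seminorm estimate of Proposition \ref{plemma24} to obtain $\Q^n(\tau_n<T)\le c'(n-V_0)^{-p}$ with $c'$ independent of $n$. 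Since $\E_\P[M_T]\ge\E_\P[M_T\mathbf{1}_{\{\tau_n=T\}}]=\Q^n(\tau_n=T)\to 1$ as $n\to\infty$, the nonnegative supermartingale $M$ has constant expectation and is therefore a martingale. To repair your argument you would either have to add the hypothesis $\kappa^2>\|g\|_\infty^2\sigma^2$ to the statement, or replace the time-partitioning step by this space-localisation and change-of-measure scheme.
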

The proof is in the same spirit as Lemma 7.3 in \cite{jaber_et_al_2017_affine_volterra_processes}.
\begin{proof} Let $M=\lbrace M_t,\F_t \sep 0\le t \le T \rbrace$ with $M_t \de \exp(U_t - \frac{1}{2}\langle U \rangle_t)$, and define the stopping times $\tau_n \de \inf \lbrace t\ge 0 \sep V_t > n\rbrace \wedge T$, thus $\P(\underset{n\to \infty}{\lim}\tau_n=T)=1$. Then $M_{t \wedge \tau_n}, \, t\in [0,T]$ is a uniformly integrable martingale for each $n$ by Novikov's condition, as 
	$$\E \left[\exp^{\frac{1}{2}\int_{0}^{T}g^2(t)V_{t \wedge \tau_n}dt} \right] < \infty.$$
	Moreover, as $M$ is a local martingale it is also a supermartingale by Fatou's lemma, as 
	$$\E[M_t|\F_u]=\E[\underset{n\to \infty}{\lim \inf}M_{t \wedge \tau_n}|\F_u]\le \underset{n\to \infty}{\lim \inf}\, \E[M_{t \wedge \tau_n}|\F_u]=\underset{n\to \infty}{\lim \inf}\, M_{u \wedge \tau_n}=M_u.$$
	Therefore, $\E[M_0]\le\E[M_T]$, however as $\E[M_0]=1$, to prove that $M$ is a martingale it suffices to show that $\E[M_T]\ge1$. Let's now define the probability measures $\Q^n$ by
	$$\frac{d\Q^n}{d\P}=M_{\tau_n} \quad n\ge 0.$$
	From Girsanov's Theorem, as $B$ is a Brownian motion with respect to $\P$, the process $dB_t^n=dB_t - d\langle U_{T\wedge \tau_n}, B \rangle_t = dB_t - \textbf{1}_{\lbrace t\le \tau_n \rbrace}g(t)\sqrt{V_t}dt$ is a Brownian motion under $\Q^n$, and we have
	$$V=V_0 + K* ((\kappa \phi-(\kappa-\sigma g\textbf{1}_{ [0,\tau_n] })V)dt + \sigma \sqrt{V}dB^n).$$
	Let $\gamma$ be the constant from Assumption \ref{assump1}, then choose $p$ sufficiently large such that $\gamma/2 - 1/p>0$. As $\kappa \phi-(\kappa-\sigma g\textbf{1}_{\lbrace t\le \tau_n(\omega) \rbrace})v$ satisfies a linear growth condition in $v$, uniformly in $(t,\omega)$, from Proposition \ref{plemma31} and Remark \ref{r32}, we have
	$$\underset{t\le T}{\sup}\E_{\Q^n}[|V_t|^p]\le c$$
	for some constant $c$ that depends only on $|X_0|,\, K|_{[0,T]},\, c_{LG}, \,p$ and $T$. In Proposition \ref{plemma24} we defined the $\alpha$-H\"{o}lder seminorm for any real-valued function $f$ as,
	$$|f|_{C^{0,\alpha}([0,T])} = \underset{0 \le s<t\le T}{\sup}\frac{|f(t)-f(s)|}{|t-s|^\alpha}.$$
	Using this substitution with $s=\alpha=0$ we get,
	\begin{align*}
	\Q^n(\tau_n < T)&=\Q^n(\underset{t<T}{\sup}V_t>n)\le \Q^n(\underset{t\le T}{\sup}V_t>n)\\
	&\le \Q^n(V_0 + |V|_{C^{0,0}([0,T])}>n).
	\intertext{Since,}
	\E_{\Q^n}[|V|^p_{C^{0,0}([0,T])}]&\ge \E_{\Q^n}[|V|^p_{C^{0,0}([0,T])}\textbf{1}_{\lbrace|V|_{C^{0,0}([0,T])}>(n-V_0\rbrace)}]\\
	 &\ge (n-V_0)^p\Q^n( |V|_{C^{0,0}([0,T])}>n-V_0),
	 \intertext{we have}
	\Q^n(\tau_n < T)&\le \left(\frac{1}{n-V_0}\right)^p\E_{\Q^n}[|V|^p_{C^{0,0}([0,T])}],
	\intertext{finally using proposition \ref{plemma24} we get}
	\Q^n(\tau_n < T) &\le \left(\frac{1}{n-V_0}\right)^pc',
	\end{align*}
	where $c'$ is a constant that only depends on $p, K$ and $T$. Therefore, by change of measure we have
	
	\begin{align*}
	\E_\P[M_T]&\ge \E_\P[M_T\textbf{1}_{\lbrace \tau_n=T\rbrace}]=\E_\Q[\textbf{1}_{\lbrace \tau_n=T\rbrace}],\\
	&=\Q^n(\tau_n=T)\le \Q^n(\tau_n=T)=1-\Q^n(\tau_n<T),\\
	&\ge 1-\left(\frac{1}{n-V_0}\right)^pc',
	\end{align*}
	by then sending $n$ to infinity yields $\E_\P[M_T]\ge1$, as required.
\end{proof}

\noindent We conclude this section with the following result which will be used to derive the Ansatz in the martingale optimality principle (see Definition \ref{dmartingaleoptimalityp}).

\begin{proposition}\cite[Proposition 3.2]{Keller-Ressel_et_al_2018_affine_rough_models}\label{pforwardvariance}
Let $R_\kappa$ be the resolvent of the second kind of $\kappa K$ (see Definition \ref{dres2}). The forward variance $\xi_u(t)=\E[V_s|\F_t]$, $0\le u\le t \le s \le T$ associated to (\ref{evolterraheston}) satisfies
$$d\xi_u(s)=\frac{1}{\kappa}R_\kappa (s-u)\sigma\sqrt{V_u}dW_u,$$
or equivalently
$$\xi_t(s)=\E[V_s|\F_t]=\xi_0(s) + \int_{0}^{t}\frac{1}{\kappa}R_\kappa (s-u)\sigma\sqrt{V_u}dW_u,$$
where
$$\xi_0(s)=V_0\left(1-\int_{0}^{s}R_\kappa (u)du\right)+\phi\int_{0}^{s}R_\kappa(u)du.$$
\end{proposition}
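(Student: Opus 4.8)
The plan is to avoid manipulating the forward curve directly and instead solve the linear (in its drift) stochastic Volterra equation (\ref{evolterraheston}) for $V$ itself via the resolvent of the second kind, taking a conditional expectation only at the very end. First I would rewrite (\ref{evolterraheston}) in convolution form as
\begin{equation*}
V + \kappa K * V = v_0 + \kappa\phi(K*1) + K*\sigma\sqrt{V}\,dB,
\end{equation*}
which exhibits it as a linear convolution equation with kernel $\kappa K$. Letting $R_\kappa$ denote the resolvent of the second kind of $\kappa K$ (Definition \ref{dres2}), so that $\kappa K * R_\kappa = R_\kappa * \kappa K = \kappa K - R_\kappa$, the variation-of-constants formula for such equations gives $V = g - R_\kappa * g$ with $g := v_0 + \kappa\phi(K*1) + K*\sigma\sqrt{V}\,dB$; this identity is verified in one line by substituting and invoking the resolvent identity.

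Next I would evaluate the two pieces of $g - R_\kappa*g$ separately. For the deterministic part $g_{\det} = v_0 + \kappa\phi(K*1)$, repeated use of the resolvent identity in the form $R_\kappa * K = K - \kappa^{-1}R_\kappa$ collapses the expression to exactly $V_0\big(1-\int_0^{\cdot}R_\kappa(u)\,du\big) + \phi\int_0^{\cdot}R_\kappa(u)\,du = \xi_0(\cdot)$. For the stochastic part, the crucial step is to move the $R_\kappa$-convolution through the It\^o integral: by the associativity of the stochastic convolution (\cite[Lemma 2.1]{jaber_et_al_2017_affine_volterra_processes}, which itself rests on the stochastic Fubini Theorem \ref{theoremstochasticfubini}) one has $R_\kappa*(K*\sigma\sqrt{V}\,dB) = (R_\kappa*K)*\sigma\sqrt{V}\,dB$, and again $R_\kappa*K = K - \kappa^{-1}R_\kappa$ yields
\begin{equation*}
K*\sigma\sqrt{V}\,dB - R_\kappa*(K*\sigma\sqrt{V}\,dB) = \frac{1}{\kappa}\int_0^{\cdot}R_\kappa(\cdot-u)\sigma\sqrt{V_u}\,dB_u.
\end{equation*}
Combining the two pieces produces the representation $V_s = \xi_0(s) + \frac{1}{\kappa}\int_0^s R_\kappa(s-u)\sigma\sqrt{V_u}\,dB_u$, valid for every $s\in[0,T]$.

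Finally, for $s\ge t$ I would split $\int_0^s = \int_0^t + \int_t^s$ and take $\E[\,\cdot\,|\F_t]$: the first integral is $\F_t$-measurable while the increment $\int_t^s R_\kappa(s-u)\sigma\sqrt{V_u}\,dB_u$ has vanishing conditional mean, so that $\xi_t(s)=\E[V_s|\F_t]=\xi_0(s)+\frac{1}{\kappa}\int_0^t R_\kappa(s-u)\sigma\sqrt{V_u}\,dB_u$; differentiating in $t$ (with $\xi_0(s)$ constant in $t$) then gives the stated dynamics. Here $B$ is the Brownian motion driving (\ref{evolterraheston}), denoted $W$ in the statement.

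The main obstacle is the stochastic part: both the interchange underlying $R_\kappa*(K*\sigma\sqrt{V}\,dB)=(R_\kappa*K)*\sigma\sqrt{V}\,dB$ and the vanishing of the conditional increment require integrability. For the former I would check the hypotheses of \cite[Lemma 2.1]{jaber_et_al_2017_affine_volterra_processes}, namely $K\in L^2_{\mathrm{loc}}$ (Assumption \ref{assump2}) and $R_\kappa\in L^1_{\mathrm{loc}}$, which holds for the completely monotone kernels tabulated above. For the latter I would invoke the moment bound $\sup_{u\le T}\E[V_u]<\infty$ of Proposition \ref{plemma31} together with $R_\kappa\in L^2_{\mathrm{loc}}$ to obtain, via Tonelli, $\E\int_0^s R_\kappa(s-u)^2\sigma^2 V_u\,du = \sigma^2\int_0^s R_\kappa(s-u)^2\E[V_u]\,du<\infty$, which secures the true-martingale property of the stochastic integral and hence the zero conditional mean of its increment.
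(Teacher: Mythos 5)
Your argument is correct: the variation-of-constants representation $V = g - R_\kappa * g$, the collapse of the deterministic part to $\xi_0$ via $R_\kappa * K = K - \kappa^{-1}R_\kappa$, the associativity step $R_\kappa*(K*\sigma\sqrt{V}\,dB)=(R_\kappa*K)*\sigma\sqrt{V}\,dB$ justified by the stochastic Fubini theorem, and the $L^2$-integrability check that makes the conditional increment vanish are all sound (and you are right that the $W$ in the displayed formula should be the driving Brownian motion $B$ of (\ref{evolterraheston})). The paper itself supplies no proof here --- it defers entirely to Proposition 3.2 of the cited reference --- and your resolvent argument is precisely the standard derivation given there, so you have in effect filled in the proof the paper omits rather than taken a different route.
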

\begin{proof}
	See Proposition 3.2 in \cite{Keller-Ressel_et_al_2018_affine_rough_models}.
\end{proof}

\noindent Now that we've established the main results needed to solve the optimisation problem under the Volterra Heston model, let's now introduce the martingale distortion transformation. We will have a particular focus on how this method can be used to obtain the optimal investment strategy for Volterra Heston models.


\section{Solving the optimisation problem under the Volterra Heston model}\label{s_martingale_distortion_transformation}
The main goal of this section is to provide the motivation behind the construction of the Ansatz for the martingale optimality principle in \cite{han_wong_portfolio_utility_volterra_heston}. The Ansatz for this problem may originally seem quite a daunting process to construct. However, we will soon see that from historical works (see \cite{tehranchi_2004_martingale_distortion_non_markovian} and \cite{zariphopoulou_2001_supplement_martingale_distortion_markovian}), the form of the Ansatz is still heavily motivated from these results regardless of the choice of model dynamics. In addition, we summarise the main verification results of the Ansatz in \cite{han_wong_portfolio_utility_volterra_heston}. This work opens up future research considerations for Volterra Heston models under a number of different utility functions (i.e. exponential and logarithmic utility functions). 

\subsection{Solution to the optimisation problem under the Volterra Heston model}
Let us begin by first summarising the main results in \cite{han_wong_portfolio_utility_volterra_heston}. Han and Wong's approach to solving the optimisation problem (\ref{eoptimisation}) under the non-Markovian Volterra Heston model was to construct a family of processes, which satisfied the following definition.

\begin{definition}[Martingale optimality principle]\cite[Section 6.6.1, Section 3]{textbook_pham, han_wong_portfolio_utility_volterra_heston}\label{dmartingaleoptimalityp} The martingale optimality principle states that the optimisation problem (\ref{eoptimisation}) can be solved by constructing a family of processes $J^\pi = \lbrace J_t^\pi, \F_t | 0\le t \le T \rbrace, \, \pi \in \A$, satisfying the properties:
	\begin{enumerate}[label=(\roman*)]
		\item $J_T^\pi=U(\Pi^\pi_T), \fa \pi \in \A$
		\item $J_0^\pi$ is a constant, independent of $\pi \in \A$
		\item $J^\pi$ is a supermartingale $\fa \pi \in \A$, and there exists $\pi^*\in \A$ such that $J^\pi$ is a martingale.
	\end{enumerate}	
\end{definition}

\noindent Let us assume that the process $J^\pi$ satisfies the above conditions, with $U(\cdot)$ given by (\ref{eutility}) and $\Pi^*$ denotes the wealth process with respect to the optimal investment strategy $\pi^*$. Then, $\fa \pi \in \A$ we have
$$\E[U(\Pi^\pi_T)]=\E[J_T^\pi]\le J_0^\pi=J_0^{\pi^*}=\E[J_T^{\pi^*}]=\E[U(\Pi_T^*)].$$

\noindent From the above result we get $\E[U(\Pi_T^*)]\ge \E[U(\Pi^\pi_T)], \fa \pi \in \A$, and $J^{\pi^*}$ is a martingale process. Therefore, from (\ref{eoptimisationlocal}) we have that $J^{\pi^*}$ satisfies
$$J_t^{\pi^*}= \E\left[\frac{(\Pi_T^*)^\gamma}{\gamma}\bigg|\F_t\right]=\underset{\pi\in \A}{\sup}\,\E\left[\frac{(\Pi^\pi_T)^\gamma}{\gamma}\bigg|\F_t\right],$$  

\noindent as required. Therefore, constructing such a process becomes the main problem, which is the focus of this section. In \cite{han_wong_portfolio_utility_volterra_heston}, Han and Wong provide the Ansatz for $J^\pi$, however, the motivation behind the construction of this process is not supplied. We seek to close this gap and provide insight for future optimisation problems under the Volterra Heston model. 

Now, let us consider the Ansatz for $J^\pi$ in \cite{han_wong_portfolio_utility_volterra_heston}, which is given by

\begin{equation}\label{eansatzhan}
	J_t^\pi = \frac{(\Pi^\pi_t)^\gamma}{\gamma}M_t,
\end{equation}

\noindent where,
\begin{equation}
M_t = \exp \left\lbrace \int_{t}^{T}\left(\gamma r_u + \frac{\gamma \theta^2 \widetilde{\xi}_t(u)}{2(1-\gamma)}+\frac{\delta\sigma^2\widetilde{\xi}_t(u)}{2}\varphi^2(T-u)\right)du\right\rbrace,
\end{equation}

\noindent and $\delta=\frac{1-\gamma}{1-\gamma + \gamma \rho^2}$, and $\varphi(\cdot)$ satisfies the Riccati-Volterra equation
\begin{equation}
\varphi(t) = \int_{0}^{t}K(t-u) \left(\frac{\gamma \theta^2}{2\delta(1-\gamma)} - \lambda \varphi(u)  + \frac{\sigma^2}{2}\varphi^2(u)\right)du.
\end{equation}

\noindent The forward variance $\widetilde{\xi}_t(u)=\widetilde{\E}[V_u|\F_t]$ (see Proposition \ref{pforwardvariance}), is defined with respect to probability measure $\widetilde{\P}$, given by
\begin{equation}
\frac{d\widetilde{\P}}{d\P}=\exp\left(\frac{\gamma\theta}{1-\gamma}\int_{0}^{T}\sqrt{V_u}dW_u - \frac{\gamma^2\theta^2}{2(1-\gamma)^2}\int_{0}^{T}V_udu\right)	
\end{equation}

\noindent where $V$ is given by (\ref{evolterraheston}). Thus, re-writing $V$ with respect to $\widetilde{\P}$ we have,
\begin{equation}
V_t = v_0 + \int_{0}^{t}K(t-u)(\kappa \phi-\lambda V_u)du + \int_{0}^{t}K(t-u)\sigma \sqrt{V_u}d\widetilde{B}_u,
\end{equation}

\noindent where $\lambda = \kappa - \frac{\gamma}{1-\gamma}\rho \theta \sigma$. Now, in order to verify that (\ref{eansatzhan}) satisfies Definition \ref{dmartingaleoptimalityp}, we must verify that the process is indeed a supermartingale. To achieve this, we require that the process $M$ must be integrable, and establish the dynamics of $M$. Han and Wong achieve this result by imposing certain conditions, summarised in the following theorem.

\begin{theorem}\cite[Theorem 3.1]{han_wong_portfolio_utility_volterra_heston}\label{than31} Assume
	\begin{equation}\label{eansatzconditions}
	\kappa^2 - 6\frac{\gamma^2}{(1-\gamma)^2}\theta^2 \sigma^2 >0, \quad \lambda>0,\quad \lambda^2-2p\frac{\gamma}{1-\gamma}\theta^2\sigma^2 >0,
	\end{equation}	
	for some $p>1/(2\delta)$. Then $M$ has the following properties:
	\begin{enumerate}[label=(\roman*)]
		\item $M_t\ge c > 0$ for some positive constant $c$, and $\E[\underset{t\in [0,T]}{\sup}|M_t|^p]<\infty$,
		\item \begin{equation}
		\begin{split}
		dM_t = &-\left(\gamma r_t + \frac{\gamma\theta^2V_t}{2(1-\gamma)}\right)M_tdt - \frac{\gamma}{2(1-\gamma)}\left(2\theta \sqrt{V_t}U_{1t}+\frac{U_{1t}^2}{M_t}\right)dt\\
		&+ U_{1t}dW_{1t}+U_{2t}dW_{2t},
		\end{split}
		\end{equation}
		where
		\begin{equation}
		U_{1t}=\rho \delta \sigma M_t \sqrt{V_t}\varphi(T-t),
		\end{equation}
		\begin{equation}
		U_{2t}=\sqrt{1-\rho^2} \delta \sigma M_t \sqrt{V_t}\varphi(T-t).
		\end{equation}
		\item $\E\left[\left(\int_{0}^{T}U^2_{iu}du\right)^{p/4}\right]<\infty,$ for $i=1,2.$
	\end{enumerate}	
\end{theorem}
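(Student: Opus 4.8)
The plan is to establish the three properties in turn, using the forward-variance representation of Proposition~\ref{pforwardvariance} and the exponential-moment estimate of Theorem~\ref{thanwong25} as the two main tools. Before addressing the claims I would first record that the Riccati--Volterra equation for $\varphi$ is precisely of the form (\ref{ericcatisimple}) with coefficient $a \de \frac{\gamma\theta^2}{2\delta(1-\gamma)}$ and with $\lambda$ playing the role of the mean-reversion parameter $\kappa$. Since $p>1/(2\delta)$ gives $1/\delta < 2p$, the assumption $\lambda^2 - 2p\frac{\gamma}{1-\gamma}\theta^2\sigma^2 >0$ forces $\lambda^2 - 2a\sigma^2 = \lambda^2 - \frac{\gamma\theta^2\sigma^2}{\delta(1-\gamma)} > 0$, so Proposition~\ref{phanricattiexstienceglobal} yields a unique real global solution $\varphi$ on $[0,T]$; in particular $\varphi$ is bounded and $\varphi^2 \ge 0$. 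The lower bound in (i) is then immediate: under $\widetilde{\P}$ the variance still solves a Volterra--Heston equation with mean-reversion $\lambda>0$, so $V\ge 0$ and hence $\widetilde{\xi}_t(u) = \widetilde{\E}[V_u\gvn\F_t]\ge 0$; together with $r_u>0$, $0<\gamma<1$ and $\delta>0$ every term in the exponent of $M_t$ is nonnegative, whence $M_t \ge 1$ and one may take $c=1$.

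For the uniform $L^p$-bound I would first absorb the bounded quantities. Since $r$ is bounded and $\varphi$ is bounded on $[0,T]$, the weight $h(u) \de \frac{\gamma\theta^2}{2(1-\gamma)} + \frac{\delta\sigma^2}{2}\varphi^2(T-u)$ is bounded by some $H$, and extending the time integral to $[0,T]$ only enlarges the exponent, so that
$$M_t \le C_1 \exp\left\{H\,\widetilde{\E}\left[\int_0^T V_u\,du \,\Big|\, \F_t\right]\right\}.$$
The process $N_t \de \widetilde{\E}[\int_0^T V_u\,du \gvn \F_t]$ is a $\widetilde{\P}$-martingale, so $\exp\{pHN_t\}$ is a nonnegative $\widetilde{\P}$-submartingale and Doob's $L^q$-maximal inequality bounds $\widetilde{\E}[\sup_t \exp\{pqHN_t\}]$ by a multiple of $\widetilde{\E}[\exp\{pqH\int_0^T V_u\,du\}]$, which is finite by Theorem~\ref{thanwong25} once $\lambda^2 - 2pqH\sigma^2>0$. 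I would then pass from $\widetilde{\P}$ back to $\P$ by inserting the density $d\widetilde{\P}/d\P$ and splitting the expectation with H\"{o}lder's inequality, the exponential moments of the density being again controlled by Theorem~\ref{thanwong25}. It is exactly this H\"{o}lder splitting --- balancing the exponent of $M$, the Doob exponent, and the moments of the Radon--Nikodym density --- that pins down the precise constants in (\ref{eansatzconditions}), and in particular the factor $6$ in $\kappa^2 - 6\frac{\gamma^2}{(1-\gamma)^2}\theta^2\sigma^2>0$.

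To obtain the dynamics in (ii) I would differentiate the exponent of $M_t$, written as $\int_t^T \gamma r_u\,du + \int_t^T h(u)\widetilde{\xi}_t(u)\,du$, in the variable $t$. Using the $\widetilde{\P}$-analogue of the forward-variance SDE, $d\widetilde{\xi}_t(u) = \frac{1}{\lambda}R_\lambda(u-t)\sigma\sqrt{V_t}\,d\widetilde{B}_t$, the Stochastic Fubini Theorem~\ref{theoremstochasticfubini} lets me interchange the $du$- and $d\widetilde{B}_t$-integrals, while the moving lower limit contributes a boundary term $-(\gamma r_t + h(t)\widetilde{\xi}_t(t))\,dt = -(\gamma r_t + h(t)V_t)\,dt$ since $\widetilde{\xi}_t(t)=V_t$. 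Applying It\^o's formula to $M_t = \exp\{\cdot\}$ then produces a drift together with a single variance-driven martingale term; the second-order It\^o correction generates exactly the $\frac{\sigma^2}{2}\varphi^2$ contribution that the Riccati--Volterra equation for $\varphi$ is designed to cancel against the boundary and resolvent terms, collapsing the $\widetilde{\P}$-drift. Decomposing $d\widetilde{B}_t = \rho\,d\widetilde{W}_{1,t} + \sqrt{1-\rho^2}\,d\widetilde{W}_{2,t}$ splits this single martingale term into the two coefficients, explaining the $\rho$ and $\sqrt{1-\rho^2}$ prefactors in $U_{1t} = \rho\delta\sigma M_t\sqrt{V_t}\varphi(T-t)$ and $U_{2t} = \sqrt{1-\rho^2}\,\delta\sigma M_t\sqrt{V_t}\varphi(T-t)$. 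Finally, the Girsanov shift $d\widetilde{W}_{1,t} = dW_{1,t} - \frac{\gamma\theta}{1-\gamma}\sqrt{V_t}\,dt$ implicit in $d\widetilde{\P}/d\P$ converts the expression to the $\P$-Brownian motions and supplies the additional drift terms involving $U_{1t}$ displayed in (ii).

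For (iii), boundedness of $\varphi$ gives $U_{1t}^2+U_{2t}^2 = \delta^2\sigma^2 M_t^2 V_t \varphi^2(T-t) \le C\, M_t^2 V_t$, whence $\int_0^T U_{iu}^2\,du \le C(\sup_t M_t)^2\int_0^T V_u\,du$; raising to the power $p/4$ and applying Cauchy--Schwarz,
$$\E\left[\left(\int_0^T U_{iu}^2\,du\right)^{p/4}\right] \le C\,\E\left[\Big(\sup_{t}M_t\Big)^p\right]^{1/2}\E\left[\left(\int_0^T V_u\,du\right)^{p/2}\right]^{1/2},$$
where the first factor is finite by part (i) and the second by the polynomial moment bounds of Proposition~\ref{plemma31} (or again Theorem~\ref{thanwong25}). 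I expect the genuine obstacle to lie in part (i): tracking how exponents accumulate through Doob's inequality, the exponential-moment estimate, and the change of measure is what fixes the exact constants in (\ref{eansatzconditions}). By contrast, (ii) is a lengthy but essentially mechanical It\^o computation whose only delicacy is invoking the Riccati--Volterra identity at the right step, and (iii) is routine once (i) is in hand.
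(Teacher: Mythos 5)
Your proposal follows essentially the same route as the paper, which itself only sketches the argument: part (i) via the lower bound from nonnegativity of the exponent together with Doob's maximal inequality, H\"{o}lder's inequality for the change of measure, and the exponential-moment estimate of Theorem~\ref{thanwong25} (with Proposition~\ref{phanricattiexstienceglobal} guaranteeing the global Riccati solution); part (ii) via It\^o's formula, the forward-variance dynamics and the Stochastic Fubini Theorem~\ref{theoremstochasticfubini}; and part (iii) as a consequence of (i). Your elaboration of how the conditions (\ref{eansatzconditions}) arise from balancing the Doob and H\"{o}lder exponents is consistent with the paper's remark, so no further comparison is needed.
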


\begin{remark}
	The proof in part (i) is dependent on the assumption of deterministic interest rates, and utilises H\"{o}lder's and Doob's maximal inequalities (see \cite{textbook_karatzas_shreve_stochastic_calcuus}). Moreover, the finite bounds are established from the direct use Theorem \ref{thanwong25}, which requires Proposition \ref{phanricattiexstienceglobal} to be satisfied. Thus, establishing the dependence on conditions (\ref{eansatzconditions}) in the proof. Part (ii) utilises It\^o's lemma, as well as Theorem \ref{theoremstochasticfubini}, and part (iii) is an immediate result due to part (i).
\end{remark}

Using the results of Theorem \ref{than31} Han and Wong then proceed to verify that the Ansatz (\ref{eansatzhan}) for $J^\pi$ does indeed satisfy Definition \ref{dmartingaleoptimalityp}. Thus, there exists an optimal investment strategy $\pi^* \in \A$ such that $J^{\pi^*}$ is a martingale process.

\begin{theorem}\cite[Theorem 3.2]{han_wong_portfolio_utility_volterra_heston}\label{than32} Suppose the conditions (\ref{eansatzconditions}) in Theorem \ref{than31} hold and $\kappa^2-2\eta\sigma^2>0$, where
	$$\eta = \max \lbrace 2a |\theta| \underset{t\in [0,T]}{\sup}|A_t|, 2a(4a-1)\underset{t\in [0,T]}{\sup}|A_t|^2\rbrace,$$ 
	for some $a>1$, and $A_t= \frac{1}{1-\gamma}[\theta + \rho \delta \sigma \varphi(T-t)]$. Then the process
	\begin{equation}
	J_t^\pi = \frac{(\Pi^\pi_t)^\gamma}{\gamma}M_t, \quad t\in [0,T]
	\end{equation}
	with $(M)_{t\in [0,T]}$ given by (\ref{emtsol}) satisfies the martingale optimality principle (see Definition \ref{dmartingaleoptimalityp}), and the optimal strategy is given by
	\begin{equation}\label{epiopt}
	\pi^* = A_t
	\end{equation}
	Moreover,
	\begin{equation}
	\E[\underset{t\in [0,T]}{\sup}|\Pi_t^*|^q]<\infty,
	\end{equation}
	and $\pi^*$ is admissible.
\end{theorem}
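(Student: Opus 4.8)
The plan is to verify the three defining properties of the martingale optimality principle (Definition \ref{dmartingaleoptimalityp}) for the Ansatz $J_t^\pi = \frac{(\Pi^\pi_t)^\gamma}{\gamma}M_t$ one at a time, deferring the integrability estimates (which is where the extra condition $\kappa^2-2\eta\sigma^2>0$ is spent) to the end. Properties (i) and (ii) are immediate. At $t=T$ the exponent defining $M$ is an integral over the empty interval $[T,T]$, so $M_T=1$ and $J_T^\pi = (\Pi^\pi_T)^\gamma/\gamma = U(\Pi^\pi_T)$ for every $\pi\in\A$. At $t=0$ we have $\Pi^\pi_0=w_0$, while $M_0$ depends only on the deterministic forward-variance curve $\widetilde{\xi}_0(\cdot)$ and the deterministic rate $r$; hence $J_0^\pi = \frac{w_0^\gamma}{\gamma}M_0$ is a constant independent of $\pi$.

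The heart of the argument is the drift computation underlying property (iii). I would apply It\^o's product rule to $J_t^\pi=\frac{(\Pi^\pi_t)^\gamma}{\gamma}M_t$, inserting the wealth dynamics (\ref{ewealthprocess}) and the dynamics of $M$ from Theorem \ref{than31}(ii). Using $\langle W_1,W_2\rangle=0$, the $\gamma r_t$ contributions cancel and, after factoring out $(\Pi^\pi_t)^\gamma M_t V_t$, the finite-variation part collapses to a downward-opening quadratic in $\pi_t$. Completing the square, inserting the explicit $U_{1t},U_{2t}$, gives
$$dJ_t^\pi = -\tfrac{1-\gamma}{2}(\Pi^\pi_t)^\gamma M_t V_t(\pi_t-A_t)^2\,dt + (\text{local-martingale differential}),$$
where the maximiser is exactly $A_t=\frac{1}{1-\gamma}[\theta+\rho\delta\sigma\varphi(T-t)]$ and the constant term of the quadratic vanishes identically by the form of the drift of $M$ in Theorem \ref{than31}(ii) (which itself encodes the Riccati--Volterra equation for $\varphi$). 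Since $0<\gamma<1$, $\Pi^\pi\ge 0$, $M>0$ and $V\ge 0$, the drift is non-positive for every admissible $\pi$ and is identically zero when $\pi=\pi^*=A_t$; this simultaneously identifies the optimal strategy and shows that $J^\pi$ is a local supermartingale while $J^{\pi^*}$ is a local martingale.

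Next I would upgrade these local statements. Because $J^\pi=\frac{1}{\gamma}(\Pi^\pi_t)^\gamma M_t\ge 0$, its local-martingale part is bounded below (it equals the nonnegative process $J^\pi$ minus the constant $J_0^\pi$ minus the decreasing drift term), hence a genuine supermartingale by Fatou's lemma; together with the decreasing drift this makes $J^\pi$ a supermartingale for every $\pi\in\A$. For the martingale property of $J^{\pi^*}$ it then suffices to check $\E[J_T^{\pi^*}]=J_0^{\pi^*}$, which I would obtain from uniform integrability: the lower bound $M\ge c>0$ together with $\E[\sup_t|M_t|^p]<\infty$ and $\E[(\int_0^T U_{iu}^2\,du)^{p/4}]<\infty$ from Theorem \ref{than31}, combined with the admissibility moment $\E[(\Pi^{\pi^*}_T)^{p\gamma}]<\infty$, controls the stochastic-integral part.

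The remaining, and hardest, step is to show that $\pi^*=A_t$ is itself admissible in the sense of Definition \ref{dadmissible}, the crux being the moment bound $\E[\sup_{t\in[0,T]}|\Pi^{*}_t|^q]<\infty$. Since $A_t$ is deterministic, continuous, and hence bounded on $[0,T]$, conditions (i)--(iii) follow easily, and the wealth SDE has the explicit solution $\Pi^*_t=w_0\exp\{\int_0^t(r_u+\theta A_uV_u-\tfrac12 A_u^2V_u)\,du+\int_0^t A_u\sqrt{V_u}\,dW_{1,u}\}$. To bound its $q$-th moment, and then its running supremum via Doob's inequality, I would split $(\Pi^*)^q$ by the Cauchy--Schwarz inequality into a stochastic-exponential factor and a factor of the form $\exp\{\mathrm{const}\int_0^T V_u\,du\}$, and control each via the exponential-affine formula of Theorem \ref{thanwong25}. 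This is precisely where $\kappa^2-2\eta\sigma^2>0$ is used: with $\eta=\max\{2a|\theta|\sup_t|A_t|,\,2a(4a-1)\sup_t|A_t|^2\}$ chosen to dominate the two cross terms produced by the splitting, Proposition \ref{phanricattiexstienceglobal} guarantees a global solution of the associated scalar Riccati--Volterra equation, so the required exponential moments are finite. Assembling these estimates yields the admissibility of $\pi^*$ and the stated moment bound, completing the verification.
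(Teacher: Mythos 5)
Your proposal is correct and follows essentially the same route as the paper: an It\^o semimartingale decomposition of $J^\pi$ whose finite-variation part is a downward-opening quadratic in $\pi$ vanishing at its maximiser $A_t$, which simultaneously yields the supermartingale property for every admissible $\pi$, the optimal strategy $\pi^*=A_t$, and a driftless local martingale $J^{\pi^*}$; your completion of the square is just the explicit form of the paper's observation that $F(\pi^*,t)=0$ forces $F(\pi,t)\le 0$. The only minor divergence is that the paper promotes $J^{\pi^*}$ to a true martingale by recognising it as a stochastic exponential with bounded deterministic coefficient multiplying $\sqrt{V_t}$ and invoking Proposition \ref{pjaberexpmartingale}, whereas you argue via uniform integrability from the moment bounds of Theorem \ref{than31} -- both are sound, and your sketch of the admissibility of $\pi^*$ via Theorem \ref{thanwong25} and Proposition \ref{phanricattiexstienceglobal} correctly locates where the hypothesis $\kappa^2-2\eta\sigma^2>0$ is consumed.
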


\begin{remark}
	This result follows directly from the semimartingale decomposition of $J_t^\pi$, which is given by
	$$dJ_t^\pi = J_t^\pi F(\pi,t)dt + J_t^\pi \left(\frac{U_{1t}}{M_t}+\gamma\pi \sqrt{V_t}\right)dW_{1t}+J_t^\pi\frac{U_{2t}}{M_t}dW_{2t}$$
	where 
	$$F(\pi,t)=\pi^2\frac{\gamma(\gamma-1)}{2}V_t + \pi\left(\theta \sqrt{V_t}+\frac{U_{1t}}{M_t}\right) \gamma \sqrt{V_t} - \frac{\gamma}{2(1-\gamma)}\left(\theta \sqrt{V_t}+\frac{U_{1t}}{M_t}\right)^2,$$
	with (\ref{epiopt}) derived from $\frac{\partial F}{\partial \pi}=0$. Moreover, as $F(\pi,t)$ is quadratic in $\pi$ with negative leading coefficient ($\gamma-1<0$), and $F(\pi^*,t)=0 \implies F(\pi,t)\le0, \fa \pi\in \A, \, t\in [0,T]$. Therefore, the process $J^\pi$ is a semimartingale. Finally, the process 
	$$dJ_{t}^{\pi^*} = J_t^{\pi^*} \left(\frac{U_{1t}}{M_t}+\gamma\pi^* \sqrt{V_t}\right)dW_{1t}+J_t^{\pi^*}\frac{U_{2t}}{M_t}dW_{2t},$$
	is a martingale by Proposition \ref{pjaberexpmartingale}, and since $J_0^\pi$ is independent of $\pi$, and $J_T^\pi=\Pi^\pi_T$, we have that the process $J^\pi$ does indeed satisfy the martingale optimality principle as required.
\end{remark}

Now that we've established the solution provided by Han and Wong to the optimisation problem under a Volterra Heston model. Let's now provide the motivation required to construct the process (\ref{eansatzhan}).
\subsection{Construction of the Ansatz using the martingale distortion transformation}\label{sconstructionansatz}

The martingale distortion transformation is motivated by Proposition \ref{pdistortiontransform}, specifically equations (\ref{edistortion}), (\ref{edistortionsolution}) and (\ref{edistortionbrownian}). Tehranchi in 2004 (see \cite{tehranchi_2004_martingale_distortion_non_markovian}), recognised that the solution to the Merton portfolio optimisation problem was consistently displaying a similar form mentioned in Section \ref{soptimisationsolution}. His proposed method, the martingale distortion transformation, circumvented the dependency of Markovian structure, thus providing a method which also accommodates non-Markovian models. Let us now reformulate his proposed methodology under the financial model defined in Section \ref{sfinancialmarketandoptimisationproblem} with the help of \cite{foque_hu_2019_optimal_portfolio_fractional_martingale_distortion}.

From (\ref{ewealthprocess}) the Sharpe-ratio is given as 
\begin{equation}\label{esharpe}
\lambda(y)=\theta \sqrt{y},
\end{equation}
and is essentially bounded by Proposition \ref{plemma31}. We must now define a new probability measure $\widetilde{\P} \sim \P$, such that (\ref{edistortionbrownian}) is satisfied. With this condition in mind, let's define the Radon-Nikod\'ym density of $\widetilde{\P}$ with respect to $\P$ as
\begin{equation}\label{eprobmeasure}
\frac{d\widetilde{\P}}{d\P}=\exp\left(\frac{\gamma}{1-\gamma}\int_{0}^{T}\lambda(V_u)dW_u - \frac{\gamma^2}{2(1-\gamma)^2}\int_{0}^{T}\lambda^2(V_u)du\right).	
\end{equation}
Checking that this does indeed satisfy (\ref{edistortionbrownian}), we have from Girsanov's Theorem that the process $\widetilde{B}$ is given by
\begin{align}
d\widetilde{B}_u&=dB_t -\frac{\rho \gamma}{1-\gamma}\lambda(V_u)du,\\
&=\rho \left(dW_{1u}-\frac{\gamma \theta}{1-\gamma}\sqrt{V_u}du\right)+\sqrt{1-\rho^2}dW_{2u},\\
&=\rho d\widetilde{W}_{1u}+\sqrt{1-\rho^2}dW_{2u},
\end{align}
and is a Brownian motion under $\widetilde{\P}$ as required. Then, motivated by (\ref{edistortion}) and (\ref{edistortionsolution}), and using the results in \cite{foque_hu_2019_optimal_portfolio_fractional_martingale_distortion} and \cite{tehranchi_2004_martingale_distortion_non_markovian}, we arrive at the following proposition after substitution of (\ref{esharpe}).

\begin{proposition}\cite[Proposition 2.2, Theorem 3.1-3.2]{foque_hu_2019_optimal_portfolio_fractional_martingale_distortion, han_wong_portfolio_utility_volterra_heston}
	Under Assumption \ref{assump2} suppose conditions (\ref{eansatzconditions}) are satisfied, then the value process is given by
	\begin{equation}\label{eoptimal}
	\J_t^\pi = \frac{(\Pi^\pi_t)^\gamma}{\gamma}\left(\widetilde{\E}\left[\exp\left\lbrace \int_{t}^{T}\frac{\gamma}{\delta}\left(r_u +\frac{\theta^2V_u}{2(1-\gamma)}\right)du\right\rbrace\bigg| \F_t\right]\right)^\delta,
	\end{equation}
	where $\delta$ is given by (\ref{edistortionpower}) and optimal investment strategy $\pi^*$ is given by (\ref{epiopt}).
\end{proposition}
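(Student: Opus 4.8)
The plan is to prove the proposition by showing that the distortion representation (\ref{eoptimal}) is nothing other than the Ansatz $J_t^\pi=\frac{(\Pi_t^\pi)^\gamma}{\gamma}M_t$ of Theorems \ref{than31} and \ref{than32}, which has already been verified to satisfy the martingale optimality principle (Definition \ref{dmartingaleoptimalityp}) and to yield the optimal strategy $\pi^\ast=A_t$ in (\ref{epiopt}). Since that Ansatz identifies the value process, everything reduces to establishing the single identity
\[
\left(\widetilde{\E}\left[\exp\left\lbrace\int_t^T\frac{\gamma}{\delta}\left(r_u+\frac{\theta^2 V_u}{2(1-\gamma)}\right)du\right\rbrace\bigg|\F_t\right]\right)^\delta=M_t,
\]
i.e.\ to evaluating a conditional Laplace transform of $\int_t^T V_u\,du$ under $\widetilde{\P}$.

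First I would separate the deterministic contribution. As $r_u$ is deterministic and bounded it pulls out of the conditional expectation, and after raising to the power $\delta$ it produces exactly the factor $\exp\{\gamma\int_t^T r_u\,du\}$ appearing in $M_t$. What remains is $\big(\widetilde{\E}[\exp\{a\int_t^T V_u\,du\}\mid\F_t]\big)^\delta$ with $a=\frac{\gamma\theta^2}{2\delta(1-\gamma)}$, and here I would use the fact, recorded just before the proposition, that under $\widetilde{\P}$ the variance $V$ again solves a Volterra Heston equation, with mean-reversion speed $\lambda=\kappa-\frac{\gamma}{1-\gamma}\rho\theta\sigma$ replacing $\kappa$.

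The heart of the argument is the exponential-affine evaluation. Viewing $V$ as a coordinate of the affine Volterra process $(\log S,V)$ under $\widetilde{\P}$, I would apply Theorem \ref{t43} with $\nu=0$ and $f=(0,a)$ selecting the variance coordinate. The reduction (\ref{evarphi1})-(\ref{evarphi2}) then forces $\varphi_1=0$, so the Riccati-Volterra system (\ref{ericcativolterra43}) collapses to the scalar equation
\[
\varphi(t)=\int_0^t K(t-u)\left(\frac{\gamma\theta^2}{2\delta(1-\gamma)}-\lambda\varphi(u)+\frac{\sigma^2}{2}\varphi^2(u)\right)du
\]
that defines the Ansatz; this is precisely the form (\ref{ericcatisimple}), so its global solvability on $[0,T]$ follows from Proposition \ref{phanricattiexstienceglobal} under the first inequality in (\ref{eansatzconditions}), while Theorem \ref{thanwong25} supplies the finiteness of the transform. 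Rewriting (\ref{e433}) for this $f$ through the forward variance $\widetilde{\xi}_t(u)=\widetilde{\E}[V_u\mid\F_t]$ of Proposition \ref{pforwardvariance}, and cancelling the $\F_t$-measurable factor $\exp\{a\int_0^t V_u\,du\}$ that represents the already-realised variance, then gives
\[
\log\widetilde{\E}\left[\exp\left\lbrace a\int_t^T V_u\,du\right\rbrace\bigg|\F_t\right]=\int_t^T\left(a\,\widetilde{\xi}_t(u)+\frac{\sigma^2}{2}\widetilde{\xi}_t(u)\varphi^2(T-u)\right)du,
\]
and raising to the power $\delta$ (using $\delta a=\frac{\gamma\theta^2}{2(1-\gamma)}$) reproduces the two remaining terms of $M_t$ exactly, completing the identity.

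The main obstacle is to justify that the exponential-affine identity (\ref{e434}) genuinely holds, rather than only the local-martingale property of $\exp(Y)$ asserted in Theorem \ref{t43}: one must show that $\exp(Y)$ is a true martingale under $\widetilde{\P}$. This is exactly where the structural conditions (\ref{eansatzconditions}) enter, and I would run the argument of Proposition \ref{pjaberexpmartingale}, using the uniform moment bounds of Proposition \ref{plemma31} (with Remark \ref{r32} to cover the state- and time-dependent drift produced by the Girsanov shift to $\widetilde{\P}$) together with the H\"older estimates of Proposition \ref{plemma24} to control the localising stopping times and pass to the limit. A secondary check is that factoring out $r_u$ and substituting the forward variance commute with the conditioning; this follows from the martingale property of $u\mapsto\widetilde{\xi}_t(u)$ in Proposition \ref{pforwardvariance} and an application of the stochastic Fubini Theorem (Theorem \ref{theoremstochasticfubini}).
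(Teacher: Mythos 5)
Your proposal is correct and follows essentially the same route as the paper: change of measure to $\widetilde{\P}$, application of Theorem \ref{t43} with $\nu=0$ to reduce to the scalar Riccati--Volterra equation (\ref{ericcatisol}), global solvability via Proposition \ref{phanricattiexstienceglobal} under (\ref{eansatzconditions}), the forward-variance rewriting via Proposition \ref{pforwardvariance}, the true-martingale property of $\exp(Y)$ via Proposition \ref{pjaberexpmartingale}, and finally identification of the resulting $M_t$ with the verified Ansatz of Theorems \ref{than31}--\ref{than32}. The only difference is that you spell out a few steps the paper leaves implicit (factoring out the deterministic $r_u$, the vanishing of $\varphi_1$, and the cancellation of the $\F_t$-measurable realised-variance factor), which is harmless.
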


\begin{proof}
 Let us begin by writing the variance process (\ref{evolterraheston}) under the probability measure $\widetilde{\P}$ (\ref{eprobmeasure}),
\begin{equation}\label{evolterraheston2}
V_t = v_0 + \int_{0}^{t}K(t-u)(\kappa \phi-\lambda V_u)du + \int_{0}^{t}K(t-u)\sigma \sqrt{V_u}d\widetilde{B}_u,
\end{equation}

\noindent where $\lambda = \kappa - \frac{\gamma}{1-\gamma}\rho \theta \sigma$. Then from (\ref{eoptimal}), $J_0^\pi$ is independent of $\pi$ and takes the form
\begin{equation}\label{ehan_j0}
J_0^\pi = \frac{w_0^\gamma}{\gamma}\left(\widetilde{\E}\left[\exp\left\lbrace \int_{0}^{T}\frac{\gamma}{\delta}\left(r_u +\frac{\theta^2V_u}{2(1-\gamma)}\right)du\right\rbrace\right]\right)^\delta.
\end{equation}

\noindent Define the process $M=\lbrace M_t, \F_t | 0\le t \le T\rbrace$, with 
\begin{equation}\label{emt1}
	M_t^{1/\delta} = \widetilde{\E}\left[\exp\left\lbrace \int_{t}^{T}\frac{\gamma}{\delta}\left(r_u +\frac{\theta^2V_u}{2(1-\gamma)}\right)du\right\rbrace\bigg| \F_t \right].
\end{equation}

\noindent Now, let $f=\frac{\gamma \theta^2}{2\delta(1-\gamma)}$, $X$ be given by (\ref{evolterraheston2}), and $\nu=0$. Then the Riccati-Volterra equation (\ref{ericcativolterra43}) takes the form,

\begin{equation}\label{ericcatisol}
\varphi = K* \left(\frac{\gamma \theta^2}{2\delta(1-\gamma)} - \lambda \varphi  + \frac{\sigma^2}{2}\varphi^2\right),
\end{equation}

\noindent and from Theorem \ref{thanriccatiexistencelocal}, existence and uniqueness for (\ref{ericcatisol}) is established. Moreover, if $\lambda>0$ and $\lambda^2 -\frac{\gamma\theta^2\sigma^2}{\delta(1-\gamma)}>0$, then (\ref{ericcatisol}) has a unique nonnegative global solution by Proposition \ref{phanricattiexstienceglobal}. The process (\ref{e431}) is then given by
\begin{equation}
Y_t = Y_0 + \int_{0}^{t}\varphi(T-u)\sigma \sqrt{V_u}d\widetilde{B}_u - \frac{1}{2}\int_{0}^{t}\varphi(T-u)\sigma^2V_u\varphi(T-u)^Tdu,
\end{equation}
with
\begin{equation}
Y_0 = \int_{0}^{T}\left(\frac{\gamma \theta^2}{2\delta(1-\gamma)}V_0 -\lambda \varphi(u)V_0+\frac{1}{2}\varphi^2(u)\sigma^2V_0\right)du.
\end{equation}
As (\ref{ericcatisol}) is essentially bounded we have by Proposition \ref{pjaberexpmartingale} that $\exp(Y)$ is a martingale. Thus, we obtain the following relationship from (\ref{e433}) and the exponential-affine transform representation (\ref{e434})
\begin{equation*}
\begin{split}
\widetilde{\E}\left[\exp \left\lbrace\int_{0}^{T}\frac{\gamma \theta^2}{2\delta(1-\gamma)}V_udu \right\rbrace \bigg |\F_t\right]=&\exp \bigg\lbrace \widetilde{\E}\left[\int_{0}^{T}\frac{\gamma \theta^2}{2\delta(1-\gamma)}V_udu \bigg | \F_t \right]\\
 &+ \frac{1}{2}\int_{t}^{T}\varphi(T-u)^2\widetilde{\E}[V_u|\F_t]du\bigg \rbrace. 
\end{split}
\end{equation*}

\noindent Then, using Proposition \ref{pforwardvariance} with respect to the process (\ref{evolterraheston2}) and simplifying, we get
$$\widetilde{\E}\left[\exp \left\lbrace\int_{t}^{T}\frac{\gamma \theta^2V_u}{2\delta(1-\gamma)}du \right\rbrace \bigg |\F_t\right]=\exp \bigg\lbrace \int_{t}^{T}\left(\frac{\gamma \theta^2\widetilde{\xi}_t(u)}{2\delta(1-\gamma)} + \frac{\sigma^2\widetilde{\xi}_t(u)}{2}\varphi^2(T-u) \right)du.$$

\noindent The above result combined with (\ref{emt1}) then yields
\begin{equation}\label{emtsol}
M_t = \exp \left\lbrace \int_{t}^{T}\left(\gamma r_u + \frac{\gamma \theta^2 \widetilde{\xi}_t(u)}{2(1-\gamma)}+\frac{\delta\sigma^2\widetilde{\xi}_t(u)}{2}\varphi^2(T-u)\right)du\right\rbrace.
\end{equation}

\noindent Re-writing (\ref{eoptimal}) we have
\begin{equation}\label{eansatzsimple}
J_t^\pi = \frac{(\Pi^\pi_t)^\gamma}{\gamma}M_t, \quad t\in [0,T].
\end{equation}

\noindent By then comparing (\ref{eansatzsimple}) with the Ansatz in \cite{han_wong_portfolio_utility_volterra_heston} (\ref{eansatzhan}), we can see that we've arrived at the same result. Then, from Theorem's \ref{than31}-\ref{than32}, we can also see that the optimal investment strategy $\pi^*$, is indeed given by (\ref{epiopt}).
\end{proof}

Therefore, using the martingale distortion transformation developed by Tehranchi (see \cite{tehranchi_2004_martingale_distortion_non_markovian}), we have a means of constructing a good initial hypothesis for the Ansatz $J^\pi$, which satisfies Definition \ref{dmartingaleoptimalityp} under the Volterra Heston model. From this approach we've shed light on the motivation behind the construction of the Ansatz in \cite{han_wong_portfolio_utility_volterra_heston} which opens up future research opportunities to consider additional, and more general, utility functions.


\chapter{A finite dimensional approach to solving the portfolio optimisation problem under rough Heston models}\label{mbd}
In this chapter we consider a finite dimensional approximation approach to solving the optimisation problem (\ref{eoptimisation}) given by \bd in \cite{bauerle_desmettre_2019_portfolio_optimization_des_baulere}. The key difference between this approach and the martingale distortion approach given in Section \ref{s_martingale_distortion_transformation}, is that it casts the non-Markovian optimisation problem into the classical optimisation framework, which utilises the Hamilton-Jacobi-Bellman (HJB) equation. We begin the chapter by introducing the rough Heston model used in this approach, and then introduce the finite dimensional approach. Finally, we conclude the chapter with a formal derivation of the HJB equation which naturally leads to the solution of the optimisation problem (\ref{eoptimisation}) under this approach.

\section{A rough Heston model via the Marchaud fractional derivative}

Under the financial market model defined in Section \ref{sfinancialmarketandoptimisationproblem}, with filtration $\FF = \lbrace \F_t \rbrace_{0 \le t \le T}$ given by the augmented filtration with respect to $W$ (\ref{e-fil-aug-bm}), B\"auerle and Desmettre consider a rough volatility model with Hurst index $H\in (0,1/4)$ (see \cite{bauerle_desmettre_2019_portfolio_optimization_des_baulere}). They achieve this by considering the \textit{Marchaud fractional derivative} (see \cite{textbook_fractional_integrals_derivatives})
\begin{equation}
D_0^{\alpha+1}f(t)=f(t)\frac{t^{-\alpha-1}}{\Gamma(-\alpha)}+ \frac{\alpha+1}{\Gamma(-\alpha)}\int_{0}^{t}\frac{f(t)-f(u)}{(t-u)^{\alpha+2}}du,
\end{equation}

\noindent with $\alpha = 2H -1 \in (-1,1/2)$. The main reason for B\"auerle and Desmettre to consider the Marchaud fractional derivative over the Riemann-Liouville fractional derivative used in \cite{guennoun_et_al_2018_asymptotic_behavior_fractional_heston}, is that the Marchaud fractional derivative is defined for \h $\delta$-continuous functions, with $\alpha+1<\delta \le 1$. Whereas, the Riemann-Liouville fractional derivative
requires $f$ to be absolutely continuous, which is not possible to ensure by direct consequence of the following result.
\begin{proposition}\cite[Lemma 7.1]{bauerle_desmettre_2019_portfolio_optimization_des_baulere}\label{pb_zholder}
	The paths of (\ref{eb_zdynamics}) are almost H\"{o}lder continuous on $[0,T]$ of any order $\delta<1/2$.
\end{proposition}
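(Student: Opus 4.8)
The plan is to establish the stated regularity by a moment estimate on the increments of the process, followed by the Kolmogorov--Chentsov continuity criterion, and then to let the moment order tend to infinity so as to recover every exponent strictly below $1/2$. The decisive structural observation is that the process governed by (\ref{eb_zdynamics}) is an It\^o process driven by $W$ whose roughness is inherited solely from its Brownian component: its paths are no smoother than those of Brownian motion but no rougher either. This is precisely why \bd restrict to $H\in(0,1/4)$, since $\alpha+1=2H<1/2$ then leaves room below the critical exponent $1/2$ for the \h regularity required to make the Marchaud derivative well defined.

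First I would write the increment $Z_t-Z_s$, for $0\le s<t\le T$, as the sum of a drift contribution $\int_s^t b_u\,\di u$ and a stochastic-integral contribution $\int_s^t \sigma_u\,\di W_u$. For the stochastic part I would apply the Burkholder--Davis--Gundy inequality together with Jensen's inequality to obtain, for every even integer $p\ge 2$,
\begin{equation}
\E\left[\left|\int_s^t \sigma_u\,\di W_u\right|^p\right]\le C_p\,\E\left[\left(\int_s^t \sigma_u^2\,\di u\right)^{p/2}\right]\le C_p\,|t-s|^{p/2-1}\int_s^t \E[|\sigma_u|^p]\,\di u.
\end{equation}
For the drift part H\"older's inequality produces a bound of order $|t-s|^{p-1}\int_s^t \E[|b_u|^p]\,\di u$, which is of higher order in $|t-s|$ near the diagonal and hence dominated by the diffusion term. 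Provided the coefficients obey the uniform moment bounds $\sup_{u\in[0,T]}\E[|\sigma_u|^p]<\infty$ and $\sup_{u\in[0,T]}\E[|b_u|^p]<\infty$, these estimates combine to
\begin{equation}
\E[|Z_t-Z_s|^p]\le C\,|t-s|^{p/2}, \qquad 0\le s<t\le T.
\end{equation}

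Then I would invoke the Kolmogorov--Chentsov theorem: the bound above yields a modification of $Z$ whose paths are \h continuous of every order $\delta<(p/2-1)/p=1/2-1/p$ on $[0,T]$. Since $p$ may be taken arbitrarily large, the supremum of admissible exponents is $1/2$, giving \h continuity of every order $\delta<1/2$, which is the assertion. An attractive alternative, avoiding the explicit Kolmogorov argument, is to observe that $Z$ has the convolution form $Z=Z_0+K*(b\,\di t+\di M)$ with the \emph{constant} kernel $K\equiv 1$ and $M=W$; this kernel satisfies Assumption \ref{assump1} with associated constant $\gamma=1$ (indeed $\int_0^h 1\,\di t=O(h)$ and the second difference vanishes), so Proposition \ref{plemma24} applies verbatim and returns \h continuity of every order below $\gamma/2=1/2$ in one stroke.

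I expect the main obstacle to lie in verifying the uniform moment bounds on the coefficients, rather than in the continuity machinery itself. If the drift in (\ref{eb_zdynamics}) is of mean-reverting (Ornstein--Uhlenbeck) type it involves $Z$ itself, so the requirement $\sup_u\E[|b_u|^p]<\infty$ presupposes an a priori moment bound $\sup_u\E[|Z_u|^p]<\infty$. I would secure this either from the explicit Gaussian law of the linear equation (\ref{eb_zdynamics}) or from the linear-growth moment estimate of Proposition \ref{plemma31} together with Remark \ref{r32}, which covers exactly the state- and time-dependent coefficients arising here. Once these bounds are in hand, the remaining steps are routine.
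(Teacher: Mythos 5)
Your argument is correct and is essentially the standard one: the paper itself offers no proof beyond citing Lemma 7.1 of B\"auerle--Desmettre, whose argument is exactly your BDG-plus-Kolmogorov--Chentsov moment estimate on the increments of $Z$, with the required bound $\sup_{u\le T}\E[|Z_u|^p]<\infty$ supplied by the linear-growth moment estimate (Proposition \ref{plemma31} with Remark \ref{r32}); your alternative route through the constant kernel ($\gamma=1$ in Assumption \ref{assump1}) and Proposition \ref{plemma24} is equally valid and arguably cleaner. One small slip: (\ref{eb_zdynamics}) is a square-root (CIR) diffusion, not a linear Gaussian equation, so the ``explicit Gaussian law'' you mention is not available for the a priori moment bounds --- but the linear-growth estimate you cite as the alternative covers exactly this case, so the proof stands.
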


\begin{proof}
	See Lemma 7.1 in \cite{bauerle_desmettre_2019_portfolio_optimization_des_baulere}.
\end{proof}

\noindent Moreover, as $\delta<1/2$ (see Proposition \ref{pb_zholder})), $\alpha$ has to be from the interval $(-1,-1/2) \implies H\in (0,1/4)$. Now, to define the variance process $V$ in (\ref{estockprocess}) we begin by first introducing the volatility process $\nu = \lbrace \nu_t,\F_t \sep 0\le t \le T\rbrace$,

\begin{equation}\label{eb_rough_volatility}
\nu_t = \nu_0  + Z_t \frac{t^{-\alpha-1}}{\Gamma(-\alpha)}+ \frac{\alpha +1}{\Gamma(-\alpha)}\int_{0}^{t}\frac{Z_t-Z_u}{(t-u)^{\alpha+2}}du,
\end{equation}
where
\begin{equation}\label{eb_zdynamics}
dZ_t = \kappa(\phi - Z_t)dt + \sigma\sqrt{Z_t}dB_t,
\end{equation}

\noindent with $B$ defined by (\ref{eBt}), and parameters $(\nu_0, \kappa, \phi, \sigma, Z_0)\in \R_+$. The paths of the volatility process (\ref{eb_rough_volatility}) exhibit a rough behaviour, which is illustrated in \cite{bauerle_desmettre_2019_portfolio_optimization_des_baulere}. Moreover,
$$\underset{\alpha\downarrow -1}{\lim}D_0^{\alpha+1}f(t)=f(t),$$
meaning, that in the limiting case (i.e. $\alpha\downarrow -1$), the classical Heston model is obtained.

 Now, let $y=(t-u)x$, with $0\le u \le t \le T$, and as $\alpha \in (-1,-1/2)$ we have
\begin{align*}
(\alpha+1)\Gamma(\alpha+1)&=\int_{0}^{\infty}y^{\alpha+1}e^{-y}dy,
\intertext{by change of variables we get}
(\alpha+1)\Gamma(\alpha+1) &=\int_{0}^{\infty}((t-u)x)^{\alpha+1}e^{-(t-u)x}(t-u)dx,
\end{align*}
thus,
\begin{equation}
(t-u)^{-\alpha-2}=\frac{\Gamma(-\alpha)}{\alpha+1}\int_{0}^{\infty}e^{-(t-u)x}\widetilde{\mu}(dx),\quad \text{with $\widetilde{\mu}(dx)=\frac{x^{\alpha+1}dx}{\Gamma(-\alpha)\Gamma(\alpha+1)}$}.
\end{equation}
Using this result along with Fubini's Theorem, we obtain
\begin{align}
\nu_t &= \nu_0  + Z_t \frac{t^{-\alpha-1}}{\Gamma(-\alpha)}+ \int_{0}^{t}(Z_t-Z_u) \left(\int_{0}^{\infty}e^{-(t-u)x}\widetilde{\mu}(dx)\right)du, \nonumber \\
 &= \nu_0  + Z_t \frac{t^{-\alpha-1}}{\Gamma(-\alpha)}+ \int_{0}^{\infty} \left(\int_{0}^{t}(Z_t-Z_u)e^{-(t-u)x}du\right)\widetilde{\mu}(dx), \nonumber \\
  &= \nu_0  + Z_t \frac{t^{-\alpha-1}}{\Gamma(-\alpha)}+ \int_{0}^{\infty} \widetilde{Y}_t^x\widetilde{\mu}(dx), \label{eb_volatility_process}
\end{align}
where
\begin{equation}\label{eb_y_solution}
\widetilde{Y}_t^x\de \int_{0}^{t}(Z_t-Z_u)e^{-(t-u)x}du.
\end{equation}

\noindent Now, re-writing $\widetilde{Y}_t^x$ as
\begin{align*}
\widetilde{Y}_t^x &= Z_te^{-tx}\int_{0}^{t}e^{ux}du - e^{-tx}\int_{0}^{t}Z_ue^{ux}du,\\
&= Z_t\frac{1}{x}(1-e^{-tx}) - e^{-tx}\int_{0}^{t}Z_ue^{ux}du,
\end{align*}

\noindent from It\^o's lemma and the Fundamental Theorem of Calculus, we can see that $\widetilde{Y}_t^x$ satisfies the stochastic differential equation
\begin{align}
d\widetilde{Y}_t^x&= dZ_t\left(\frac{1}{x}(1-e^{-tx})\right) + Z_te^{-tx}dt + xe^{-tx}dt\int_{0}^{t}Z_ue^{tu}du-e^{-tx}Z_te^{tx}dt, \nonumber \\ 
&=\left(\frac{1}{x}(1-e^{-tx})\kappa(\phi - Z_t)-x\widetilde{Y}_x^t\right)dt+\frac{1}{x}(1-e^{-tx})\sigma\sqrt{Z_t}dB_t. \label{eb_y_sde}
\end{align}

\noindent Therefore, $\widetilde{Y}_t^x$ is a diffusion process, and by extension a Markov process. Unfortunately, the volatility process $\P(\nu\ge0) <1$. To remedy this short-coming, B\"auerle and Desmettre consider a measurable function $a:\R\to \R_+$, which then allows them to define the variance process $V$, by $V_t \de a(\nu_t), \fa t\in [0,T]$. Recall that the stock price process $S_1$ defined by (\ref{estockprocess}), reads
$$dS_{1,t} = S_{1,t}(r_t + \theta V_t)dt + S_{1,t} \sqrt{V_t}dW_{1,t}. $$

\noindent Substituting $a(\nu_t)$ we then get,
\begin{equation}\label{eb_stock_process}
dS_{1,t} = S_{1,t}(r_t + \theta a(\nu_t))dt + S_{1,t} \sqrt{a(\nu_t)}dW_{1,t}.
\end{equation}

\noindent Therefore, the wealth process (\ref{ewealthprocess}) becomes
\begin{equation}\label{eb_wealth_process}
	d\Pi^\pi_t = \Pi^\pi_t(r_t + \theta \pi_ta(\nu_t)) dt + \Pi^\pi_t\pi_t \sqrt{a(\nu_t)}  dW_{1,t},
\end{equation}

\noindent where we assume that $\Pi^\pi_0 = w_0 >0$, is the given initial wealth. Now, in order to solve the optimisation problem (\ref{eoptimisation}) in the classical framework (see Proposition \ref{pdistortiontransform}), \bd consider a finite dimensional approximation approach which is inspired by the work in \cite{carmon_et_al_2000}, given by the next section.

\section{A finite dimensional approximation of the optimisation problem}

The main idea in the finite dimensional approximation used in \cite{bauerle_desmettre_2019_portfolio_optimization_des_baulere}, is to approximate $\mu$ in (\ref{eb_volatility_process}) by a discrete measure with a finite number of atoms. This procedure then casts the non-Markovian process $\nu$, into a finite sum of diffusion processes, and thus allowing $\nu$ to be represented as a diffusion process. This is achieved by a quantization of $\mu$. Let us begin by defining the partition 
\begin{equation}\label{eb_partition}
\ZZ^n \de \lbrace 0 < \xi_0^n < \dots < \xi_n^n < \infty \rbrace,
\end{equation}

\noindent and define the barycentre of $\mu$ on the respective intervals $(\xi_i^n,\xi_{i+1}^n)$ for $i=0,\dots,n-1$ by 
\begin{align}
x_{i+1}^n &\de \frac{\int_{\xi_i^n}^{\xi_{i+1}^n}x\mu(dx)}{\int_{\xi_i^n}^{\xi_{i+1}^n}\mu(dx)} \label{eb_x_i},
\intertext{and the mass on the atom for $i=0,\dots,n-1$ by}
q_{i+1}^n &\de \int_{\xi_i^n}^{\xi_{i+1}^n}\mu(dx) \label{eb_q_i}.
\intertext{Then, the corresponding measure is given by}
\mu^n &\de \sum_{i=1}^{n}q_i^n\delta_{x_i^n} \label{eb_mu_n},
\end{align}
where $\delta_x$ is the Dirac measure on $x$. Now, in order ensure convergence, the following assumption on $\ZZ^n$ is required.

\begin{assumption}\label{ab_partition}Let $\ZZ^n$ be given by (\ref{eb_partition}) and assume
\begin{enumerate}[label=(\roman*)]
	\item $\xi_0^n\to 0$ and $\underset{n\to \infty}{\lim}\xi_n^n\to \infty$,
	\item $\delta(\ZZ^n)\de \underset{n\to \infty}{\lim}\max_{i=0,\dots,n-1}|\xi_{i+1}^n-\xi_i^n|\to 0$, and
	\item $\ZZ^n \subset \ZZ^{n+1}$.
\end{enumerate}
\end{assumption} 

\noindent From this assumption \bd obtain the following convergence results.

\begin{proposition}\cite[Lemma 3.1]{bauerle_desmettre_2019_portfolio_optimization_des_baulere}\label{pb_convergence}
	Suppose $f\in L^1(\mu)$ and the sequence $(\ZZ^n)_{n\ge1}$ given by (\ref{eb_partition}) satisfies Assumption \ref{ab_partition}. Then
	\begin{equation}\label{eb_convergence}
	\int fd\mu^n \to \int f d\mu, \quad n\to \infty,
	\end{equation}
	if $f\ge0$ and $f$ is convex. 
\end{proposition}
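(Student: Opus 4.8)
The plan is to sandwich $\int f\,d\mu^n$ between a $\limsup$ estimate obtained from Jensen's inequality and a $\liminf$ estimate obtained from Fatou's lemma, and to show both estimates coincide with $\int f\,d\mu$. The two structural facts I would lean on are that a convex $f$ on the open half-line $(0,\infty)$ is automatically continuous there, and that by construction each barycentre $x_{i+1}^n$ in (\ref{eb_x_i}) lies in the cell $(\xi_i^n,\xi_{i+1}^n)$ over which it is computed.

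First I would establish the upper bound. Since $x_{i+1}^n$ is, by (\ref{eb_x_i}), precisely the mean of the identity map under the probability measure $\mu(dx)/q_{i+1}^n$ restricted to $(\xi_i^n,\xi_{i+1}^n)$, convexity of $f$ and Jensen's inequality give $f(x_{i+1}^n)\le \frac{1}{q_{i+1}^n}\int_{\xi_i^n}^{\xi_{i+1}^n} f\,d\mu$, so that $q_{i+1}^n f(x_{i+1}^n)\le \int_{\xi_i^n}^{\xi_{i+1}^n} f\,d\mu$. Summing over the cells of the partition and using $f\ge 0$ to discard the uncovered tails, I obtain
$$\int f\,d\mu^n = \sum_{i=0}^{n-1} q_{i+1}^n f(x_{i+1}^n) \le \int_{\xi_0^n}^{\xi_n^n} f\,d\mu \le \int f\,d\mu,$$
with $\mu^n$ as in (\ref{eb_mu_n}). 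This holds for every $n$, hence $\limsup_{n}\int f\,d\mu^n \le \int f\,d\mu$.

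For the lower bound I would recast the discrete integral as an integral against $\mu$ of the step function $g_n$ defined by $g_n(x)\de f(x_{i+1}^n)$ for $x\in(\xi_i^n,\xi_{i+1}^n)$ and $g_n(x)\de 0$ otherwise, so that $\int f\,d\mu^n=\int_0^\infty g_n\,d\mu$. Fix $x\in(0,\infty)$: by Assumption \ref{ab_partition}(i) we have $\xi_0^n<x<\xi_n^n$ for all large $n$, so $x$ lies in some cell whose barycentre obeys $|x_{i+1}^n-x|\le \delta(\ZZ^n)\to 0$ by Assumption \ref{ab_partition}(ii), and continuity of $f$ then gives $g_n(x)\to f(x)$ pointwise. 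Since $g_n\ge 0$, Fatou's lemma yields $\liminf_n\int f\,d\mu^n=\liminf_n\int_0^\infty g_n\,d\mu \ge \int_0^\infty f\,d\mu=\int f\,d\mu$. Combining the two estimates pins the limit: $\lim_n\int f\,d\mu^n=\int f\,d\mu$, the common value being finite because $f\in L^1(\mu)$.

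The step I expect to demand the most care is the coordination of the two inequalities on a noncompact domain, since $\mu$ need not be finite — indeed the measure $\widetilde\mu$ appearing in (\ref{eb_volatility_process}) has infinite total mass — so one cannot simply invoke weak convergence of probability measures. Nonnegativity of $f$ is exactly what renders the tail regions $(0,\xi_0^n)\cup(\xi_n^n,\infty)$ harmless in the Jensen estimate, and Fatou's lemma is what permits passage to the limit in the lower bound without any domination or uniform-integrability hypothesis. As a remark, I would note an alternative route for the lower half: the nestedness Assumption \ref{ab_partition}(iii), together with the same Jensen inequality applied to the splitting of a single cell, shows that $n\mapsto\int f\,d\mu^n$ is nondecreasing and hence convergent; the pointwise/Fatou argument is then needed only to identify its limit.
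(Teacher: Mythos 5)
Your argument is correct, and it is more self-contained than what the thesis offers: the text defers the proof of this lemma to Section 7.3 of B\"auerle and Desmettre, and the only internal evidence of the intended route is the remark immediately following the proposition, which stresses that the convergence in (\ref{eb_convergence}) is \emph{monotone increasing}. That points to a proof built on the nestedness condition (iii) of Assumption \ref{ab_partition}: splitting a cell and applying Jensen's inequality to the two halves shows $n\mapsto\int f\,d\mu^n$ is nondecreasing and bounded above by $\int f\,d\mu$, after which one identifies the supremum. Your sandwich argument reaches the same conclusion from conditions (i) and (ii) alone --- Jensen per cell for the upper bound, continuity of the convex $f$ on $(0,\infty)$ plus Fatou for the lower bound --- and you correctly record the monotone variant as an alternative. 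What the monotone route buys is precisely the qualitative statement $\int f\,d\mu^n\uparrow\int f\,d\mu$ that the thesis uses afterwards (the monotone convergence $a(\nu_t^n)\uparrow a(\nu_t)$); what your route buys is independence from the nestedness hypothesis. Two small points to tighten. First, your pointwise convergence $g_n(x)\to f(x)$ fails at points $x\in\bigcup_n\ZZ^n$, where your $g_n$ vanishes by fiat; you should claim convergence only for $\mu$-a.e.\ $x$ and note that this countable set is $\mu$-null because the measure being quantized, $\widetilde{\mu}(dx)=\frac{x^{\alpha+1}dx}{\Gamma(-\alpha)\Gamma(\alpha+1)}$, is atomless. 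Second, cells with $q_{i+1}^n=0$ make (\ref{eb_x_i}) a $0/0$ expression and should be discarded (or the barycentre defined arbitrarily there); this is harmless since such cells contribute nothing to $\mu^n$, but the Jensen step as written presupposes $q_{i+1}^n>0$.
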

\begin{proof}
	See Section 7.3 in \cite{bauerle_desmettre_2019_portfolio_optimization_des_baulere}.
\end{proof}
\begin{remark}
	It is also important to note that due to the conditions in Assumption \ref{ab_partition}, the convergence of (\ref{eb_convergence}) is monotone increasing.
\end{remark}

\noindent Now, from (\ref{eb_x_i})-(\ref{eb_mu_n}), let us denote the finite dimensional approximate volatility process by
\begin{equation}\label{eb_volatility_approx}
\nu_t^n \de \nu_0  + Z_t \frac{t^{-\alpha-1}}{\Gamma(-\alpha)}+ \int_{0}^{\infty} \widetilde{Y}_t^x\widetilde{\mu}(dx)=\nu_0  + Z_t \frac{t^{-\alpha-1}}{\Gamma(-\alpha)}+ \sum_{i=1}^{n} \widetilde{Y}_t^{x_i^n}\widetilde{q}_i^n,
\end{equation}

\noindent  which leads naturally to the following convergence theorem in \cite{bauerle_desmettre_2019_portfolio_optimization_des_baulere}.

\begin{theorem}\cite[Theorem 3.2]{bauerle_desmettre_2019_portfolio_optimization_des_baulere}
	Let $a$ be a function of class $C^2(\R;\R_+)$ and convex. Then, under the assumptions of Proposition \ref{pb_convergence}
	\begin{equation}
	\underset{n\to \infty}{\lim}a(\nu_t^n)\uparrow a(\nu_t), \quad 0\le t < \infty,  \, \P\text{-a.s..}
	\end{equation}	
\end{theorem}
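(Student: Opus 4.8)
The plan is to reduce the claim to a monotone–convergence statement for the quantized measures $\widetilde{\mu}^n$ and then transport it through $a$. Since $\nu_t^n$ in (\ref{eb_volatility_approx}) and $\nu_t$ in (\ref{eb_volatility_process}) differ only in that the integral $\int_0^\infty \widetilde{Y}_t^x\,\widetilde{\mu}(dx)$ is replaced by the quantized integral $\sum_{i=1}^n \widetilde{Y}_t^{x_i^n}\widetilde{q}_i^n=\int_0^\infty \widetilde{Y}_t^x\,\widetilde{\mu}^n(dx)$, with $\widetilde{\mu}^n$ the barycentric quantization of $\widetilde{\mu}$ built as in (\ref{eb_x_i})--(\ref{eb_mu_n}), it suffices to fix $t$ and (a.s.) $\omega$ and show $\int \widetilde{Y}_t^x\,\widetilde{\mu}^n(dx)\uparrow\int \widetilde{Y}_t^x\,\widetilde{\mu}(dx)$. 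This is exactly the setting of Proposition \ref{pb_convergence} (applied with $\mu=\widetilde{\mu}$) to the integrand $x\mapsto \widetilde{Y}_t^x$, together with the monotone–increasing behaviour recorded in the Remark following it.

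To invoke Proposition \ref{pb_convergence} I must verify, for fixed $t$ and a.e.\ $\omega$, that the map $f(x):=\widetilde{Y}_t^x=\int_0^t (Z_t-Z_u)e^{-(t-u)x}\,du$ from (\ref{eb_y_solution}) is $\widetilde{\mu}$–integrable, nonnegative, and convex in $x$. Integrability is the routine part: substituting $s=t-u$ and using the \h continuity of $Z$ from Proposition \ref{pb_zholder}, one gets $|Z_t-Z_{t-s}|\le C\,s^{\delta}$ for some $\delta<1/2$, so that $|f(x)|\le C\,\Gamma(\delta+1)\,x^{-(\delta+1)}$ for large $x$ while $f$ stays bounded near $0$; since $\widetilde{\mu}(dx)$ has density proportional to $x^{\alpha+1}$ with $\alpha+1<\delta$, the integral $\int_0^\infty |f(x)|\,\widetilde{\mu}(dx)$ converges. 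Differentiating twice under the integral sign gives $f''(x)=\int_0^t (Z_t-Z_u)(t-u)^2 e^{-(t-u)x}\,du$, which is the quantity whose sign must be controlled for convexity.

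The main obstacle is precisely establishing nonnegativity and convexity of $f$: because $Z_t-Z_u$ changes sign along a typical path, neither $f\ge0$ nor $f''\ge0$ is automatic, and this is where the structure of the square–root process (\ref{eb_zdynamics}) and the choice of the convex lifting $a$ must enter. I would attack this by rewriting $f$ via the Fubini manipulation already used to pass from (\ref{eb_rough_volatility}) to (\ref{eb_volatility_process}), aiming to expose $f$ as a nonnegative convex functional of the increments of $Z$; absent a clean sign, the fallback is to localise on the event where the relevant increments have the right sign and to control the complement using the moment bounds of Proposition \ref{plemma31}.

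Granting that $f$ is nonnegative, convex, and integrable, Proposition \ref{pb_convergence} and its Remark yield $\nu_t^n\uparrow\nu_t$ $\P$-a.s.\ for each fixed $t$, since the remaining terms $\nu_0+Z_t\, t^{-\alpha-1}/\Gamma(-\alpha)$ are common to $\nu_t^n$ and $\nu_t$. Finally, as $a\in C^2(\R;\R_+)$ is continuous, $a(\nu_t^n)\to a(\nu_t)$, and because the lifting $a$ is nondecreasing on the range swept out by $(\nu_t^n)_n$ the convergence inherits the monotone–increasing character of $\nu_t^n\uparrow\nu_t$, giving $a(\nu_t^n)\uparrow a(\nu_t)$. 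I would flag here that the monotonicity on the $a$-side genuinely requires $a$ to be increasing where it is evaluated: convexity alone is insufficient, since a convex $a$ can map an increasing sequence to a decreasing one, so this monotonicity hypothesis on $a$ should be made explicit rather than inferred from convexity.
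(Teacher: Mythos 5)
Your reduction to Proposition \ref{pb_convergence} is the right instinct, but you apply it to the wrong function and the argument never recovers. As you yourself concede, $x\mapsto\widetilde{Y}_t^x=\int_0^t(Z_t-Z_u)e^{-(t-u)x}\,du$ is neither nonnegative nor convex along a typical path, since $Z_t-Z_u$ changes sign; the paragraph beginning ``Granting that $f$ is nonnegative, convex, and integrable'' therefore assumes exactly the step that is missing, and the localisation/moment-bound fallback you sketch is not an argument. The fix is to push the quantized measure \emph{inside} the time integral rather than keeping it outside: by the same Fubini manipulation that produced (\ref{eb_volatility_process}),
\[
\int_0^\infty \widetilde{Y}_t^x\,\widetilde{\mu}^n(dx)=\int_0^t (Z_t-Z_u)\left(\int_0^\infty e^{-(t-u)x}\,\widetilde{\mu}^n(dx)\right)du,
\]
and for each fixed $s=t-u>0$ the map $x\mapsto e^{-sx}$ \emph{is} nonnegative, convex and $\widetilde{\mu}$-integrable, so Proposition \ref{pb_convergence} and the Remark following it give $\int_0^\infty e^{-sx}\,\widetilde{\mu}^n(dx)\uparrow\frac{\alpha+1}{\Gamma(-\alpha)}\,s^{-\alpha-2}$ for every $s>0$. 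Your H\"{o}lder estimate $|Z_t-Z_u|\le C(t-u)^\delta$ with $\delta>\alpha+1$ (Proposition \ref{pb_zholder}) then supplies precisely the dominating function $Cs^{\delta-\alpha-2}\in L^1(0,t)$ needed to pass to the limit in $u$, giving $\nu_t^n\to\nu_t$ $\P$-a.s., after which continuity of $a$ finishes. Note that this route yields convergence but not monotonicity of $n\mapsto\nu_t^n$ --- the sign changes of $Z_t-Z_u$ destroy it --- so your closing observation that $a$ would additionally need to be increasing, while fair, is secondary: the monotone arrow already cannot be obtained one step earlier by this argument, and the ``$\uparrow$'' is honestly earned only at the level of the quantized Laplace transforms. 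The integrability computation you do carry out ($|f(x)|\lesssim x^{-(\delta+1)}$ at infinity against the density proportional to $x^{\alpha+1}$) is correct and is the same estimate the dominated-convergence step needs, so the quantitative content of your write-up survives; what has to move is the structural use of convexity, from $\widetilde{Y}_t^{\cdot}$ to the exponential kernels.
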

\begin{remark}
	This is a direct result of Proposition \ref{pb_convergence} (see Theorem 3.2 in \cite{bauerle_desmettre_2019_portfolio_optimization_des_baulere} for further details).
\end{remark}

\noindent Using the above results, the stock price process (\ref{eb_stock_process}) with respect to the approximate volatility process (\ref{eb_volatility_approx}), has dynamics given by 
\begin{equation}\label{eb_stock_process_approx}
dS_{1,t} = S_{1,t}(r_t + \theta a(\nu_t^n))dt + S_{1,t} \sqrt{a(\nu_t^n)}dW_{1,t}, \quad S_{1,0}=s_{1,0}>0.
\end{equation}

\noindent Similarly, the wealth process (\ref{eb_wealth_process}) with respect to the approximate volatility process (\ref{eb_volatility_approx}), evolves according to 
\begin{equation}\label{eb_wealth_process_approx}
d\Pi^\pi_t = \Pi^\pi_t(r_t + \theta \pi_ta(\nu_t^n)) dt + \Pi^\pi_t\pi_t \sqrt{a(\nu_t^n)}  dW_{1,t}, \quad \Pi^\pi_0 = w_0>0.
\end{equation}

\noindent Thus, the stochastic optimal control problem (\ref{eoptimisation}) is cast into the classical Markovian framework, and is defined by
\begin{align}\label{eb_optimsation}
\J_t^* &\de \underset{\pi\in \A}{\sup}\,\E \left[\frac{(\Pi^\pi_T)^\gamma}{\gamma} \bigg|\F_t\right], \nonumber \\
& = \underset{\pi\in \A}{\sup}\,\E \left[\frac{(\Pi^\pi_T)^\gamma}{\gamma} \bigg|\Pi^\pi_t=w, \, \widetilde{Y}_t^{x_1^n}=\widetilde{y}_1,\dots, \widetilde{Y}_t^{x_n^n}=\widetilde{y}_n,Z_t=z \right]. 
\end{align}

\noindent The solution to the finite dimensional classical stochastic optimal control problem (\ref{eb_optimsation}) is obtained from the use of the HJB equation. In order to understand the conditions imposed on the solution, it is prudent to give a brief introduction to the HJB equation.

\section{Solving the finite dimensional optimisation problem}

To understand the derivation of the HJB equation, let us begin by first introducing the concept of the Dynamic Programming Principle (DPP), which is a fundamental principle in the theory of stochastic control. Let
$$J^\pi(t,w, \widetilde{y}_1, \dots, \widetilde{y}_n, z) = \E[U(\Pi^\pi_T)|\Pi^\pi_t=w, \, \widetilde{Y}_t^{x_1^n}=\widetilde{y}_1,\dots, \widetilde{Y}_t^{x_n^n}=\widetilde{y}_n, \,Z_t=z],$$
be defined as the gain function, where $U$ is given by (\ref{eutility}), and $\pi \in \A$. Let $\tau \in \TT_{t,T}$ be the set of stopping times valued in $[t,T]$. Then by iterated conditioning, we have
\begin{align*}
J^\pi&(t,w, \widetilde{y}_1, \dots, \widetilde{y}_n, z) \\
&=\E[J^\pi(\tau,\Pi^\pi_\tau, \, \widetilde{Y}_\tau^{x_1^n},\dots, \widetilde{Y}_\tau^{x_n^n},Z_\tau)|\Pi^\pi_t=w, \, \widetilde{Y}_t^{x_1^n}=\widetilde{y}_1,\dots, \widetilde{Y}_t^{x_n^n}=\widetilde{y}_n,Z_t=z],
\end{align*}
with terminal condition $J^\pi(T,w, \widetilde{y}_1, \dots, \widetilde{y}_n, z) = \frac{w^\gamma}{\gamma}, \, \fa w\in \R_+, \,  \widetilde{y}_i \in \R, \, z \in \R_+$. This result leads naturally to the following fundamental theorem.
\begin{theorem}[Dynamic Programming Principle (DPP) - finite horizon]\cite[Theorem 3.3.1]{textbook_pham}\label{tdpp} For any stopping time $\tau \in \TT_{t,T}$, and diffusion process  
	\begin{equation}\label{ep_diffusion}
	dX_t = b(t, \pi_t, X_t)dt + \sigma(t, \pi_t, X_t)dW_t, \quad X_0=x_0 \in \R^k,
	\end{equation}
	which has a strong unique solution $X_t^{x_0}\in \R^k,\, \fa t\in[0,T]$, for some $k\in \NN_+$ (meaning the coefficients $b(\cdot)$ and $\sigma(\cdot)$ in (\ref{ep_diffusion}) satisfy Lipschitz continuity and the linear growth condition (see \cite{textbook_karatzas_shreve_stochastic_calcuus} or \cite{textbook_pham} for further details)), we have
	\begin{align}
	G(t,x) &= \underset{\pi \in \A}{\sup} \; \underset{\tau \in \TT_{t,T}}{\sup} \, \E\left[G(\tau,X_\tau)|X_t = x\right], \label{ep_dpp_1}\\
	 &= \underset{\pi \in \A}{\sup} \; \underset{\tau \in \TT_{t,T}}{\inf} \, \E\left[G(\tau,X_\tau)|X_t = x\right], \label{ep_dpp_2}
	\end{align}
	where $x\in \R^k$.
\end{theorem}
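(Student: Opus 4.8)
Write $G(t,x)=\sup_{\pi\in\A}J^\pi(t,x)$ for the value function, where $J^\pi$ is the gain function introduced above. The plan is a sandwich argument. Since for every fixed $\pi$ one trivially has $\inf_{\tau}\le\sup_{\tau}$, the quantity in (\ref{ep_dpp_2}) never exceeds the quantity in (\ref{ep_dpp_1}). It therefore suffices to establish the two inequalities
$$G(t,x)\le \sup_{\pi\in\A}\inf_{\tau\in\TT_{t,T}}\E[G(\tau,X_\tau)\mid X_t=x],\qquad G(t,x)\ge \sup_{\pi\in\A}\sup_{\tau\in\TT_{t,T}}\E[G(\tau,X_\tau)\mid X_t=x],$$
because chaining these with the trivial inequality squeezes all three expressions together and forces the two claimed equalities simultaneously.

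For the first (``$\le$'') inequality I would fix $\pi\in\A$ and $\tau\in\TT_{t,T}$. Because $b$ and $\sigma$ are Lipschitz with linear growth, the controlled SDE (\ref{ep_diffusion}) has a unique strong solution and hence enjoys the flow (Markov) property; conditioned on $\F_\tau$, the restriction of $\pi$ to $[\tau,T]$ is an admissible control for the subproblem started at $(\tau,X_\tau)$, so its conditional expected utility cannot exceed the value there, i.e. $\E[U(\Pi^\pi_T)\mid\F_\tau]\le G(\tau,X_\tau)$ almost surely. Applying the tower property then gives
$$J^\pi(t,x)=\E\big[\,\E[U(\Pi^\pi_T)\mid\F_\tau]\mid X_t=x\big]\le \E[G(\tau,X_\tau)\mid X_t=x].$$
Taking the infimum over $\tau$ and then the supremum over $\pi$ yields the first inequality. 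This direction is routine, requiring only the definition of $G$ as a supremum and the flow/tower manipulation.

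The reverse (``$\ge$'') inequality is the substantive part and rests on a measurable selection and pasting argument. Fix $\pi\in\A$, $\tau\in\TT_{t,T}$ and $\varepsilon>0$. The idea is to construct a strategy $\pi^\varepsilon$ that coincides with $\pi$ on $[t,\tau]$ and, from $\tau$ onward, follows an $\varepsilon$-optimal continuation control chosen to depend measurably on the realised state $(\tau,X_\tau)$, so that $\E[U(\Pi^{\pi^\varepsilon}_T)\mid\F_\tau]\ge G(\tau,X_\tau)-\varepsilon$ almost surely. Securing such a measurable near-optimal selection, and verifying that the concatenation $\pi^\varepsilon$ remains in $\A$, is the heart of the matter; one invokes a measurable selection theorem together with the admissibility constraints of Definition \ref{dadmissible}. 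Granting this, the flow and tower properties give
$$G(t,x)\ge J^{\pi^\varepsilon}(t,x)=\E\big[\,\E[U(\Pi^{\pi^\varepsilon}_T)\mid\F_\tau]\mid X_t=x\big]\ge \E[G(\tau,X_\tau)\mid X_t=x]-\varepsilon,$$
and letting $\varepsilon\downarrow0$ and taking suprema over $\pi$ and $\tau$ delivers the second inequality.

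The main obstacle is precisely this pasting step: ensuring that gluing $\pi$ on $[t,\tau]$ to a state-dependent, measurably selected $\varepsilon$-optimal continuation produces an admissible strategy, i.e. one preserving the integrability and nonnegativity requirements of Definition \ref{dadmissible}. The strong-uniqueness hypothesis is exactly what makes the flow property available and legitimises every tower-property manipulation above; this is also why the non-Markovian Volterra Heston dynamics of Chapter \ref{nlmca} had to be replaced by the finite-dimensional Markovian state $(\Pi^\pi,\widetilde{Y}^{x_1^n},\dots,\widetilde{Y}^{x_n^n},Z)$ before this principle could be applied. For the full measurable-selection details I would follow Theorem 3.3.1 in \cite{textbook_pham}.
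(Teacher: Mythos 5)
Your sketch is the standard sandwich argument behind Theorem 3.3.1 of \cite{textbook_pham}, which is precisely the source the thesis cites for this statement without reproducing the proof: you correctly orient the two inequalities (the tower/flow-property direction bounding $G$ above by the $\sup$--$\inf$ expression, and the measurable-selection-and-pasting direction bounding it below by the $\sup$--$\sup$ expression) and correctly identify the pasting of an $\varepsilon$-optimal, measurably selected continuation as the only genuinely delicate step. This matches the argument the paper defers to, so nothing further is needed.
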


\begin{remark}
	From (\ref{ep_dpp_1})-(\ref{ep_dpp_2}) the DPP in the finite horizon case can be written more compactly as
	\begin{equation}\label{ep_dpp}
	G(t,x) = \underset{\pi \in \A}{\sup} \; \E_{t,x}\left[G(\tau,X_\tau)\right],
	\end{equation}	
	for any stopping time $\tau \in \TT_{t,T}$. 
\end{remark}

\noindent This result helps give context and meaning to the Hamilton-Jacobi-Bellman (HJB) equation, which is the infinitesimal version of the Dynamic Programming Principle (see Theorem \ref{tdpp}). Let us now derive formally the HJB equation, which is in the spirit of Section 3.4.1 in \cite{textbook_pham}. 

\subsection{Formal derivation of the HJB equation}\label{sub_hjb_derivation}

Consider the time $\tau=t+h$ and a constant control $\pi_u = p$, for some arbitrary $p$ in $\A$. Then, from (\ref{ep_dpp}) we have
\begin{equation}\label{ep_optimisation_arb}
G(t,x) \ge \E_{t,x}\left[G(t+h,X_{t+h})\right].
\end{equation}

\noindent By making the assumption that $G\in C^2$, we may apply It\^o's lemma between $t$ and $t+h$ 
$$ G({t+h},X_{t+h})=G(t,x) + \int_{t}^{t+h}\left(\frac{dG}{dt}+\LL^pG\right)(u,X_u^{t,x})du + \text{(local) martingale,}$$

\noindent where $\LL^p$ is the operator associated to the diffusion (\ref{ep_diffusion}) for constant control $p$ given by
\begin{equation}\label{ep_operator}
 \LL^pG = b(t,p,x).D_xG + \frac{1}{2}\tr(\sigma(t,p,x)\sigma'(t,p,x)D_x^2G),
\end{equation}
where $D_x$ and $D_x^2$ are the gradient and Hessian operators respectively. Substituting (\ref{ep_operator}) into (\ref{ep_optimisation_arb}), we obtain
$$0 \ge \E \left[\int_{t}^{t+h}\left(\frac{dG}{dt}+\LL^pG\right)(u,X_u^{t,x})du\right].$$
Dividing by $h$ and sending $h$ to $0$, yields
$$0 \ge \frac{dG}{dt}(t,x)+\LL^pG(t,x),$$
by the Mean-Value Theorem. As this holds for any $p \in A$, we obtain the following inequality
\begin{equation}\label{ep_inequality1}
-\frac{dG}{dt}(t,x)-\underset{p\in \A}{\sup}[\LL^pG(t,x)] \ge 0.
\end{equation}
Suppose that $\pi^*$ is an optimal control, and let $X^*$ denote the process with respect to the control $\pi^*$. Then, from (\ref{ep_dpp}) we have
\begin{equation}
G(t,x) = \E_{t,x}\left[G(t+h,X^*_{t+h})\right].
\end{equation}
By similar arguments as above, we then get
$$-\frac{dG}{dt}(t,x)-\LL^{\pi^*_t}G(t,x)=0,$$
which combined with (\ref{ep_inequality1}) suggests that $G$ should satisfy
\begin{equation}\label{ep_hjb}
-\frac{dG}{dt}(t,x)-\underset{p\in A}{\sup}[\LL^{p}G(t,x)]=0, \quad \fa (t,x) \in [0,T) \times \R^k.
\end{equation}
To relate this back to the optimisation problem (\ref{eb_optimsation}), we also impose the terminal condition associated to this PDE as 
\begin{equation}\label{ep_terminal}
G(T,x)=\frac{x_1^\gamma}{\gamma}, \quad \fa x\in \R^k
\end{equation}
where $x=(x_1, \dots x_k)$.

\subsection{Solution to the finite dimensional optimisation problem}

Using the results from the above section, let's now derive the corresponding HJB equation for the optimisation problem (\ref{eb_optimsation}). Let $(X)_{t\in [0,T]}$ in (\ref{ep_diffusion}), be defined by $X_t\de (\Pi^\pi_t, \, \widetilde{Y}_t^{x_1^n},\dots, \widetilde{Y}_t^{x_n^n},\, Z_t)$, where $\Pi^\pi_t$ is given by (\ref{eb_wealth_process_approx}), $\widetilde{Y}_t^{x_i^n}$ is given by (\ref{eb_y_sde}) and $Z_t$ is given by (\ref{eb_zdynamics}). Moreover, as the parameters $(\kappa, \phi, \sigma,\theta)$ are constant, and $r$ is deterministic, $X$ has a strong unique solution, thus satisfying the conditions of Theorem \ref{tdpp}. Then, the function $G \in C^2$ in (\ref{ep_hjb}) with respect to (\ref{eb_optimsation}) is given by $G(t,w,\widetilde{y}_1, \dots \widetilde{y}_n,z)$, with $t\in [0,T], w>0, \widetilde{y}_1>0 \dots \widetilde{y}_n>0, z>0$ and terminal condition (\ref{ep_terminal}) given by $G(T,w,\widetilde{y}_1, \dots \widetilde{y}_n,z) = \frac{w^\gamma}{\gamma}$. In order to ease notation set $\beta \de a \left(\nu_0 + z \frac{t^{-\alpha-1}}{\Gamma(-\alpha)}+\sum_{i=1}^{n}\widetilde{y}_i\widetilde{q}_i\right)$ and $h_i(t)\de\frac{1}{x_i}(1-e^{-tx_i})$. Thus, the HJB equation (\ref{ep_hjb}) reads
\begin{align}\label{eb_hjb_G}
0 = \underset{p \in \R}{\sup}\; &\bigg \lbrace G_t + G_ww(r_t + p\theta\beta) + \sum_{i=1}^{n}G_{\widetilde{y}_i}\left(h_i(t)\kappa(\phi-z)-x_i\widetilde{y}_i\right) + G_z\kappa(\phi-z) \nonumber \\
&+ \frac{1}{2}G_{ww}w^2p^2\beta + \frac{1}{2}G_{zz}\sigma^2z + \frac{1}{2}\sigma^2z\sum_{i=1}^{n}\sum_{j=1}^{n}G_{\widetilde{y}_i\widetilde{y}_j}h_i(t)h_j(t) \nonumber \\
&+ \sigma^2z\sum_{i=1}^{n}G_{\widetilde{y}_iz}h_i(t) + G_{wz}wp\sigma \rho \sqrt{z\beta} +\sum_{i=1}^{n}G_{w\widetilde{y}_i}\rho w p \sigma \sqrt{z\beta}h_i(t) \bigg \rbrace .
\end{align}
Now, recognising the results from Section \ref{soptimisationsolution}, namely (\ref{edistortion}), B\"auerle and Desmettre use the Ansatz 
\begin{equation}\label{eb_ansatz}
G(t,w,\widetilde{y}_1, \dots \widetilde{y}_n,z)=\frac{w^\gamma}{\gamma}f(t,\widetilde{y}_1, \dots \widetilde{y}_n,z),
\end{equation}
with $f(T,\widetilde{y}_1, \dots \widetilde{y}_n,z)=1, \fa \widetilde{y}_i\in \R, z\in \R_+$. By then substituting (\ref{eb_ansatz}) into (\ref{eb_hjb_G}) the following HJB equation is obtained
\begin{align}\label{eb_hjb}
0 = \underset{p \in \R}{\sup}\; &\bigg \lbrace f_t + f(r_t + p\theta\beta)\gamma + \sum_{i=1}^{n}f_{\widetilde{y}_i}\left(h_i(t)\kappa(\phi-z)-x_i\widetilde{y}_i\right) + f_z\kappa(\phi-z) \nonumber\\
&+ \frac{1}{2}(\gamma-1)\gamma p^2\beta f + \frac{1}{2}f_{zz}\sigma^2z \nonumber + \frac{1}{2}\sigma^2z\sum_{i=1}^{n}\sum_{j=1}^{n}f_{\widetilde{y}_i\widetilde{y}_j}h_i(t)h_j(t) \nonumber \\
&+ \sigma^2z\sum_{i=1}^{n}f_{\widetilde{y}_iz}h_i(t) + f_{z}\gamma p\sigma \rho \sqrt{z\beta} +\sum_{i=1}^{n}f_{\widetilde{y}_i}\gamma \rho w p \sigma \sqrt{z\beta}h_i(t) \bigg \rbrace .
\end{align}
Maximising (\ref{eb_hjb}) in $p$, \bd obtain the following optimal trading strategy $\pi^*$ given by

\begin{equation}\label{eb_opt_rho}
\pi_t^* = \frac{\theta}{1-\gamma} + \frac{\sigma \rho}{1-\gamma}\sqrt{\frac{z}{\beta}}\left(\frac{f_z+\sum_{i=1}^{n}f_{\widetilde{y}_i}h_i(t)}{f}\right)
\end{equation}

\noindent Substituting the maximum point (\ref{eb_opt_rho}) into (\ref{eb_hjb}), yields 
\begin{align}\label{eb_hjb_0}
0 &= f_t + f\left(r_t + \frac{\theta^2\beta}{2(1-\gamma)}\right)\gamma + \sum_{i=1}^{n}f_{\widetilde{y}_i}\left(h_i(t)\kappa(\phi-z)-x_i\widetilde{y}_i\right) + f_z\kappa(\phi-z) \nonumber\\
&+ \frac{1}{2}f_{zz}\sigma^2z + \frac{1}{2}\sigma^2z\sum_{i=1}^{n}\sum_{j=1}^{n}f_{\widetilde{y}_i\widetilde{y}_j}h_i(t)h_j(t)+ \sigma^2z\sum_{i=1}^{n}f_{\widetilde{y}_iz}h_i(t) \nonumber \\
&+\frac{z\gamma \sigma^2 \rho^2}{2(1-\gamma)f}\left(f_z + \sum_{i=1}^{n}f_{\widetilde{y}_i}h_i(t)\right)^2 + \frac{\sigma \rho\theta \gamma}{1-\gamma}\sqrt{z\beta}\left(f_z + \sum_{i=1}^{n}f_{\widetilde{y}_i}h_i(t)\right) \bigg \rbrace .
\end{align}

\noindent Unfortunately (\ref{eb_hjb_0}) is a rather involved partial differential equation (PDE) and has to be solved numerically. However, we can reduce the complexity by utilising the distortion power (\ref{edistortionpower}), which results in cancelling $f^2_z$, $f_{\widetilde{y}_1}f_z$ and $f_{\widetilde{y}_i}^2$ terms by defining the function 
\begin{equation}\label{eb_distortiontransform}
g(t, \widetilde{y_1},\dots, \widetilde{y}_n,z)=f(t, \widetilde{y_1},\dots, \widetilde{y}_n,z)^{1/\delta},
\end{equation}
where $\delta$ is given by (\ref{edistortionpower}). Therefore, substituting (\ref{eb_distortiontransform}) into (\ref{eb_hjb_0}), and dividing by $g^{\delta-1}$, we get

\begin{align}\label{eb_hjb_simplified}
0 &= \delta g_t + g\left(r_t + \frac{\theta^2\beta}{2(1-\gamma)}\right)\gamma +  \delta\sum_{i=1}^{n}g_{\widetilde{y}_i}\left(h_i(t)\kappa(\phi-z)-x_i\widetilde{y}_i\right) +  g_z\delta\kappa(\phi-z) \nonumber\\
&+ \frac{1}{2}g_{zz}\delta\sigma^2z + \frac{1}{2}\delta\sigma^2z\sum_{i=1}^{n}\sum_{j=1}^{n}g_{\widetilde{y}_i\widetilde{y}_j}h_i(t)h_j(t)+ \delta\sigma^2z\sum_{i=1}^{n}g_{\widetilde{y}_iz}h_i(t) \nonumber \\
&+ \frac{\delta\sigma \rho\theta \gamma}{1-\gamma}\sqrt{z\beta}\left(g_z + \sum_{i=1}^{n}g_{\widetilde{y}_i}h_i(t)\right) \bigg \rbrace .
\end{align}

\noindent Let us now simplify $(\ref{eb_hjb_0})$ by taking the case $\rho=0 \implies \delta=1$. From this case we can see that it admits a solution which is given by the Feynman-Kac representation (see \cite{textbook_pham}). Recognising this, \bd obtain the following representation for the function $f$ given by the following theorem.

\begin{theorem}\cite[Theorem 5.4]{bauerle_desmettre_2019_portfolio_optimization_des_baulere} Suppose $\rho=0$, and a classical solution $f$ of the partial differential equation (\ref{eb_hjb_0}) with boundary condition $f(T,\widetilde{y}_1, \dots \widetilde{y}_n,z)=1$ exists for $t\in [0,T], w>0, \widetilde{y}_i>0,z>0$. Then it can be written as
	\begin{equation}
	f(t,\widetilde{y}_1,\dots, \widetilde{y}_n,z)=\E_{t, \widetilde{y}_1, \dots \widetilde{y}_n,z}\left[\exp \left\lbrace\int_{0}^{T}\left(\gamma r_t + \frac{\gamma \theta^2}{2(1-\gamma)}a(\nu_t)\right)dt \right\rbrace  \right]
	\end{equation}	
\end{theorem}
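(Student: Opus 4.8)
The plan is to specialise to the decoupled case $\rho=0$, reduce the nonlinear HJB equation (\ref{eb_hjb_0}) to a linear parabolic equation, and then recognise it as a Feynman--Kac equation for the factor diffusion. Setting $\rho=0$ in the distortion power (\ref{edistortionpower}) gives $\delta=1$, so the two terms of (\ref{eb_hjb_0}) carrying the factors $\rho^2$ and $\rho$ both vanish. What remains is the linear equation
\begin{align*}
0 &= f_t + \gamma\left(r_t + \frac{\theta^2\beta}{2(1-\gamma)}\right)f + \sum_{i=1}^{n}f_{\widetilde{y}_i}\left(h_i(t)\kappa(\phi-z)-x_i\widetilde{y}_i\right) + f_z\kappa(\phi-z)\\
&\quad + \frac{1}{2}f_{zz}\sigma^2 z + \frac{1}{2}\sigma^2 z\sum_{i=1}^{n}\sum_{j=1}^{n}f_{\widetilde{y}_i\widetilde{y}_j}h_i(t)h_j(t) + \sigma^2 z\sum_{i=1}^{n}f_{\widetilde{y}_i z}h_i(t),
\end{align*}
with terminal condition $f(T,\widetilde{y}_1,\dots,\widetilde{y}_n,z)=1$.

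Next I would identify the first- and second-order part of this equation with the infinitesimal generator $\LL_t$ of the factor diffusion $X_t=(\widetilde{Y}_t^{x_1^n},\dots,\widetilde{Y}_t^{x_n^n},Z_t)$. When $\rho=0$ the driving noise is $B=W_2$, so every component shares the single increment $\sigma\sqrt{Z_t}\,dB_t$ modulated by $h_i(t)$ for $\widetilde{Y}^{x_i^n}$ and by $1$ for $Z$; from (\ref{eb_y_sde}) and (\ref{eb_zdynamics}) the cross-variations are $d\langle Z\rangle_t=\sigma^2 Z_t\,dt$, $d\langle \widetilde{Y}^{x_i^n},\widetilde{Y}^{x_j^n}\rangle_t=h_i(t)h_j(t)\sigma^2 Z_t\,dt$ and $d\langle \widetilde{Y}^{x_i^n},Z\rangle_t=h_i(t)\sigma^2 Z_t\,dt$. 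These reproduce exactly the drift and diffusion coefficients above, so the equation reads $f_t+\LL_t f + c(t,x)f=0$ with potential $c(t,x)=\gamma r_t+\frac{\gamma\theta^2}{2(1-\gamma)}\beta$, and $\beta$ is nothing but $a(\nu_t^n)$ written in the state coordinates via (\ref{eb_volatility_approx}).

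With this identification the result is the Feynman--Kac theorem. Concretely, I would fix the starting data $(t,\widetilde{y}_1,\dots,\widetilde{y}_n,z)$ and apply It\^o's lemma to the process $s\mapsto \exp\{\int_t^s c(u,X_u)\,du\}\,f(s,X_s)$ on $[t,T]$; since $f$ solves the PDE, the finite-variation part is $\exp\{\int_t^s c\,du\}[f_s+\LL_s f + c f](s,X_s)\,ds \equiv 0$, leaving only a stochastic-integral local martingale. Taking expectations and invoking $f(T,\cdot)=1$ then yields
$$f(t,\widetilde{y}_1,\dots,\widetilde{y}_n,z)=\E_{t,\widetilde{y}_1,\dots,\widetilde{y}_n,z}\left[\exp\left\{\int_{t}^{T}\left(\gamma r_u+\frac{\gamma\theta^2}{2(1-\gamma)}a(\nu_u^n)\right)du\right\}\right],$$
which is the claimed representation (with the integration window starting at $t$, matching the conditioning time).

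The main obstacle is passing from the local martingale to a true martingale and ensuring the right-hand side is finite. For the former I would invoke the polynomial moment bounds for the factor process from Proposition \ref{plemma31}, together with Remark \ref{r32} for the time-dependent coefficients, to control $\sup_{u\le T}\E[|Z_u|^p]$ and hence the $\sqrt{Z_u}$-driven stochastic integrals, combined with the classical-solution hypothesis of the theorem, which supplies the regularity and growth of $f$ needed to localise and pass to the limit. Finiteness of the candidate expectation follows from the exponential-moment estimate of Theorem \ref{thanwong25} applied to $\int_t^T a(\nu_u^n)\,du$, once $a$ composed with the affine representation (\ref{eb_volatility_approx}) is dominated by a linear-growth bound in the factor, so that the integrability condition of the Feynman--Kac representation is met.
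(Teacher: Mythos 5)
Your proposal is correct and follows essentially the same route the thesis takes (and attributes to B\"auerle and Desmettre): setting $\rho=0$ forces $\delta=1$ and kills the nonlinear terms in (\ref{eb_hjb_0}), and the resulting linear parabolic equation is recognised as the Feynman--Kac equation for the factor diffusion $(\widetilde{Y}^{x_1^n},\dots,\widetilde{Y}^{x_n^n},Z)$ with potential $\gamma r_t+\frac{\gamma\theta^2}{2(1-\gamma)}a(\nu_t^n)$. Two small notes: your integration window $\int_t^T$ is the correct Feynman--Kac form (the $\int_0^T$ in the stated theorem is a slip), and your appeal to Theorem \ref{thanwong25} for the exponential-moment bound is slightly off-target, since that result concerns the Volterra Heston variance of chapter 3 rather than the CIR factor $Z$ of this chapter --- the needed integrability should instead come from the corresponding exponential-moment estimate for $Z$ or from growth assumptions on $a$.
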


\noindent From this result \bd then obtain the following solution for the optimisation problem (\ref{eoptimisation}) by taking the limit $n\to \infty$ for the case $\rho=0$.

\begin{theorem}\cite[Theorem 5.4]{bauerle_desmettre_2019_portfolio_optimization_des_baulere} Suppose a solution $f$ of the partial differential equation (\ref{eb_hjb_0}) in the case $\rho=0$ with boundary condition $f(T,\widetilde{y}_1, \dots \widetilde{y}_n,z)=1$ exists. The portfolio optimisation problem (\ref{eoptimisation}) with respect to the volatility process (\ref{eb_rough_volatility}), with $\alpha \in (-1,-1/2)$, has an optimal investment strategy given by $\pi_t^*=\frac{\theta}{1-\gamma}$, with value function
\begin{equation}\label{eb_optimisation_solution}
J_0^{\pi^*}(w_0,\nu_0, z_0)=\frac{w_0^\gamma}{\gamma}\underset{n\to \infty}{\lim}\E_{0, \widetilde{y}_1, \dots \widetilde{y}_n,z_0}\left[\exp \left\lbrace\int_{0}^{T}\left(\gamma r_t + \frac{\gamma \theta^2}{2(1-\gamma)}a(\nu_t)\right)dt \right\rbrace  \right]
\end{equation}	
\end{theorem}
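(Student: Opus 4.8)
The plan is to solve the finite dimensional problem (\ref{eb_optimsation}) exactly for each fixed $n$ with the HJB and verification machinery already assembled, and then pass to the limit $n\to\infty$ using the monotone convergence supplied by Proposition \ref{pb_convergence}. The decisive simplification is that in the case $\rho=0$ the distortion power (\ref{edistortionpower}) satisfies $\delta=1$, so the state dependent correction in the candidate maximiser (\ref{eb_opt_rho}) is multiplied by $\sigma\rho=0$ and drops out; the pointwise supremum in the HJB equation (\ref{eb_hjb_0}) is therefore attained at the deterministic constant $\pi^{*}_t=\theta/(1-\gamma)$, independently of $n$ and of the state. First I would record this together with the preceding Feynman--Kac representation of the reduced value function $f$ for the $n$th approximate problem, namely
\[ f(0,\widetilde{y}_1,\dots,\widetilde{y}_n,z_0)=\E_{0,\widetilde{y}_1,\dots,\widetilde{y}_n,z_0}\left[\exp\left\lbrace\int_{0}^{T}\left(\gamma r_t+\frac{\gamma\theta^2}{2(1-\gamma)}a(\nu_t^n)\right)dt\right\rbrace\right], \]
where $\nu^n$ is the approximate volatility (\ref{eb_volatility_approx}).

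Second, I would carry out the verification. As $(\kappa,\phi,\sigma,\theta)$ are constant and $r$ deterministic, the diffusion $X=(\Pi^\pi,\widetilde{Y}^{x_1^n},\dots,\widetilde{Y}^{x_n^n},Z)$ has a strong unique solution and meets the hypotheses of Theorem \ref{tdpp}; the classical verification argument then certifies that $\pi^{*}=\theta/(1-\gamma)$ is admissible and optimal for the $n$th approximate problem, with optimal value $\tfrac{w_0^\gamma}{\gamma}f(0,\dots)$. Because this optimiser is the same constant for every $n$, it is the natural candidate for the original problem and is plainly admissible in the sense of Definition \ref{dadmissible}, being bounded and $\FF$-adapted. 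I would then evaluate its performance in the true model directly: with $\rho=0$ we have $B=W_2$ in (\ref{eBt}), so $a(\nu)$ is driven solely by $W_2$ and is independent of $W_1$. Conditioning the terminal wealth of (\ref{eb_wealth_process}) on the volatility path and evaluating the resulting conditional log-normal expectation, the drift and conditional-variance contributions of $a(\nu_t)$ combine, at $\pi^*=\theta/(1-\gamma)$, into the single coefficient $\gamma\theta^2/(2(1-\gamma))$, giving
\[ \E\left[\frac{(\Pi^{\pi^*}_T)^\gamma}{\gamma}\right]=\frac{w_0^\gamma}{\gamma}\E\left[\exp\left\lbrace\int_{0}^{T}\left(\gamma r_t+\frac{\gamma\theta^2}{2(1-\gamma)}a(\nu_t)\right)dt\right\rbrace\right]. \]

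Third, for the passage to the limit I would use that, under Assumption \ref{ab_partition}, $a(\nu_t^n)\uparrow a(\nu_t)$ monotonically and $\P$-a.s.\ as $n\to\infty$, which is the consequence of Proposition \ref{pb_convergence} for the nonnegative convex barycentric quantisation of the underlying measure. Since $0<\gamma<1$ forces $\gamma\theta^2/(2(1-\gamma))>0$, the integrands are nonnegative and increasing in $n$, so the exponentials increase monotonically; the monotone convergence theorem then yields
\[ \lim_{n\to\infty}f(0,\dots)=\E\left[\exp\left\lbrace\int_{0}^{T}\left(\gamma r_t+\frac{\gamma\theta^2}{2(1-\gamma)}a(\nu_t)\right)dt\right\rbrace\right]. \]
This identifies the directly computed value of $\pi^*$ with the limit of the finite dimensional optimal values and produces the stated representation (\ref{eb_optimisation_solution}).

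The main obstacle is the upper bound: showing that no admissible strategy for the genuinely non-Markovian problem can outperform $\pi^*$, since each approximate problem optimises only against the smaller volatility $a(\nu^n)\le a(\nu)$ and the supremum must be interchanged with the limit. I would close this by a direct martingale optimality argument specialised to $\rho=0$. Setting $M_t=\E[\exp\lbrace\int_{t}^{T}(\gamma r_u+\frac{\gamma\theta^2}{2(1-\gamma)}a(\nu_u))du\rbrace\,|\,\F_t]$ and $J_t^\pi=\tfrac{(\Pi^\pi_t)^\gamma}{\gamma}M_t$, one has $J_T^\pi=U(\Pi^\pi_T)$ and $J_0^\pi$ independent of $\pi$. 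Applying It\^{o}'s formula and using that, for $\rho=0$, the martingale part of $M$ lives in $W_2$ while that of $(\Pi^\pi)^\gamma$ lives in $W_1$ (so their covariation vanishes), the drift of $J^\pi$ equals $J_t^\pi\,a(\nu_t)$ times a quadratic in $\pi_t$ with negative leading coefficient whose maximum, attained at $\pi^*=\theta/(1-\gamma)$, is zero. Hence $J^\pi$ is a local supermartingale for every $\pi\in\A$ and a local martingale at $\pi^*$, exactly the pattern of Definition \ref{dmartingaleoptimalityp}. The integrability needed to upgrade these local statements to true (super)martingale inequalities, and thus to conclude $\E[(\Pi^\pi_T)^\gamma/\gamma]\le J_0^{\pi^*}$, is provided by the admissibility bound $\E[(\Pi^\pi_T)^{p\gamma}]<\infty$ of Definition \ref{dadmissible} together with the finiteness of the exponential moment established above. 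Combining this upper bound with the lower bound attained by $\pi^*$ gives optimality and the value function (\ref{eb_optimisation_solution}).
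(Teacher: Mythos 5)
Your proposal is correct in substance, but it takes a noticeably more complete route than the paper does. The thesis (following B\"auerle--Desmettre) proves this result in two steps only: the Feynman--Kac representation of $f$ for the $n$-dimensional approximate problem (the preceding theorem), followed by the monotone limit $a(\nu^n_t)\uparrow a(\nu_t)$ and monotone convergence to identify $\lim_n f(0,\dots)$; the fact that $\pi^*=\theta/(1-\gamma)$ is read off from (\ref{eb_opt_rho}) at $\rho=0$. You reproduce exactly those two steps, but you correctly identify that they only give a \emph{lower} bound for the true non-Markovian problem (each approximate problem is a different control problem, and the interchange of $\sup_\pi$ with $\lim_n$ is not automatic), and you close the gap with a direct martingale-optimality argument on the original model: with $\rho=0$ the process $M$ is driven by $W_2$ alone while $(\Pi^\pi)^\gamma$ is driven by $W_1$, the covariation vanishes, and the drift of $J^\pi$ is $J^\pi_t\,a(\nu_t)$ times a concave quadratic in $\pi_t$ maximised to zero at $\theta/(1-\gamma)$. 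This is the genuinely new ingredient relative to the paper, and it buys you a self-contained proof in which the finite-dimensional approximation is needed only to \emph{identify} the limiting expression, not to establish optimality; your conditional log-normal computation for $\E[U(\Pi^{\pi^*}_T)]$ is also verified to produce exactly the coefficient $\gamma\theta^2/(2(1-\gamma))$. Two small refinements: since $J^\pi\ge 0$, the supermartingale upgrade is free by Fatou (no appeal to the admissibility moment bound is needed there), and the martingale property of $J^{\pi^*}$ is not needed at all once you have the direct computation of the value of $\pi^*$. The one soft spot is that the finiteness of $\E\bigl[\exp\bigl\{\int_0^T(\gamma r_t+\tfrac{\gamma\theta^2}{2(1-\gamma)}a(\nu_t))dt\bigr\}\bigr]$ is asserted (``established above'') rather than proved; it does not follow from the stated hypothesis that a classical solution $f$ exists for each $n$, and for a general convex $a$ it can fail, so it should be recorded as an explicit assumption or derived from a moment condition on $Z$ as in the source.
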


Therefore, in the limiting case for $\rho=0\implies \delta=1$, we can indeed see that the finite dimensional approximation approach of the optimisation problem (\ref{eoptimisation}), converges to a solution of the same form as (\ref{edistortion}) and (\ref{ehan_j0}), as expected. 

\chapter{Conclusion}\label{ccl}

In this thesis we examined the Merton portfolio optimisation problem under two different rough Heston models, where the construction of the volatility process played a crucial role in the solution. Moreover, to accommodate the use of stochastic Volterra equations in modelling the financial market dynamics, we permitted the use of a general filtration in the financial market model. Interestingly, by considering such a model, we were able to successfully solve the optimisation problem through the use of the martingale optimality principle. We also showed that historical works developed under the classical framework played an important role in helping to define the Ansatz for the martingale optimality principle. The striking similarities between Markovian and non-Markovian optimisation problems opens up future research considerations for optimisation problems under more general utility functions. Unfortunately, solving the optimisation problem through the use of the martingale optimality principle also has a number of drawbacks, as increasing model complexity under this approach is not easily achieved, which is evident from the dependency on parameter definitions and assumptions in obtaining the solution, consequently impacting model flexibility. To accommodate additional model complexities, such as drift uncertainty and multi-factor models, we also considered an alternative approach which involved a finite dimensional approximation of the rough volatility process. By so doing, the optimisation problem was able to be cast into the classical optimisation framework. This approach opens up opportunities to consider advanced model dynamics coupled with multi-factor rough volatility processes with much more ease, with the only significant drawback being potential computational inefficiencies. With this being said, the approaches discussed in this thesis, combined with the historical works on the distortion transformation, provide a strong foundation to build models capable of handling increasing complexity demanded by the ever growing financial market.



\clearpage
\addcontentsline{toc}{chapter}{References}
\bibliographystyle{unswthesis}

\end{document}